\documentclass[11pt]{article}
\usepackage{amsmath, amssymb, amsthm}
\usepackage{graphicx,psfrag}
\usepackage{mathtools}
\usepackage{bm}
\usepackage{bbm}
\usepackage{enumerate}
\usepackage{subcaption}
\usepackage{natbib}
\usepackage[mathscr]{euscript}
\usepackage{url} 
\usepackage[bottom]{footmisc}
\usepackage{booktabs}
\usepackage{multirow}
\usepackage{color}

\usepackage[hypertexnames=false]{hyperref}
\hypersetup{
colorlinks = true,
allcolors = blue
}
\usepackage{caption}
\usepackage{adjustbox}
\usepackage[normalem]{ulem}
\usepackage[a4paper, 
            left=1.1in,right=1.1in,top=1in,bottom=1.2in,%
            footskip=.7in]{geometry}

\usepackage[toc,page,header]{appendix}
\usepackage{minitoc}   
\usepackage{xcolor}
\usepackage[linesnumbered,ruled,vlined]{algorithm2e}

\usepackage{cleveref}
\usepackage{wrapfig}

\makeatletter
\newcommand{\algorithmfootnote}[2][\footnotesize]{%
  \let\old@algocf@finish\@algocf@finish% Store algorithm finish macro
  \def\@algocf@finish{\old@algocf@finish% Update finish macro to insert "footnote"
    \leavevmode\rlap{\begin{minipage}{\linewidth}
    #1#2
    \end{minipage}}%
  }%
}
\makeatother

\newcommand{\bbR}{\mathbb{R}} % real numbers
\newcommand{\bbN}{\mathbb{N}} % natural numbers
\newcommand{\bbP}{\mathbb{P}} % probability measure
\newcommand{\bbS}{\mathbb{S}} % permutations

\def\ind{\mathbbm{1}} % indicator function

\def\CI{\mathrm{CI}}

\DeclareMathOperator\diag{diag}

\newcommand{\cG}{\mathcal{G}}
\newcommand{\cE}{\mathcal{E}}
\newcommand{\cL}{\mathcal{L}}

\def\Pa{\mathrm{Pa}} % parent set
\def\sX{\mathsf{X}}

\newcommand{\se}{\mathsf{e}}

\newcommand{\cA}{\mathcal{A}}
\newcommand{\cB}{\mathcal{B}}
\newcommand{\cC}{\mathcal{C}}

\newcommand{\cS}{\mathcal{S}}

\newcommand{\cM}{\mathcal{M}}

\newcommand{\cN}{\mathcal{N}} % neighborhood

\newcommand{\proj}{\Phi}

\SetKwInput{KwInput}{Input}
\SetKwInput{KwOutput}{Output}

\theoremstyle{remark}
\newtheorem{remark}{Remark}

\newtheorem{proposition}{Proposition}
\newtheorem{theorem}{Theorem}
\newtheorem{corollary}{Corollary}
\newtheorem{lemma}{Lemma}
\newtheorem{definition}{Definition}
\newtheorem{example}{Example}

\title{Leveraging heterogeneity for identifiability: Bayesian order-based learning of multiple DAGs} 
\usepackage{authblk}
\author{Hyunwoong Chang\thanks{hwchang@utdallas.edu}\,\, and Fariha Taskin}
\date{}
\affil{Mathematical Sciences, University of Texas at Dallas }

\begin{document}
\maketitle

\begin{abstract}
We propose a joint order-based scoring framework for causal structure learning of directed acyclic graph (DAG) models under heterogeneous data settings. We show that leveraging heterogeneity improves the accuracy of causal ordering estimation. In the most favorable case, the causal ordering is identifiable up to two permutations. 
Building on this framework, we propose an order-based Bayesian method for Gaussian DAG models and establish its theoretical properties in the high-dimensional regime. 
For posterior inference over the space of orderings, we introduce a random-to-random (R2R) proposal neighborhood for the Metropolis–Hastings algorithm, which is theoretically motivated and exhibits efficient mixing behavior. 
Simulation studies confirm the strong empirical performance of the proposed method, and an application to single-nucleus RNA sequencing data from major depressive disorder demonstrates practical utility.
\end{abstract}
\noindent
\textit{Keywords: Bayesian network; Causal identifiability; Graphical models; High-dimensional statistics; Markov chain Monte Carlo; Multiple DAG estimation}
\small

\section{Introduction}\label{sec-intro}

We consider the joint estimation of multiple directed acyclic graph (DAG) models from observational data collected from heterogeneous sources.
Specifically, the data consist of $K$ datasets, where each dataset $X^{(k)}$ is generated by a structural causal model associated with a DAG $G^{(k)}$, all defined on the same set of variables.
For example, gene expression measurements across individuals or cell types are known to exhibit substantial heterogeneity, so that treating observations as independently and identically distributed can result in biased inference.
Most existing structure learning methods, however, are designed to estimate a single DAG, implicitly assuming that observational data are generated from a single homogeneous system. While such methods can be applied separately to each data source, this approach fails to leverage potential structural similarities across sources.

From purely observational data, the underlying causal DAG is generally not identifiable without additional assumptions~\citep{spirtes2000causation}.
This is due to the existence of Markov equivalence classes, in which multiple distinct DAGs may encode the same set of conditional independence relations. 
Many existing methods relax the goal to identifying the Markov equivalence class rather than the exact causal DAG.
However, ambiguity in edge directions can lead to fundamentally different causal interpretations~\citep{heinze2018causal}. 
To overcome this limitation, the literature has primarily pursued two approaches for identifiability. 
One approach is to impose additional distributional or functional assumptions that guarantee causal identifiability, such as equal error variances, non-linear structural equations with additive noise, and additive models~\citep{Drton2017-oa}.
Although it is practical when randomized controlled experiments are infeasible, these conclusions can be spurious if the assumptions fail, and such assumptions are generally not testable. 
The other approach assumes access to experimental data, where certain variables are actively perturbed under various intervention schemes~\citep{korb2004varieties}, providing information to resolve ambiguity of edge directions~\citep{eberhardt2007causation}. 
They have studied model properties under several scenarios including static data cases and active learning~\citep{He2008-zs, Hauser2011-jg}. 
However, interventional data are often costly or infeasible to obtain in many applications.

We propose a new perspective on causal identifiability: heterogeneity in observational data can enhance identifiability. Under the assumption that datasets share a common causal variable ordering, our framework adopts an order-based approach that searches over the space of orderings rather than directly over the space of DAGs. 
Order-based methods have been popular in single DAG estimation~\citep{friedman2003being, agrawal2018minimal}, since  the acyclicity constraint guarantees that every DAG admits at least one consistent ordering, and that the main challenge lies in uncertainty over the ordering of variables. 
To study identifiability in this setting, we extend the notion of score equivalence to the order-based framework, which we call score-equivalent order-based scores. 
Here, score equivalence means that DAGs in the same Markov equivalence class receive the same score, a property widely regarded as natural in score-based structure learning \citep{andersson1997characterization, chickering2002learning}. 
Importantly, existing order-based scoring methods for single DAG estimation satisfy this property~\citep{van2013ell, raskutti2018learning, solus2017consistency}. 
We define the joint order-based score by aggregating score-equivalent order-based scores computed from individual datasets.
Under this joint score, we show that heterogeneity across datasets penalizes inconsistent orderings, thereby improving identifiability.
The underlying mechanism is that incorporating additional graphs expands the union of essential arrows. Here, essential arrows refer to edges whose orientations are invariant within a Markov equivalence class \citep{andersson1997characterization}. By aggregating individual scores, the joint order-based score assigns lower values to orderings inconsistent with at least one essential arrow across datasets, thereby discouraging such orderings. 
In the best-case scenario, 
we can identify two candidate orderings, one of which is guaranteed to be the true causal ordering. We characterize when this guarantee holds.
To our knowledge, this perspective has not been explicitly studied in prior work.
Interestingly, this result contrasts with existing approaches, which typically encourage shared structures across different data sources or perform well under structural similarity assumptions \citep{Wang2020-of,lee2022bayesian}.

To leverage structural similarities across data sources, a large body of existing work assumes that all datasets share a common causal ordering, with a few exceptions \citep{oates2016exact}.
Although the underlying graph structure may vary across datasets, the direction of causal influence is often preserved. 
This assumption is particularly plausible in settings where interventions or environmental changes are unlikely to reverse causal directionality, such as time-ordered or hierarchical systems (Remark~\ref{remark:common}). 

Relative to multiple undirected graph estimation, multiple DAG estimation has received comparatively less attention, perhaps due to its inherent structural and computational complexity. Depending on the formulation, the literature on multiple DAG learning includes joint skeleton estimation \citep{liu2019joint}, essential graph estimation \citep{castelletti2020bayesian}, and multiple DAG estimation under a given ordering \citep{lee2022bayesian}. 
\citet{Wang2020-of} establish high-dimensional consistency for a two-step procedure that first learns a union graph via GES~\citep{chickering2002optimal} and then estimates individual graph edges.
\citet{Li2024-af} adopt an order-based MCMC framework, learning a shared ordering via multi-task variable selection along the Markov chain, though their computational strategy is different from ours. 

We develop a Bayesian model that inherits the theoretical properties of the proposed framework. 
Under a Gaussian DAG model, we employ an empirical prior and construct the marginal posterior distribution over the ordering space to enable uncertainty quantification for causal orderings. 
We show that our Bayesian formulation retains the theoretical properties of the proposed framework in the high-dimensional regime. Specifically, the Bayes factor between admissible and inadmissible orderings diverges to infinity, showing strong asymptotic separation.
To approximate the posterior distribution, we employ a Metropolis-Hastings (MH) algorithm over the joint space of causal orderings and individual DAGs. 
This computation is challenging due to the factorial growth of the ordering space in the number of variables. Although various MH schemes have been developed, mixing behavior depends critically on the choice of the proposal neighborhood~\citep{chang2026dimension}. 
To ensure effective mixing, the proposal neighborhood must be large enough to avoid trapping in local modes, yet not so large as to significantly reduce sampling efficiency. 

We adopt a random-to-random (R2R) proposal neighborhood. 
We prove that it contains local moves that emulate the Chickering sequence \citep{chickering2002optimal}. Consequently, there exists a sequence of moves along which the individual score increases monotonically.
In simulations, we evaluate the performance of the proposed algorithm across varying numbers of data sources and demonstrate its performance as the level of heterogeneity changes (Section~\ref{subsec:performance}). 
We further show that the Markov chain induced by the R2R proposal neighborhood exhibits strong mixing behavior (Section~\ref{subsec:conv}). In comparison with competing approaches, our method achieves superior performance (Section~\ref{subsec:comparison}).
We also conduct a sensitivity analysis of the proposed method under violations of the common ordering assumption (Section~\ref{subsec:common_ordering}).
Finally, we analyze a major depressive disorder case–control dataset generated by publicly available single-nucleus RNA-seq profiling of post-mortem brain samples~\citep{nagy2020single}.

\section{Leveraging heterogeneous data sources under order-based scoring framework}\label{sec-main}

\subsection{Preliminary}\label{subsec-prelim}

Let $G = (V, E)$ denote a DAG, where $V$ is a node set and $E \subset V \times V$ is a set of directed edges that form no cycle. For a $p$-node DAG $G$, we label $V = [p] := \{1, \dots, p\}$, and write $(i,j) \in E$ to denote a directed edge from $i$ to $j$. For convenience, we use the shorthand $i \to j \in G$ in place of $(i,j) \in E$. We use $|\cdot|$ to denote the cardinality of a set, and let $|G| = |E|$ denote the number of edges in $G$. For a DAG $G$ and a node $j$, we denote by $\Pa_j(G)$ the parent set of $j$. 
Let $\sX = (\sX_1, \dots, \sX_p)$ be a random vector, and denote by $\CI(\sX)$ the set of conditional independence relations that hold in the distribution of $\sX$. 
Similarly, let $\CI(G)$ denote the set of conditional independence relations that a DAG $G$ entails by d-separation rule. 
A structural causal model (SCM)~\citep{pearl2009causality}  is a framework in which a random vector $\sX$ is causally associated with a DAG $G$. Specifically, each variable is generated according to a  structural equation 
\begin{align}\label{eq:scm}
    \sX_j = f_j(\sX_{\Pa_j(G)}, \varepsilon_j), \text{ for  } j \in [p],
\end{align}
where $f_j$ is a deterministic function and $(\varepsilon_1, \dots, \varepsilon_p)$ are mutually independent noise variables. If $\sX$ follows an SCM with a DAG $G$, the distribution of $\sX$ is Markov with respect to $G$; that is, $\CI(G) \subseteq \CI(\sX)$. Additionally, we say the distribution of $\sX$ is faithful to $G$ if $\CI(G) = \CI(\sX)$. 
Two DAGs $G$ and $G'$ are called Markov equivalent if $\CI(G) = \CI(G')$. The Markov equivalence class of $G$, denoted by $\cE(G) = \{G' : \CI(G') = \CI(G)\}$, is the set of  DAGs Markov equivalent to $G$. An edge  $i \rightarrow j$ in $G$ is called an essential arrow if it appears in every DAG $G' \in \cE(G)$. 
The set of essential arrows of $G$ is denoted by $E_G^*$.

A key object in our approach is a (topological) ordering of a DAG. 
A bijection $\sigma: [p] \rightarrow [p]$ is called an ordering for a DAG $G$ if the following holds: for every pair of indices $i < j$, whenever an edge exists between the nodes $\sigma(i)$ and $\sigma(j)$ in $G$, its direction must be $\sigma(i) \rightarrow \sigma(j)$, not  $\sigma(j) \rightarrow \sigma(i)$. Every DAG admits at least one ordering, which need not be unique.
%From the perspective of order theory, an ordering of $G$ is a linear extension of the partial order induced by the edge set $E$~\citep{davey2002introduction}.
Let $\bbS^p$ denote the set of all permutations of $[p]$, that is, all possible orderings. Let $\cL(G) \subseteq \bbS^p$ denote the set of orderings consistent with a DAG $G$. Further, define $\cL^e(E)\subseteq \bbS^p$ to be the set of linear extensions of the partial order induced by $E$. For a collection of DAGs $\cG$, let $\cL^\cap(\cG) = \bigcap_{G \in \cG} \cL(G)$ and $\cL^\cup(\cG) = \bigcup_{G \in \cG} \cL(G)$. 
For each node $j \in [p]$, let 
\begin{align}\label{eq:potential}
    P_j^{\sigma} = \{i\colon  \sigma^{-1}(i) < \sigma^{-1}(j)\} 
\end{align}
denote the set of predecessors of $j$ under the ordering $\sigma$, that is, the set of nodes that precede $j$ in $\sigma$. 
Here, $\sigma^{-1}$ is the inverse permutation of $\sigma$. Let $\cG_p$ denote the collection of all $p$-node DAGs, and let $\cG_p^\sigma$ be the collection of all $p$-node DAGs that are consistent with the ordering $\sigma$, that is,
$  \cG_p^\sigma = \{ G  \in \cG_p \colon$ for every edge $ i \rightarrow j$ in $G$, $\sigma^{-1}(i) < \sigma^{-1}(j) \}$.  
A DAG $G$ is an independence map (I-map) of $\sX$ if $\CI(G) \subseteq \CI(\sX)$. A DAG $G$ is a minimal I-map of $\sX$ if it is an I-map and no proper subgraph of $G$ is an I-map of $\sX$. For a fixed ordering $\sigma \in \bbS^p$, there exists a unique minimal I-map of $\sX$ consistent with $\sigma$, which we denote by $G_\sigma$. Let $\mathrm{MVN}_k$ denote the $k$-dimensional multivariate normal distribution.
%Finally, we use $\mathrm{MVN}_k$ to denote the $k$-dimensional multivariate normal distribution.
In this section, to convey the main ideas clearly,  we assume the faithfulness condition, under which the conditional independence relations in the distribution of $\sX$ coincide with those implied by the true DAG $G^*$, that is, $\CI(G^*) = \CI(\sX)$. We discuss the condition in more detail in Remark~\ref{remark:faithfulness}.
%\tcr{We defer to the supplement () for proof and main concept}

\subsection{Order-based scoring framework under data heterogeneity}\label{subsec:order-based}

\subsubsection{A general framework for order-based scores and score equivalence}\label{subsub:score_eq}

In score-based structure learning, each candidate DAG $G$ is assigned a score $\phi(G)$, such as the Bayesian information criterion (BIC), and the objective is to find a DAG that maximizes the score~\citep{Drton2017-oa}.
We consider an order-based approach to structure learning, which constructs an order-based score $\psi: \bbS^p \to \mathbb{R}$ and 
identifies an optimal ordering in $\bbS^p$.  Using an order-based approach offers several advantages; it operates on a substantially smaller search space than DAG-based methods, and has been shown to exhibit improved empirical performance and more stable convergence~\citep{friedman2003being}. An order-based score $\psi$ is typically defined from a DAG-based score $\phi$ as
\begin{equation}\label{eq:order-based_score}
    \psi(\sigma) = F\big(\{ \phi(G) : G \in \mathcal{G}_p^\sigma \}\big),
\end{equation}
where $F$ is an aggregation function. 
We present three examples of order-based scores below, assuming that a random vector $\sX = (\sX_1, \dots, \sX_p)$ follows an SCM with true DAG $G^*$.

\begin{example}[Sparsest permutation]\label{ex:sp}
In sparsest permutation methods~\citep{raskutti2018learning, solus2017consistency}, the order-based score is defined as
\begin{align*}
\psi_1(\sigma) := \max_{} \{ -|G|: \, G \in \cG_p^{\sigma}\; , \CI(G) \subseteq \CI (\sX)\}.
\end{align*}
Since the minimal I-map $G_\sigma$ uniquely minimizes the number of edges among such DAGs, we have $\psi_1(\sigma) = -|G_\sigma|.$
\end{example}

\begin{example}[$\ell_0$-penalized maximum likelihood]\label{ex:mle}
Assume that $\sX$ follows a multivariate normal distribution. Given an ordering $\sigma \in \bbS^p$ and $j \in [p]$, let $S \subseteq  P^\sigma_j$ and consider the linear projection $\sX_{\sigma(j)} = \beta_{\sigma(j)\mid S}^\top \sX_S + \varepsilon_{\sigma(j)\mid S}$  with conditional variance $\omega_{\sigma(j)\mid S}
\;:=\;
\mathrm{Var}\!\left(\sX_{\sigma(j)} \mid \sX_S\right).$
Following~\cite{van2013ell}, we consider the expected $\ell_0$-penalized Gaussian log-likelihood over DAGs consistent with $\sigma$, and define the order-based score as
\begin{align*}
    \psi_2(\sigma)
\;:=\; \max_{G\in\mathcal{G}_p^{\sigma} } \left(-\frac{1}{2}\sum_{j=1}^p
\log \omega_{\sigma(j)\mid \Pa_{\sigma(j)}(G)}
 - \lambda |G|\right),
\end{align*}
where $\lambda > 0$ is a regularization parameter.
\end{example}

\begin{example}[Bayesian order-based learning]\label{ex:bayes}

In Bayesian order-based structure learning, the score is proportional to the posterior probability of an ordering $\sigma \in \bbS^p$,
\begin{align*}
    \psi_3(\sigma) \;\propto\;
\pi_0(\sigma)
\sum_{G \in \mathcal{H}_p^\sigma}
\exp\bigl(\phi(G)\bigr),
\end{align*}
where $\phi(G)$ denotes the log posterior score of $G$, $\mathcal{H}^\sigma_p \subseteq \mathcal{G}_p^\sigma$ is a restricted collection of DAGs consistent with $\sigma$, and $\pi_0$ is a prior distribution over $\bbS^p$. The summation over $\mathcal{H}^\sigma_p $ corresponds to Bayesian model averaging over DAGs consistent with the ordering $\sigma$. In practice, $\mathcal{H}^\sigma_p$ is restricted to DAGs with bounded in-degree to ensure computational tractability~\citep{friedman2003being}.
Specifically, for a given $d \in \bbN$, define
\begin{align}\label{eq:restricted}
    \cG_p^\sigma (d) = \{G \in \mathcal{G}_p^\sigma: |\Pa_j(G)| \leq d   \text{ for all } j\in [p]\}.
\end{align}
 In certain implementations, $\mathcal{H}^\sigma_p$ may contain only a single representative DAG, such as the minimal I-map~\citep{agrawal2018minimal} or the MAP estimate~\citep{Chang2022-id}.
\end{example}
Since an order-based score $\psi$ is induced by a score $\phi$, it is natural to require that $\psi$ inherit properties of $\phi$ that are desirable for structure learning. One such property is score equivalence~\citep{andersson1997characterization, chickering2002learning}, which requires that $\phi(G)=\phi(G')$ for any pair of Markov equivalent DAGs $G$ and $G'$. Score equivalence is motivated by the fact that observational data cannot distinguish between DAGs in the same Markov equivalence class, thus a reasonable DAG-based score should not systematically prefer one DAG over another in the class. 
Analogously, a desirable property of an order-based score $\psi$ is that all orderings consistent with any DAG in $\cE(G^*)$ receive the same score, whereas orderings that are inconsistent with every DAG in $\cE(G^*)$ receive a strictly lower score. 
%To formalize this notion, we introduce score equivalence for order-based scores.
\begin{definition}[Score equivalence for order-based scores]\label{def:score-eq}
    Let $G^*$ be the true DAG, and let $\sigma^* \in \cL(G^*)$ be any ordering consistent with $G^*$.
    An order-based score $\psi$ is called score equivalent if $\psi(\sigma) = \psi(\sigma^*)$ whenever  $\cG_p^\sigma \cap \cE(G^*) \neq \emptyset$, and $\psi(\sigma) < \psi(\sigma^*)$ otherwise.
\end{definition}
Under the faithfulness assumption, the order-based score $\psi_1$ in Example~\ref{ex:sp} is score equivalent. Indeed, under faithfulness, we have $|G_\sigma| \geq |G^*|$ for all $\sigma \in \bbS^p$. 
Moreover, the Markov equivalence class $\cE(G^*)$ coincides with the set $\{ G_\sigma : \sigma \in \bbS^p,\ |G_\sigma| = |G^*| \},$ see~\citet{van2013ell}. Similarly, the score $\psi_2$ in Example~\ref{ex:mle} is also score equivalent. In particular,  Markov equivalent DAGs yield identical log-likelihood values under Gaussian models, and since they also have the same number of edges, the penalized likelihood is identical for all DAGs in $\cE(G^*)$.

When causal interpretation is of interest, the goal of order-based structure learning is to identify an ordering $\sigma^*$ consistent with the true DAG $G^*$, that is, $\sigma^* \in \cL(G^*)$. Once $\sigma^*$ is identified, estimation of the DAG reduces to a series of variable selection problems for identifying the parent set of each node, and several methods can consistently identify the true DAG~\citep{lee2019minimax}.
Suppose $\psi$ is a score equivalent order-based score. It is possible that an ordering $\sigma \notin \cL(G^*)$ still satisfies $\psi(\sigma) = \psi(\sigma^*)$. This occurs when $\sigma$ is inconsistent with $G^*$ itself but is consistent with some other DAG in $\cE(G^*)$. We refer to such an ordering $\sigma$ as a misspecified ordering, and  define the set of misspecified orderings for a DAG $G^*$ as 
\begin{align*}
    \cM^\psi(G^*) = \{\sigma: \psi(\sigma) = \psi(\sigma^*)\} \setminus \cL(G^*)=\cL^\cup(\cE(G^*)) \setminus \cL(G^*),
\end{align*}
where the second equality follows from score equivalence. Intuitively, $\mathcal{M}^\psi(G^*)$ represents the set of orderings that correspond to DAGs that are Markov equivalent to, but structurally distinct from $G^*$. Thus, the size of $\mathcal{M}^\psi(G^*)$ quantifies the degree of ambiguity in identifying orderings consistent with $G^*$. 
In particular,  $\mathcal{M}^\psi(G^*) = \emptyset$ implies that $\cE(G^*) = \{G^*\}$, so that the order-based score uniquely identifies the causal orderings consistent with $G^*$.
In contrast, a larger $\mathcal{M}^\psi(G^*)$ indicates greater ambiguity in identifying edge orientations in the underlying DAG. 

\subsubsection{Enhancing identifiability via joint order-based scores}\label{subsub:joint_score}

We now consider the scenario where multiple data distributions $\sX^{(1)}, \dots, \sX^{(K)}$
are available, and each $\sX^{(k)}$  is generated from an SCM with DAG $G^{(k)}$ for $k \in [K]$. 
Let $\cG = \{G^{(1)}, \dots, G^{(K)}\}$ denote the collection of DAGs, which may vary across $k \in [K]$. 
We assume that there exists an ordering $\sigma^*$ that is consistent with every $G \in \cG$. 
This assumption is standard in the literature~\citep{Wang2020-of, Li2024-af, lee2022bayesian} and reflects the presence of a common causal ordering across environments, while allowing the underlying graph structures to vary. Further discussion is given in Remark~\ref{remark:common}.
We show that pooling information across multiple datasets improves the identifiability of the shared causal ordering. 
To this end, we define the joint order-based score as follows.
\begin{definition}[Joint order-based score]\label{def:joint_score}
Let $\psi^{(k)}$ be a score-equivalent order-based score constructed from $\sX^{(k)}$ for each $k \in [K]$. Let $F: \bbR^K \to \bbR$ be strictly increasing in each coordinate, and define the joint order-based score  
\begin{equation}\label{eq:joint_score}
        \Psi(\sigma) = F\bigl(\psi^{(1)}(\sigma), \dots, \psi^{(K)}(\sigma)\bigr),
         \qquad \sigma \in \bbS^p.
    \end{equation}
\end{definition}
Note that possible choices of $F$ include weighted sums with positive weights, products, weighted geometric means, and log-sum-exp functions. The following lemma shows that the set of misspecified orderings under the joint score is a subset of the corresponding set under the individual score. 

\begin{lemma}\label{lemma:joint_score}
    
    Assume the common ordering assumption holds. Let $\Psi$ be the joint order-based score defined in Definition~\ref{def:joint_score}. Then
    \begin{itemize}
        \item [(1)] $\sigma \in \bbS^p$ maximizes $\Psi$ over $\bbS^p$ if and only if $\sigma \in \bigcap_{G\in \cG} \cL^{\cup}(\cE(G))$,
        %$\Psi$ attains its maximum value over $\bbS^p$ at $\sigma$ if and only if $\sigma \in \bigcap_{G\in \cG} \cL^{\cup}(\cE(G))$,
        \item [(2)] For each $k \in [K]$, $\cM^{\Psi}(G^{(k)}) \subseteq \cM^{\psi^{(k)}}(G^{(k)})$.
    \end{itemize}
\end{lemma}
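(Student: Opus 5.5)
The plan is to reduce everything to the per-dataset characterization given by score equivalence (Definition~\ref{def:score-eq}) and then exploit coordinatewise monotonicity of $F$. First, for each $k$, I will record that the set of maximizers of $\psi^{(k)}$ is exactly
\[
\{\sigma : \cG_p^\sigma \cap \cE(G^{(k)}) \neq \emptyset\} = \cL^\cup(\cE(G^{(k)})).
\]
The equality holds because $\sigma$ is consistent with some $G' \in \cE(G^{(k)})$ if and only if $G' \in \cG_p^\sigma$. Call the common maximal value $m_k := \psi^{(k)}(\sigma^*)$. By Definition~\ref{def:score-eq} applied to $G^{(k)}$, we have $\psi^{(k)}(\sigma) \le m_k$ for every $\sigma$, with equality precisely on $\cL^\cup(\cE(G^{(k)}))$. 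One point needs care here: the definition is phrased with ``any $\sigma^* \in \cL(G^*)$,'' so the reference value is the same for every consistent ordering. Under the common ordering assumption, a single $\sigma^*$ lies in $\cL(G^{(k)})$ for all $k$ simultaneously, so $\sigma^*$ attains $m_k$ in every coordinate at once.

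For part (1), since $F$ is strictly increasing in each coordinate, $\Psi(\sigma) = F(\psi^{(1)}(\sigma),\dots,\psi^{(K)}(\sigma)) \le F(m_1,\dots,m_K) = \Psi(\sigma^*)$ for all $\sigma$. Hence the maximum of $\Psi$ equals $F(m_1,\dots,m_K)$ and is attained at $\sigma^*$. If $\sigma \in \bigcap_k \cL^\cup(\cE(G^{(k)}))$, every coordinate equals $m_k$, so $\sigma$ is a maximizer. Conversely, suppose some coordinate satisfies $\psi^{(k)}(\sigma) < m_k$. I will pass from $(m_1,\dots,m_K)$ down to $(\psi^{(1)}(\sigma),\dots,\psi^{(K)}(\sigma))$ one coordinate at a time. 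Each step weakly decreases $F$, and the step in coordinate $k$ strictly decreases it. This chain gives $\Psi(\sigma) < \Psi(\sigma^*)$, so $\sigma$ is not a maximizer. The existence of the common $\sigma^*$ is the essential ingredient: without it, the per-coordinate maxima need not be simultaneously attainable, and the characterization would fail.

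For part (2), interpret $\cM^\Psi(G^{(k)}) = \{\sigma : \Psi(\sigma) = \Psi(\sigma^*)\} \setminus \cL(G^{(k)})$ with the same reference $\sigma^*$, which lies in $\cL(G^{(k)})$. By part (1), this set equals $\bigl(\bigcap_{l} \cL^\cup(\cE(G^{(l)}))\bigr) \setminus \cL(G^{(k)})$. That set is contained in $\cL^\cup(\cE(G^{(k)})) \setminus \cL(G^{(k)}) = \cM^{\psi^{(k)}}(G^{(k)})$, where the last equality is the one stated in the paper after the definition of misspecified orderings.

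The main obstacle is not technical. It is making sure that the reference value $\Psi(\sigma^*)$ is well defined, that is, independent of which consistent ordering is chosen. Parts (1) and (2) settle this: every ordering in $\bigcap_k \cL(G^{(k)})$ lies in the maximizing set, so all such orderings give the same value of $\Psi$.
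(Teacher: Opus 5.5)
Your proposal is correct and takes essentially the same route as the paper's proof. Both arguments identify the maximizers of each $\psi^{(k)}$ with $\cL^{\cup}(\cE(G^{(k)}))$ via score equivalence, use coordinatewise strict monotonicity of $F$ for part (1), and obtain part (2) from the inclusion $\bigcap_{G\in\cG}\cL^{\cup}(\cE(G)) \subseteq \cL^{\cup}(\cE(G^{(k)}))$. You are somewhat more careful than the paper on two points it uses only implicitly: that the common ordering $\sigma^*$ attains every coordinate maximum simultaneously, which makes the intersection nonempty and validates the strict-decrease chain, and that the reference ordering in $\cM^{\Psi}(G^{(k)})$ must be a common one.
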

\begin{proof}
    See Section~\ref{subsub:lemma:joint_score} in the supplement.
\end{proof}
%The lemma indicates that joint estimation can improve identifiability. We illustrate this idea with the following example.

\begin{figure}[!b]
    \centering
    \captionsetup{font={footnotesize,stretch=1}}
    \begin{minipage}[c]{0.3\linewidth}
        \centering
        \includegraphics[width=\linewidth]{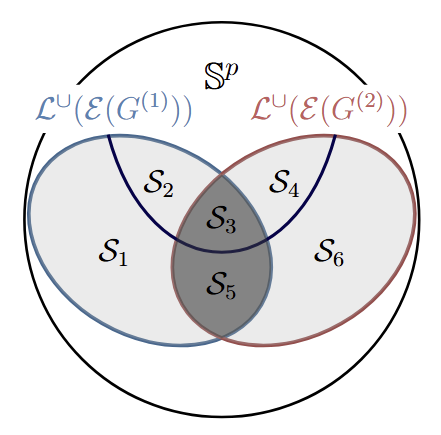}
    \end{minipage}
    \hspace{0.02\linewidth} % adjust this
    \begin{minipage}[c]{0.65\linewidth}
        \centering
        \includegraphics[width=\linewidth]{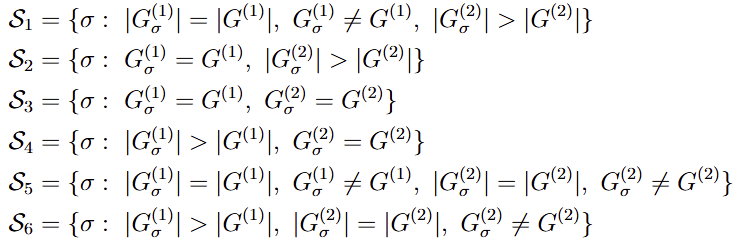}
    \end{minipage}
    \caption{Score landscape in $\bbS^p$ induced by the joint score $\Psi_1$ in Example~\ref{ex:landscape}. Regions labeled $\cS_1$--$\cS_6$ correspond to different sets of orderings, as indicated in the right panel. The blue and red regions represent the union of all orderings consistent with Markov equivalent DAGs of $G^{(1)}$ and $G^{(2)}$, respectively. The maximizers of the joint score correspond to the intersection of these regions $S_3 \cup S_5$. Region $S_3$ consists of orderings consistent with both true DAGs, whereas $S_5$ consists of orderings consistent with distinct Markov equivalent DAGs in both datasets. 
    Color intensity represents the  joint score value, with darker shades indicating higher scores.}
    \label{fig:ex1}
\end{figure}
\begin{example}\label{ex:landscape}
    Suppose that $\sX^{(1)}, \sX^{(2)}$ are generated from SCMs with DAGs $G^{(1)}$ and $G^{(2)}$, respectively. Under the faithfulness assumption, we define the joint score as $\Psi_1(\sigma)= \psi_1^{(1)}(\sigma) +\psi_1^{(2)}(\sigma) = - (|G_\sigma^{(1)}| + |G_\sigma^{(2)}|)$, where each individual score is obtained from the sparsest permutation method in Example~\ref{ex:sp}.  Under the common ordering assumption, the score landscape is illustrated in Figure~\ref{fig:ex1}.  In the figure, we have $\mathcal L(G^{(1)}) = \cS_2 \cup \cS_3$, $\mathcal L(G^{(2)}) = \cS_3 \cup \cS_4$, $\mathcal L^{\cup}(\mathcal E(G^{(1)})) = \cS_1 \cup \cS_2 \cup \cS_3 \cup \cS_5$, and $\mathcal L^{\cup}(\mathcal E(G^{(2)})) = \cS_3 \cup \cS_4 \cup \cS_5 \cup \cS_6$.
By Lemma~\ref{lemma:joint_score} (1), the joint score $\Psi_1$ is maximized on $\mathcal L^{\cup}(\mathcal E(G^{(1)})) \cap \mathcal L^{\cup}(\mathcal E(G^{(2)})) = S_3 \cup S_5$. 
For the individual scores, the misspecified ordering sets are  $\cM^{\psi_1^{(1)}}(G^{(1)}) =\cS_1 \cup \cS_5$  and $\cM^{\psi_1^{(2)}}(G^{(2)}) = \cS_5 \cup \cS_6$. In contrast, under the joint score, the misspecified orderings reduce to
$\cM^{\Psi_1}(G^{(1)}) = \cM^{\Psi_1}(G^{(2)}) = \cS_5 $.
\end{example}
This example illustrates how pooling heterogeneous datasets shrinks the set of maximizers and reduces the set of misspecified orderings.
The result extends to the general case with an arbitrary number $K$ DAGs in $\cG$. As $K$ increases, it can further refine the space of orderings, thereby reducing inconsistent causal orderings. 
One important observation is that if all graphs in $\cG$ are identical to a common DAG $G'$, so that $\bigcap_{G\in \cG} \cL^{\cup}(\cE(G)) = \cL^{\cup}(\cE(G')),$ 
then the joint score provides no additional reduction in the set of misspecified orderings. In this case, the resulting set coincides with that obtained under single DAG estimation.
Interestingly, this finding contrasts with the prevailing view in the multiple DAG learning literature, which suggests that greater similarity among the graphs in $\cG$ facilitates improved estimation~\citep{Wang2020-of, lee2022bayesian}.
In contrast, our analysis shows that increased heterogeneity across the graphs in $\cG$, particularly when it induces greater diversity in the sets $\{\cL^{\cup}(\cE(G)): G \in \cG \}$, can lead to improved identifiability of the causal ordering.
This naturally raises the question of the maximum possible improvement and the conditions under which such improvement can be achieved. 
The primary challenge is that the set $\cL^\cup(\cE(G))$, the union of orderings consistent with all Markov equivalent DAGs of $G$, is generally complex and difficult to characterize. Consequently, analyzing the intersection
$\bigcap_{G\in \cG} \cL^{\cup}(\cE(G))$ directly becomes challenging. 
The following lemma identifies a tractable superset that can be analyzed more easily. Recall that $E_G^*$ denotes the set of essential arrows of $G$, and that $\cL^{e}(E)$ denotes the set of linear extensions of an edge set $E$. 
\begin{lemma}\label{lemma:essential}
    For any collection of DAGs $\cG$,
    \begin{align*}
         \bigcap_{G\in \cG} \cL^{\cup}(\cE(G)) \subseteq \cL^{e} \left( \bigcup_{G\in \cG}  E^*_{G}\right).
    \end{align*}
\end{lemma}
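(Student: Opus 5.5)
The plan is to prove the inclusion pointwise. Fix $\sigma \in \bigcap_{G\in\cG} \cL^{\cup}(\cE(G))$; we must show that $\sigma$ is a linear extension of the partial order generated by $\bigcup_{G\in\cG} E^*_G$. One caveat should be handled first: this union might in principle not be acyclic, in which case $\cL^e$ would be empty. We avoid treating this case separately by the following observation. A permutation $\sigma$ is a linear extension of the partial order induced by an edge set $E$ exactly when every edge $i\to j \in E$ satisfies $\sigma^{-1}(i) < \sigma^{-1}(j)$. The order induced by $E$ is its transitive closure, and linear orders respect transitivity. So it suffices to verify this edgewise condition, and acyclicity of the union then follows automatically from the existence of such a $\sigma$.

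Next, take an arbitrary arrow $i\to j \in \bigcup_{G\in\cG} E^*_G$, say $i\to j\in E^*_{G}$ for some $G\in\cG$. Since $\sigma\in\cL^{\cup}(\cE(G))$, there is a DAG $G'\in\cE(G)$ with $\sigma\in\cL(G')$. By the definition of essential arrows, $i\to j$ lies in every member of $\cE(G)$, and in particular $i\to j\in G'$. The definition of an ordering consistent with $G'$ then forces $i$ to precede $j$ in $\sigma$, that is, $\sigma^{-1}(i)<\sigma^{-1}(j)$. Since the arrow was arbitrary, every arrow of the union is respected by $\sigma$, and by the first step $\sigma \in \cL^e\bigl(\bigcup_{G\in\cG}E^*_G\bigr)$.

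No real obstacle is expected. The only point needing care is the first step: making precise that "linear extension of the partial order induced by $E$" reduces to respecting each edge of $E$, including the transitive-closure remark and the degenerate cyclic case, which the argument itself rules out whenever the left-hand side is nonempty.
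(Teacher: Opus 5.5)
Your proof is correct and follows essentially the same argument as the paper: an essential arrow of $G$ lies in every DAG of $\cE(G)$, so any $\sigma \in \cL^{\cup}(\cE(G))$ respects it, and respecting every arrow of the union is exactly membership in $\cL^{e}$ of the union. The only difference is organizational: the paper passes through the intermediate set $\bigcap_{G\in\cG}\cL^{e}(E^*_G)$ (and notes that it equals the right-hand side), whereas you argue in one pass and also handle the transitive-closure and cyclic-union subtleties that the paper sidesteps by taking the edgewise condition as the definition of $\cL^{e}$.
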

\begin{proof}
    See Section~\ref{subsub:lemma:essential} in the supplement.
\end{proof}
The key observation is that $\cL^{\cup}(\cE(G)) \subseteq \cL^{e}(E^*_{G})$ for any given $G$. 
If $i \to j$ is an essential arrow of $G$, then it appears in every DAG in $\cE(G)$. 
Therefore, any ordering $\sigma$ consistent with a DAG in $\cE(G)$ must satisfy 
$\sigma^{-1}(i) < \sigma^{-1}(j)$; that is, node $i$ precedes node $j$ in $\sigma$. Hence $\sigma \in \cL^{e}(E_G^*)$.
The inclusion is generally strict, and equality need not hold. A concrete example illustrating this is provided in Section~\ref{subsec:notation_lemmas} in the supplement (Figure~\ref{fig:order_set}).

We now state the main result of this section. 
By controlling the set $\bigcup_{G\in \cG}  E^*_{G}$, we derive sufficient conditions on the collection $\cG$ for achieving the maximum possible improvement in the identifiability of causal orderings.
Under the common ordering assumption, let $\sigma^*$ be an ordering consistent with every DAG in $\mathcal{G}$. Define $\sigma^\dagger$ to be the ordering obtained from $\sigma^*$ by swapping its first two elements, that is,
\begin{align*}
    \sigma^\dagger(1) = \sigma^*(2), \qquad
\sigma^\dagger(2) = \sigma^*(1), \qquad
\sigma^\dagger(j) = \sigma^*(j) \quad \text{for all } j \ge 3.
\end{align*}
Define the edge set 
\begin{align*}
    E_{\max} = \left\{
\sigma^*(j) \to \sigma^*(j+1) : 2 \le j \le p-1 \right\} \cup \{\sigma^*(1) \to \sigma^*(3)\}.
\end{align*}
\begin{theorem}\label{thm:identifiability}
Assume that the common ordering assumption holds and  $\sX^{(k)}$ follow an SCM with $G^{(k)}$, for each $k \in [K]$. Let $\Psi$ be the joint order-based score defined in Definition~\ref{def:joint_score}. If 
\begin{align*}
    E_{\max} \subseteq \bigcup_{k = 1}^{K} E^*_{G^{(k)}},
\end{align*}
then $\arg\max_{\sigma \in \bbS^p} \Psi(\sigma)
=
\{\sigma^*, \sigma^\dagger\}.$
\end{theorem}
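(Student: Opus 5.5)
By Lemma~\ref{lemma:joint_score}~(1), the maximizers of $\Psi$ are exactly $\bigcap_{k=1}^K \cL^{\cup}(\cE(G^{(k)}))$. The plan is to sandwich this set between $\{\sigma^*,\sigma^\dagger\}$ from below and $\cL^{e}(E_{\max})$ from above, and then show that $\cL^{e}(E_{\max})$ itself equals $\{\sigma^*,\sigma^\dagger\}$.

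\emph{Upper bound.} By Lemma~\ref{lemma:essential} and the hypothesis $E_{\max}\subseteq\bigcup_k E^*_{G^{(k)}}$, monotonicity of linear extensions in the edge set gives
\[
\bigcap_{k} \cL^{\cup}(\cE(G^{(k)})) \subseteq \cL^{e}\Bigl(\bigcup_k E^*_{G^{(k)}}\Bigr) \subseteq \cL^{e}(E_{\max}).
\]
The first task is therefore to compute $\cL^{e}(E_{\max})$. The edges $\sigma^*(2)\to\sigma^*(3)\to\cdots\to\sigma^*(p)$ form a chain, so in any linear extension the nodes $\sigma^*(2),\dots,\sigma^*(p)$ appear in that relative order. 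The only remaining freedom is the position of $\sigma^*(1)$. Its only constraint is $\sigma^*(1)\to\sigma^*(3)$, so it must precede $\sigma^*(3)$. This leaves exactly two placements: first, giving $\sigma^*$, or between $\sigma^*(2)$ and $\sigma^*(3)$, giving $\sigma^\dagger$. Hence $\cL^{e}(E_{\max})=\{\sigma^*,\sigma^\dagger\}$.

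\emph{Lower bound.} We need both $\sigma^*$ and $\sigma^\dagger$ to lie in every $\cL^{\cup}(\cE(G^{(k)}))$.
\begin{itemize}
\item For $\sigma^*$ this is immediate: $\sigma^*\in\cL(G^{(k)})$ and $G^{(k)}\in\cE(G^{(k)})$.
\item For $\sigma^\dagger$, this is the main obstacle. If $G^{(k)}$ has no edge between $\sigma^*(1)$ and $\sigma^*(2)$, then $\sigma^\dagger\in\cL(G^{(k)})$ directly, since swapping two adjacent positions only affects the orientation of an edge between those two nodes.
\item Otherwise $G^{(k)}$ contains the edge $\sigma^*(1)\to\sigma^*(2)$, and I would reverse it to obtain $G'$. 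Because $\sigma^*(1)$ and $\sigma^*(2)$ are the first two nodes of $\sigma^*$, we have $\Pa_{\sigma^*(1)}(G^{(k)})=\emptyset$ and $\Pa_{\sigma^*(2)}(G^{(k)})=\{\sigma^*(1)\}$.
\item This makes the edge covered: $\Pa_{\sigma^*(2)}=\Pa_{\sigma^*(1)}\cup\{\sigma^*(1)\}$. By Chickering's covered-edge reversal result, $G'$ is Markov equivalent to $G^{(k)}$; concretely, the skeleton and v-structures are unchanged, since neither endpoint has any other parent.
\item $G'$ is acyclic and consistent with $\sigma^\dagger$. Therefore $\sigma^\dagger\in\cL(G')\subseteq\cL^{\cup}(\cE(G^{(k)}))$.
\end{itemize}

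Combining the two bounds, $\{\sigma^*,\sigma^\dagger\}\subseteq\bigcap_k\cL^{\cup}(\cE(G^{(k)}))\subseteq\{\sigma^*,\sigma^\dagger\}$. Lemma~\ref{lemma:joint_score}~(1) then yields $\arg\max_{\sigma}\Psi(\sigma)=\{\sigma^*,\sigma^\dagger\}$.

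The only delicate step is the covered-edge argument. I would verify the v-structure preservation directly rather than cite Chickering, to keep the proof self-contained. Also $\sigma^\dagger\neq\sigma^*$ requires $p\ge 3$, which is implicit in the definition of $E_{\max}$.
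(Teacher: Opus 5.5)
Your proof is correct and follows essentially the same route as the paper. Both arguments compute $\cL^{e}(E_{\max})=\{\sigma^*,\sigma^\dagger\}$, place $\sigma^\dagger$ in every $\cL^{\cup}(\cE(G^{(k)}))$ by reversing the covered edge $\sigma^*(1)\to\sigma^*(2)$, and exclude all other orderings through the essential arrows in $E_{\max}$. The only difference is packaging: you organize the argument as a sandwich via Lemma~\ref{lemma:joint_score}~(1) and Lemma~\ref{lemma:essential}, whereas the paper checks the maximizing and strict-inequality steps directly from score equivalence and the strict monotonicity of $F$.
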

\begin{proof}
    See Section~\ref{subsub:thm1}  in the supplement.
\end{proof}
While the set $E_{\max}$ may appear arbitrary, this is not coincidental.
Because all DAGs in a Markov equivalence class share the same skeleton and v-structures, the essential arrow set $E^*_G$ always contains the edges forming v-structures in $G$. 
Other essential arrows are determined by orientation rules that preserve the existing v-structures while avoiding the creation of new ones, as characterized by~\citet{andersson1997characterization}. 
The key observation used here is that the edge $\sigma^*(1)\rightarrow \sigma^*(2)$, if present in $G$, cannot be an essential arrow, that is,  $(\sigma^*(1), \sigma^*(2) )\notin E^*_G$ for any  $G \in \cG_p^{\sigma^*}.$
To see this, the edge $\sigma^*(1)\rightarrow \sigma^*(2)$, if present in $G \in \cG_p^{\sigma^*}$, is always a covered edge and therefore is not essential~\citep{chickering1995transformational}. Consequently, $(\sigma^*(1), \sigma^*(2)) \notin E_G^*$ for any such $G$. This implies that no procedure based solely on essential arrows can uniquely identify the relative order between $\sigma^*(1)$ and $\sigma^*(2)$, and therefore no stronger identifiability result is attainable.
Although the remaining $p(p-1)/2 - 1$ edges may potentially be essential, Theorem~\ref{thm:identifiability} shows that only $p-1$ edges are needed to achieve the maximum improvement. 
Under a standard random DAG model, one can show that $K \ge C \log p$, for some constant $C>0$, is sufficient for the condition $E_{\max} \subseteq \bigcup_{k=1}^K E^*_{G^{(k)}}$
to hold with high probability, implying that a modest number of different datasets is needed. See Proposition~\ref{prop:K} of Section~\ref{subsub:propK} in the supplement. 
\begin{remark}[Common ordering assumption]\label{remark:common}
The common ordering assumption has been widely adopted in the multiple DAG estimation literature and is reasonable in many applications. For example, in systems biology, the directionality of genetic interactions is typically stable. Although specific interactions may emerge, disappear, or vary in strength across conditions, an upstream gene rarely becomes a downstream gene. As a result, biological pathways tend to preserve a consistent causal direction even under environmental or experimental perturbations. More broadly, this assumption is natural in settings where interventions or environmental changes are unlikely to reverse causal directionality, such as time-ordered systems or hierarchical processes. To assess the practical implications of violating this assumption, we conduct simulation studies in Section~\ref{subsec:common_ordering}. 
 The results indicate reliable convergence and stable performance when violations are mild, suggesting that the proposed method retains practical advantages.
\end{remark}

\begin{remark}[Faithfulness]\label{remark:faithfulness}
While several relaxations have been studied~\citep{raskutti2018learning}, the faithfulness assumption remains standard in the literature. 
 When a dataset $\sX$ violates faithfulness, an ordering that is not consistent with the true causal DAG may be preferred by the score function over orderings consistent with the true graph. To illustrate this, consider $\sX = (\sX_1, \sX_2, \sX_3)$ generated from the structural equation model
\begin{align*}
    \sX_1 = \varepsilon_1, \quad \sX_2 = a\,\sX_1 + \varepsilon_2, \quad \sX_3 = b\, \sX_1 + c\, \sX_2 + \varepsilon_3, \quad 
\end{align*}
where $\varepsilon_j  \overset{\mathrm{i.i.d.}}{\sim} N(0,1)$ for $ j \in [3]$. The corresponding causal graph $G$ is complete, with true causal ordering $\sigma^* = (1,2,3)$. Suppose faithfulness is violated; the nonzero parameters $a,b,c$ satisfy $\mathrm{Cov}(\sX_2, \sX_3) =ab + a^2c + c = 0$. Consider the score function $\psi_1$ defined in Example~\ref{ex:sp}. 
Under the true ordering $\sigma^*$, the minimal I-map DAG coincides with $G$, yielding $\psi_1(\sigma^*) = -|G| = -3$.
Now consider the incorrect ordering $\sigma = (2,3,1)$. The minimal I-map $G_\sigma$ of $\sX$ with respect to $\sigma$ is $2 \rightarrow 1 \leftarrow 3$, which contains only two edges, so that $\psi_1(\sigma) = -|G_\sigma| = -2$, which is strictly larger than $\psi_1(\sigma^*)$. 

This example demonstrates that order-based structure learning methods may fail under violation of faithfulness, particularly in the single DAG learning. Interestingly, our empirical results suggest that joint estimation is robust to such violations (see Section~\ref{subsec:faithfulness} in the supplement).  
The intuition is that although a violation of faithfulness in a single dataset may cause the individual score to prefer an incorrect ordering, this ordering is typically not consistent across other datasets. Consequently, when such an ordering is applied to those datasets, it tends to induce denser graphs and is therefore penalized under the joint score. As a result, joint estimation tends to discourage such spurious orderings and may help recover the true causal ordering.
\end{remark}

\section{Bayesian order-based joint structure learning for Gaussian DAG models}\label{sec:model}

\subsection{Model specification}\label{subsec:model}

We consider a Bayesian model formulation for joint structure learning of DAG models. This model serves as a concrete example of the framework proposed in the previous section. Suppose the data come from $K$ different sources. For each source $k \in [K]$, the observational dataset $X^{(k)} \in \mathbb{R}^{n_k \times p}$ consists of $n_k$ i.i.d.\ observations of a $p$-dimensional random vector $\sX^{(k)}= (\sX^{(k)}_1, \dots, \sX^{(k)}_p)$.
For each  $\sigma \in \bbS^p$  and $k \in [K]$,  we consider the following linear structural equation model for the random vector $\sX^{(k)}$ with underlying DAG $G^{(k)} \in \cG_p^\sigma$, 
\begin{equation}~\label{eq:ln.str.eq}
    \sX_j^{(k)} = \sum_{i \in \Pa_j(G^{(k)})}B^{(k)}_{i j} \,\sX_{i}^{(k)} + \se^{(k)}_j, \quad \se^{(k)}_j   \sim  N(0, \omega^{(k)}_j)  \text{ independently for } j \in [p], 
\end{equation}
where $\Pa_j(G^{(k)}) \subseteq P_j^\sigma$ for each $j$, and $P_j^\sigma$ is defined in~\eqref{eq:potential}. 
Here $B^{(k)}$ is a $p \times p$ matrix. Since entries of $B^{(k)}$ that are not involved in~\eqref{eq:ln.str.eq} are set to zero, 
$B^{(k)}$ can be seen as the weighted adjacency matrix of the DAG $G^{(k)}$, where $i \rightarrow j \in G^{(k)}$ if and only if  $B^{(k)}_{ij} \neq 0$.  

We adopt the empirical sparse Cholesky prior, which is commonly used for sparse DAG models~\citep{lee2019minimax, Zhou2023-cm}. This prior is motivated by ideas from empirical Bayes approaches to variable selection~\citep{martin2017empirical}.  The prior on the parameters $(\sigma, \{G^{(k)}, B^{(k)}, \Omega^{(k)} \}_{k=1}^K)$, where $\Omega^{(k)} = \diag(\omega_1^{(k)}, \dots, \omega_p^{(k)})$, is specified as follows. For all $ j \in [p], k\in [K],$ and $\sigma \in \bbS^p$,
\begin{align}
    B^{(k)}_{\Pa_j(G^{(k)}), j} \mid  G^{(k)},  \omega^{(k)}_j \overset{\mathrm{ind}}{\sim} \;&  \mathrm{MVN}_{|\Pa_j(G^{(k)})|}\left(\widehat{ B}^{(k)}_{\Pa_j(G^{(k)}), j},\frac{\omega^{(k)}_j}{\gamma}  \left(X_{\Pa_j(G^{(k)})}^{(k)\top}X_{\Pa_j(G^{(k)}) }^{(k)}\right)^{-1}\right), \label{eq:B}  \\ 
    \pi_0(\omega^{(k)}_j \mid \sigma) \propto \;& (\omega^{(k)}_j)^{-\frac{\kappa}{2}-1}, \qquad \pi_0(\sigma) \propto 1 \label{eq:omega} \\  
    \pi_0(G^{(1)}, \dots, G^{(K)} \,|\, \sigma) \propto \;& \prod_{k=1}^K \pi_0(G^{(k)} \,|\,  \sigma)  = \prod_{k=1}^K \left(p^{c_0} \right)^{-|G^{(k)}|} \ind_{ \{ \widehat{G}^{(k)}_\sigma \} }(G^{(k)}), \label{prior:G}  
\end{align}
where $\pi_0$ denotes the prior density (or mass) function, $\widehat{B}^{(k)}_{\Pa_j(G^{(k)}),j}$ is the least-squares estimator of $B^{(k)}_{\Pa_j(G^{(k)}),j} = (B^{(k)}_{ij})_{i \in \Pa_j(G^{(k)})}$ , and $c_0 > 0, \gamma > 0, \kappa > 0$ are hyperparameters of the prior. Following~\citet{martin2017empirical}, we introduce
a fractional power  $\alpha \in (0, 1)$ of the likelihood, often referred to as the $\alpha$-fractional posterior~\citep{bhattacharya2019bayesian}, to prevent the posterior from becoming overly influenced by the empirical prior. 
By standard normal-inverse-gamma calculations, we obtain the posterior distribution given the data matrices $X = \{X^{(k)}\}_{k=1}^K$, 
\begin{align*}
    \pi(G^{(1)}, \dots, G^{(K)}, \sigma ) \propto \left\{\prod_{k=1}^K \pi^{(k)}(G^{(k)} \mid \sigma) \right\} \pi_0(\sigma),
\end{align*}
where 
$\pi^{(k)}(G^{(k)} \mid  \sigma) \propto  \,\exp\left( \phi^{(k)}(G^{(k)})\right) \ind_{ \{ \widehat{G}^{(k)}_\sigma \} }(G^{(k)})$ defines the (unnormalized) conditional posterior contribution corresponding to the $k$-th dataset $X^{(k)}$ with 
\begin{align}\label{eq:score}
   \phi^{(k)}(G^{(k)})
= -\!\left(
c_0 \log p
+ \frac{1}{2}\log\!\left(1+\frac{\alpha}{\gamma}\right)
\right) |G^{(k)}|
- \frac{\alpha n_k + \kappa}{2}
\sum_{j=1}^p \log \bigl(n_k\,\widehat{\omega}^{(k)}_j(\Pa_j(G^{(k)}))\bigr),
\end{align}
where $\widehat{\omega}^{(k)}_j(S)$ is the residual variance from the linear regression of $X^{(k)}_j$ onto $X^{(k)}_{S}$, that is, $\widehat{\omega}^{(k)}_j(S)
= n_k^{-1}\,
X^{(k)\top}_j
\!\left(
I
- X^{(k)}_{S}
\bigl(X^{(k)\top}_{S} X^{(k)}_{S}\bigr)^{-1}
X^{(k)\top}_{S}
\right)
X^{(k)}_j.$ See Section~\ref{sec:derivation}  in the supplement for a detailed derivation.
When $K=1$, this formulation reduces to the standard conditional posterior distribution for a single dataset. Finally, we obtain the marginal posterior distribution
\begin{equation}\label{eq:post}
        \pi(\sigma) = 
        \sum_{G^{(1)}\in \cG_p^\sigma } \cdots
        \sum_{G^{(K)}\in \cG_p^\sigma } \pi(G^{(1)}, \dots, G^{(K)}, \sigma), 
\end{equation}
which quantifies the uncertainty over the ordering space $\bbS^p$. 
We will show that \eqref{eq:post} serves as a joint order-based score as defined in Definition~\ref{def:joint_score}. 

The prior has
another empirical component: 
$\widehat{G}^{(k)}_\sigma$ in~\eqref{prior:G} serves as the best estimate for the data $X^{(k)}$ within $\cG_p^\sigma$, as is standard in literature~\citep{agrawal2018minimal, Chang2022-id}. We choose the maximum a posteriori (MAP) estimate defined as 
\begin{equation}\label{eq:MAP}
    \widehat{G}^{(k)}_\sigma = \arg\max_{G \in \cG_p^\sigma} \pi^{(k)}(G \mid \sigma).
\end{equation}
In high-dimensional regimes, we impose sparsity by introducing a maximum indegree parameter $d$ when searching for the MAP estimate, which replaces $\cG_p^\sigma$ with the restricted space $\cG_p^\sigma(d)$ defined in \eqref{eq:restricted}.

\subsection{High-dimensional results}\label{subsec:highdim}

We now present high-dimensional results. Let $n = \min_{k \in[K]} n_k$, and assume that $n_k = \Theta(n)$ for all $k\in [K]$. We consider a high-dimensional setting in which  $p = p(n)$, $K=K(n)$, and $d = d(n)$ may grow with $n$. 
As in~\eqref{eq:ln.str.eq}, for each $k\in[K]$, we assume that the random vector $\sX^{(k)}$ follows a linear SCM associated with a DAG $G^{(k)}_*$ with Gaussian errors and true parameter values $B_*^{(k)}, \Omega_*^{(k)} =\diag(\omega^{(k)}_{*,1}, \dots, \omega^{(k)}_{*,p})$, 
\begin{equation}\label{eq:true_pop_data}
    \sX_j^{(k)} = \sum_{i \in \Pa_j(G_*^{(k)})}
    (B^{(k)}_*)_{i j} \,\sX_{i}^{(k)} + \se^{(k)}_j, \quad \se^{(k)}_j   \sim  N(0, \omega^{(k)}_{*,j}) \text{ independently for } j \in [p], 
\end{equation}
and that the dataset $X^{(k)}$ consists of $n_k$ i.i.d.\ observations of $\sX^{(k)}$. Since the errors are Gaussian, $\sX^{(k)}$ follows a multivariate normal distribution for each $k \in [K]$. Let $\Sigma^{(k)}$ denote its covariance matrix, that is, $\sX^{(k)} \sim \mathrm{MVN}_p(0, \Sigma^{(k)})$. We assume without loss of generality that the variables are centered.
Although the distribution of $\sX^{(k)}$ is fully specified by the covariance matrix $\Sigma^{(k)}$, different structural equation models with different DAGs can induce the same distribution. To see this, given an ordering $\sigma$, we define the linear projection of $\sX_j^{(k)}$ onto its predecessors
$\{\sX_i^{(k)}: i \in P_j^\sigma\}$ as
\begin{equation}\label{eq:permutedSEM}
\sX_j^{(k)} = \sum_{i \in P_j^\sigma} (B_\sigma^{(k)})_{ij}\,\sX_i^{(k)} + \se_{\sigma,j}^{(k)},
\quad
\se_{\sigma,j}^{(k)} \sim \mathcal{N}\!\left(0,\omega_{\sigma,j}^{(k)}\right)
 \text{ independently for } j \in [p].
\end{equation}
%where $\Omega_\sigma^{(k)}=\diag\!\bigl(\omega_{\sigma,1}^{(k)},\dots,\omega_{\sigma,p}^{(k)}\bigr)$.
Under Gaussianity of $\sX^{(k)}$, we have $(B_\sigma^{(k)})_{ij}\neq 0$ if and only if $\sX^{(k)}_i \not\!\perp\!\!\!\perp \sX^{(k)}_j \mid \sX^{(k)}_{P_j^\sigma\setminus\{i\}}$. Let $G_\sigma^{(k)}$ denote the DAG induced by the weighted adjacency matrix $B_\sigma^{(k)}$. Then $G_\sigma^{(k)}$ is the minimal I-map of the distribution of $\sX^{(k)}$  with respect to $\sigma$. If the distribution is faithful to $G_*^{(k)}$, $G_\sigma^{(k)}$ is a minimal I-map of $G_*^{(k)}$ with respect to $\sigma$. 

We assume the faithfulness and common ordering assumptions. Additionally, we introduce the following assumptions to derive high-dimensional results.
\begin{enumerate}
    \item[(A1)] There exist $\underline{\nu}=\underline{\nu}(n),\ \overline{\nu}=\overline{\nu}(n)>0$ and a universal constant $\delta_0>0$ such that
    \begin{align*}
        0 < \frac{\underline{\nu}}{(1-\delta_0)^2}
\le \lambda_{\min}(\Sigma^{(k)})
\le \lambda_{\max}(\Sigma^{(k)})
\le \frac{\overline{\nu}}{(1+\delta_0)^2}\quad \text{for all }k\in [K],
    \end{align*}
where $\lambda_{\min},\lambda_{\max}$ denote the smallest and largest eigenvalues, respectively. 
\item[(A2)] The parameter $K$ and $d$ satisfy $\log K + d \log p = o(n), $ $K = o(p)$ and $d^* \leq d$ where
\begin{equation}\label{eq:d*}
    d^* = \max_{\sigma \in \mathbb{S}^p} \max_{k\in[K]} \max_{j\in[p]} | \mathrm{Pa}_j(G^{(k)}_\sigma) |.
\end{equation}
\item[(A3)]  Prior parameters satisfy that $\kappa \le n,\sqrt{1+\alpha/\gamma} \le p,$ and $c_0 \ge \rho(\alpha+1) + t$, for some universal constant $t>0$ and for some $\rho = O(d)$.
\item[(A4)] For every $\sigma \in \mathbb{S}^p$, $\min_{i,j,k}\left\{ |(B^{(k)}_\sigma)_{ij}|^2 : (B^{(k)}_\sigma)_{ij} \neq 0 \right\}
\ge 16c_0\overline{\nu}^{\,2}\log p/(\alpha \underline{\nu}^{\,2} n).$
\end{enumerate}

The assumptions are standard in high-dimensional structure learning literature. Assumption (A1), commonly referred to as a restricted eigenvalue condition, requires the population covariance matrices to be uniformly well conditioned across groups. Assumption (A2) restricts the growth rates of $K, d,$ and $ p$ relative to the minimum sample size $n$. In particular, it implies $d \log p = o(n)$, which is the typical rate in the literature on sparse DAG recovery \citep{lee2019minimax}. Assumption (A3) imposes mild growth conditions on the prior hyperparameters. In particular, the lower bound on $c_0$ plays a key role in controlling false edge inclusions in high dimensions, as it serves as a penalization parameter on graph size. Assumption (A4) is a permutation beta-min condition, which requires that the signals corresponding to nonzero regression coefficients in~\eqref{eq:permutedSEM} for any $\sigma \in \bbS^p$ are sufficiently large to be detected.  
% It is quite Note that we do not assume strong faithfulness, as it is known to be a restrictive condition \citep{Uhler2013-ra}.
The proposition establishes that the posterior distribution $\pi$ in \eqref{eq:post} asymptotically satisfies the joint order-based scoring property defined in Definition~\ref{def:joint_score}. 

\begin{proposition}\label{prop:score-eq} 
Assume the faithfulness condition. Suppose (A1)--(A4) hold. Then, for sufficiently large $n$ with probability $1-p^{-1}$,
\begin{itemize}
    \item [(1)] For all $\sigma \in \bbS^p$ and $k\in[K]$, $\phi^{(k)}(G)$ is maximized over $G \in \cG_p^\sigma(d)$ at $G_\sigma^{(k)}$. 
    \item [(2)] For all $k\in [K]$, $\pi^{(k)}( \sigma) = \sum_{G\in \cG_p^\sigma }  \pi^{(k)}(G , \sigma)$ is a score equivalent order-based score as defined in Definition~\ref{def:score-eq}.
\end{itemize}
\end{proposition}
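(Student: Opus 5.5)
Part (1) is a uniform, high-probability model-selection consistency statement; part (2) follows from it together with the population fact that the sparsest minimal I-maps are exactly the members of $\cE(G^{(k)}_*)$. Throughout I work on a single good event $\mathcal{A}_n$ of probability at least $1-p^{-1}$. On this event, for every $k\in[K]$, every $j$, and every $S\subseteq[p]\setminus\{j\}$ with $|S|\le d+1$, the sample residual variances satisfy
\[
\bigl|\widehat{\omega}^{(k)}_j(S)/\omega^{(k)}_j(S)-1\bigr|\lesssim\sqrt{(d\log p+\log K)/n},
\]
together with the analogous chi-square-type concentration for differences $\log\widehat\omega_j(S)-\log\widehat\omega_j(S\cup T)$. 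Here $\omega_j^{(k)}(S)=\mathrm{Var}(\sX^{(k)}_j\mid \sX^{(k)}_S)$. The event is built from sub-Gaussian bounds on restricted sample covariance submatrices, using (A1) with the $(1\pm\delta_0)$ slack. A union bound over $K\cdot p\cdot\binom{p}{d+1}$ sets is then absorbed by $\log K+d\log p=o(n)$ from (A2). Since $\phi^{(k)}$ in \eqref{eq:score} decomposes over nodes, and $\cG_p^\sigma(d)$ is a product over $j$ of choices $S\subseteq P_j^\sigma$ with $|S|\le d$, it suffices to show, node by node, that the local score
\[
-(c_0\log p+\tfrac12\log(1+\alpha/\gamma))|S|-\tfrac{\alpha n_k+\kappa}{2}\log\widehat\omega_j^{(k)}(S)
\]
is uniquely maximized at $S=\Pa_j(G_\sigma^{(k)})=:S^*$. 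By (A2), $|S^*|\le d^*\le d$, so $S^*$ is feasible.

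\textbf{Step 1: comparing $S$ with $S^*$.} I split into two cases.
\begin{itemize}
\item[(a)] \emph{Overfitting}, $S\supsetneq S^*$. Since $S^*$ is the Markov blanket of $j$ within $P_j^\sigma$ (Gaussian projection), $\log\widehat\omega_j(S^*)-\log\widehat\omega_j(S)$ is of order $\chi^2_{|S\setminus S^*|}/n$. On $\mathcal{A}_n$ this is at most $C\rho|S\setminus S^*|\log p/n$ uniformly. The gain in likelihood is therefore at most $(\alpha+1)\rho|S\setminus S^*|\log p$, after using $\kappa\le n$ from (A3). The penalty is at least $c_0|S\setminus S^*|\log p$. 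Since $c_0\ge\rho(\alpha+1)+t$, the difference is at most $-t|S\setminus S^*|\log p<0$.
\item[(b)] \emph{Missing a true parent}, $S\not\supseteq S^*$. I compare $S$ with $S\cup S^*$, which has size at most $2d$, so the good event must cover sets of that size. The population drop satisfies
\[
\log\omega_j(S)-\log\omega_j(S\cup S^*)\gtrsim \min_i (B_\sigma)_{ij}^2\,\underline\nu/\overline\nu .
\]
By the beta-min condition (A4) this is at least $16c_0\,\overline\nu\log p/(\alpha\underline\nu n)$, and on $\mathcal{A}_n$ the sample version keeps at least half of it. Multiplying by $\alpha n_k/2$ gives a likelihood gain of order $4c_0|S^*\setminus S|\log p$. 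This dominates both the penalty $c_0|S^*\setminus S|\log p$ and the term $\tfrac12\log(1+\alpha/\gamma)\le\log p$, where the last bound uses (A3). Combining with case (a) applied to $S\cup S^*$ versus $S^*$ then shows that $S$ scores strictly below $S^*$.
\end{itemize}

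\textbf{Main obstacle.} The hardest step is case (b). A single beta-min bound on a coefficient in the full projection onto $P_j^\sigma$ must control the variance drop for \emph{arbitrary} conditioning sets $S$, which need not be subsets of $P_j^\sigma\setminus\{i\}$ in the right way. I would handle this with a Schur-complement argument. Write $\omega_j(S)-\omega_j(S\cup S^*)=b^\top M b$, where $b$ is the vector of projection coefficients onto $S^*\setminus S$ after partialling out $S$, and $M$ is the corresponding conditional covariance. Since $\lambda_{\min}(M)\ge\underline\nu$ by (A1), it remains to lower bound $\|b\|^2$. Here I expect the loss of the ratio $\underline\nu/\overline\nu$ that explains the square appearing in (A4).

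\textbf{Step 2: part (2).} Given part (1), on $\mathcal{A}_n$ the prior indicator forces $G^{(k)}=\widehat G_\sigma^{(k)}=G_\sigma^{(k)}$, so that
\[
\pi^{(k)}(\sigma)\propto\exp\bigl(\phi^{(k)}(G_\sigma^{(k)})\bigr).
\]
First take $\sigma$ with $\cG_p^\sigma\cap\cE(G_*^{(k)})\neq\emptyset$. Under faithfulness, $G_\sigma^{(k)}\in\cE(G^{(k)}_*)$ (as in Example~\ref{ex:sp}). Markov equivalent Gaussian DAGs have the same edge count and the same product $\prod_j\widehat\omega_j$, because both equal the Gaussian MLE determinant of the same fitted model. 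Hence $\phi^{(k)}$ is constant on this set, and so is $\pi^{(k)}(\sigma)$ (the normalizing constant does not depend on $\sigma$ because $\pi_0(\sigma)\propto1$). Otherwise $|G_\sigma^{(k)}|>|G_*^{(k)}|$. I then compare $\phi^{(k)}(G_\sigma^{(k)})$ with $\phi^{(k)}(G_{\sigma^*}^{(k)})$ by applying Step 1's overfitting bound to the extra edges. Concretely, the joint Gaussian log-likelihoods differ by at most $\chi^2$-order terms, because both models contain the truth. Each additional edge therefore costs at least $t\log p$, giving strict inequality. If this last comparison needs a path of covered-edge reversals and additions (Chickering), I would invoke that sequence to reduce it to nodewise comparisons.
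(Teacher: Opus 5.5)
Structurally, your part (1) matches the paper's nodewise add-then-prune argument. You add the missing block $S^*\setminus S$ at once, where the paper adds one parent at a time. Your score-equivalence argument (``Markov equivalent Gaussian DAGs are the same model, so they have the same maximized likelihood and edge count'') is a valid replacement for the paper's rank-one covered-reversal identity.

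\textbf{Main gap: the strict inequality in part (2).} Your Step 1 compares two parent sets of one node under a \emph{single} ordering, one containing the other. But $G^{(k)}_\sigma$ and $G^{(k)}_{\sigma^*}$ are minimal I-maps for different orderings, so their parent sets are not nested node by node, and the overfitting bound has nothing to act on. Your other justification is that the log-likelihoods differ by $\chi^2$-order terms because the Gaussian model of $G^{(k)}_*$ sits inside that of $G^{(k)}_\sigma$. This is only an asymptotic heuristic. The embedding is curved, and you would need a nonasymptotic likelihood-ratio bound of order $(|G^{(k)}_\sigma|-|G^{(k)}_*|)\log p$ holding simultaneously over all $\sigma$, which your event does not supply.

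The Chickering sequence you mention conditionally is the actual argument. What is missing is why each of its edge additions is a Step-1 comparison:
\begin{itemize}
\item Take $G^{(k)}_*=G_0,\dots,G_L=G^{(k)}_\sigma$ from Theorem~\ref{thm:meek}. Covered reversals leave $\phi^{(k)}$ unchanged.
\item For an addition $G_{u-1}\to G_u$, pick any ordering $\sigma_u$ consistent with $G_u$. Then $G_{u-1}$ is consistent with $\sigma_u$ and is an I-map of $G^{(k)}_*$, so minimality gives $G^{(k)}_{\sigma_u}\subseteq G_{u-1}$.
\item At the unique node $j^*$ whose parent set grows, $\Pa_{j^*}(G^{(k)}_{\sigma_u})\subseteq\Pa_{j^*}(G_{u-1})\subsetneq\Pa_{j^*}(G_u)$ under the common ordering $\sigma_u$. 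That is exactly your case (a), so $\phi^{(k)}$ drops by at least $t\log p$ at each addition.
\item Non-equivalence forces at least one addition, so $\phi^{(k)}(G^{(k)}_\sigma)\le\phi^{(k)}(G^{(k)}_*)-t\log p$.
\end{itemize}

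\textbf{Smaller gap: case (b) of part (1).} Your event gives a relative error of order $\sqrt{(d\log p+\log K)/n}$ for each $\widehat\omega^{(k)}_j(S)$. That cannot show the sample log-variance drop retains half of the population drop, because (A4) only guarantees that drop at scale $\log p/n$ (for bounded $\overline\nu/\underline\nu$). The chi-square bound on differences you invoke applies only when the smaller set already contains $S^*$. You must bound the sample drop directly:
\begin{itemize}
\item Write $X^{(k)}_j=X^{(k)}_{S^*}\beta+\varepsilon^{(k)}_{\sigma,j}$ with $\beta=((B^{(k)}_\sigma)_{ij})_{i\in S^*}$, and let $T=S^*\setminus S$.
\item Lower bound $n_k^{-1}\|\proj^{(k)\perp}_S X^{(k)}_T\beta_T\|_2^2\ge\underline\nu\|\beta_T\|_2^2$ via restricted sample eigenvalues, since this is a Schur complement of $X^{(k)\top}_{S\cup T}X^{(k)}_{S\cup T}$.
\item Control the projected noise by a chi-square bound. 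This is what $\cB_k$ and $\cC_k$ do in the paper.
\end{itemize}

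Conversely, the ``main obstacle'' you flag is not one. Since $\sX^{(k)}_j-\sum_{i\in S^*}(B^{(k)}_\sigma)_{ij}\sX^{(k)}_i$ is independent of $\sX^{(k)}_{P^\sigma_j}$ and $S\cup S^*\subseteq P^\sigma_j$, your $b$ is exactly $\beta_T$. So the population drop is at least $|T|\,\lambda_{\min}(\Sigma^{(k)})\min_{i\in T}(B^{(k)}_\sigma)_{ij}^2$ with no loss. This also supplies the factor $|S^*\setminus S|$ that your penalty comparison uses but your stated bound omits. The squared ratio in (A4) comes from the paper's one-parent-at-a-time step (Lemma 5 of \citet{Chang2022-id}), not from this population bound.
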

\begin{proof}
    See Section~\ref{subsub:MAP}  in the supplement.
\end{proof}
According to the proposition, under the condition $E_{\max} \subseteq \bigcup_{k  = 1}^K E^*_{G^{(k)}_*},$ the posterior probability $\pi$ in~\eqref{eq:post} is maximized at two orderings, as stated in Theorem~\ref{thm:identifiability}. 
We now show that the Bayes factor between two orderings diverges to infinity if one ordering belongs to $\bigcap_{G\in \cG_*} \cL^{\cup}(\cE(G))$ while the other does not. %In the most favorable case, only two orderings attain the maximum posterior value, and any other ordering yields a distinctively lower score.

\begin{theorem}\label{thm:score-eq}

Assume the faithfulness and common ordering conditions. Suppose (A1)--(A4) hold. Then, for sufficiently large $n$, with probability at least $1-p^{-1}$,
\begin{itemize}
    \item [(1)] The posterior distribution $\pi$ in~\eqref{eq:post} satisfies Definition~\ref{def:joint_score}.
    \item [(2)] For $\sigma \in \bigcap_{G\in \cG_*} \cL^{\cup}(\cE(G))$ and $\tau \notin \bigcap_{G\in \cG_*} \cL^{\cup}(\cE(G))$,  if $p = p(n) \to \infty$, the $\alpha$-fractional Bayes factor 
$\mathrm{BF}^{\alpha}_{\sigma, \tau}(X)  \rightarrow \infty$. 
\end{itemize}
\end{theorem}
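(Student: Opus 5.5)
Both parts rest on one observation. The marginal posterior \eqref{eq:post} factorizes over the datasets. By \eqref{prior:G}, each inner sum collapses to the single MAP graph $\widehat G^{(k)}_\sigma$, so
\begin{align*}
\pi(\sigma) \propto \pi_0(\sigma)\prod_{k=1}^K \pi^{(k)}(\sigma), \qquad \pi^{(k)}(\sigma) \propto \exp\bigl(\phi^{(k)}(\widehat G^{(k)}_\sigma)\bigr).
\end{align*}
Since $\pi_0(\sigma)\propto 1$, we have $\log\pi(\sigma) = \sum_k \log \pi^{(k)}(\sigma) + \text{const}$. Hence $\pi = F(\pi^{(1)},\dots,\pi^{(K)})$ with $F$ the product map, which is strictly increasing in each coordinate on $(0,\infty)^K$.

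\textbf{Part (1).} On the event of Proposition~\ref{prop:score-eq}, which has probability at least $1-p^{-1}$, each $\pi^{(k)}$ is a score-equivalent order-based score. Together with the factorization above, this is exactly Definition~\ref{def:joint_score}. One point needs care: the proposition's event must hold simultaneously for all $k$ and $\sigma$, not for each separately. I would check that the proof of Proposition~\ref{prop:score-eq} already unions over $k\in[K]$ and over all supports of size at most $d$. Assumption (A2), through $\log K + d\log p = o(n)$, makes that union bound cost only a $p^{-1}$ probability.

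\textbf{Part (2).} The $\alpha$-fractional Bayes factor equals the posterior ratio $\pi(\sigma)/\pi(\tau)$ because the prior on orderings is uniform. Using the factorization,
\begin{align*}
\log \mathrm{BF}^{\alpha}_{\sigma,\tau}(X) = \sum_{k=1}^K \Bigl[\phi^{(k)}\bigl(G^{(k)}_\sigma\bigr) - \phi^{(k)}\bigl(G^{(k)}_\tau\bigr)\Bigr],
\end{align*}
where part (1) of Proposition~\ref{prop:score-eq} lets me replace the MAP graphs by the minimal I-maps.

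By Lemma~\ref{lemma:joint_score} and score equivalence, every summand is nonnegative. The reason is that for each $k$, $G^{(k)}_\sigma\in\cE(G^{(k)}_*)$, while $G^{(k)}_\tau$ is either also in $\cE(G^{(k)}_*)$ or is a strictly larger I-map. Moreover, some $k_0$ has $\tau\notin\cL^\cup(\cE(G^{(k_0)}_*))$. It therefore suffices to show that for this $k_0$ the difference diverges.

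I write the difference as an edge-penalty term plus a likelihood term. Under faithfulness, $|G^{(k_0)}_\tau| \ge |G^{(k_0)}_*|+1$. I would then compare the two sides node by node.

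\textbf{Main obstacle.} The delicate step is controlling the likelihood term. The residual-variance sums $\sum_j\log\widehat\omega_j$ over the two graphs can differ only by a sampling fluctuation, because both graphs are I-maps. The population terms $\sum_j \log\omega_{\sigma,j}$ coincide, since both are equal to $\log\det\Sigma^{(k_0)}$.

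To control the fluctuation, I would use a chi-square/Wishart concentration, with the restricted eigenvalues from (A1) and the sparsity from (A2). The target bound is
\begin{align*}
\frac{\alpha n+\kappa}{2}\,\Bigl|\sum_j \log \widehat\omega_j(\Pa_j(G_\sigma)) - \sum_j\log\widehat\omega_j(\Pa_j(G_\tau))\Bigr| \le \rho(\alpha+1)\log p\cdot |G_\tau \triangle G_\sigma| ,
\end{align*}
with $\rho=O(d)$. This estimate is analogous to the one used in proving Proposition~\ref{prop:score-eq}. The exact form of this bound is the step I am least sure of.

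Given the bound, assumption (A3), $c_0\ge\rho(\alpha+1)+t$, leaves a net penalty of at least $t\log p$ per extra edge. Since $p\to\infty$, this gives $\log\mathrm{BF}^{\alpha}_{\sigma,\tau}\ge t\log p\to\infty$.
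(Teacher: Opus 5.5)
Your factorization $\pi(\sigma)\propto\prod_k\pi^{(k)}(\sigma)$, your treatment of part (1), and your reduction of part (2) to one dataset $k_0$ with $\tau\notin\cL^\cup(\cE(G_*^{(k_0)}))$ all match the paper. The gap is the core estimate of part (2). You leave it open, and the route you sketch for it would not work.

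\textbf{Why the node-by-node route fails.} A node-by-node comparison of $\sum_j\log\widehat\omega_j(\Pa_j(G_\sigma))$ with $\sum_j\log\widehat\omega_j(\Pa_j(G_\tau))$ has nothing to work with. A node's parent sets under $\sigma$ and under $\tau$ are not nested. The high-probability events $\cA,\cB$ only control increments $z_{\sigma,j}^\top(\proj_{S\cup\{\ell\}}-\proj_S)z_{\sigma,j}$ for nested sets inside one ordering's predecessor set, where the residual equals $\varepsilon_{\sigma,j}$. The population identity $\sum_j\log\omega_{\sigma,j}=\log\det\Sigma$ does not carry over to sample residual variances with sparse parent sets; when $p>n$, the full-predecessor regressions are even degenerate.

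\textbf{Why the arithmetic does not close.} Even granting your target bound, it does not give the conclusion. If $G_\tau\triangle G_\sigma$ counts directed edges, the fluctuation is charged for every reversed edge. The penalty gap, however, is only $\bigl(c_0\log p+\tfrac12\log(1+\alpha/\gamma)\bigr)(|G_\tau|-|G_\sigma|)$. If many edges are reversed and one is added, $c_0\ge\rho(\alpha+1)+t$ is far from enough.

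\textbf{The missing idea.} The paper combines the Chickering sequence (Theorem~\ref{thm:meek}) with exact score equivalence of $\phi^{(k)}$.
\begin{itemize}
\item Connect $G^{(k_0)}_*=G_0,\dots,G_L=G^{(k_0)}_\tau$ by covered edge reversals and single-edge additions.
\item A covered reversal leaves $\phi^{(k_0)}$ exactly unchanged at the sample level, by the rank-one identity $\widehat\omega_i(S)\,\widehat\omega_j(S\cup\{i\})=\widehat\omega_i(S\cup\{j\})\,\widehat\omega_j(S)$. So no fluctuation from reversals ever needs bounding.
\item For an addition $G_{u-1}\to G_u$, pick an ordering $\sigma'$ consistent with $G_u$. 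Since $G_{u-1}$ is an I-map of $G^{(k_0)}_*$ consistent with $\sigma'$, we get $G^{(k_0)}_{\sigma'}\subseteq G_{u-1}\subseteq G_u$. The step is then a nested one-edge comparison at a single node within $\sigma'$. Case~1 of the proof of Proposition~\ref{prop:score-eq} gives $\phi^{(k_0)}(G_{u-1})-\phi^{(k_0)}(G_u)\ge t\log p$.
\item Since $G^{(k_0)}_\tau\notin\cE(G^{(k_0)}_*)$, at least one addition occurs. Hence $\phi^{(k_0)}(G^{(k_0)}_\sigma)-\phi^{(k_0)}(G^{(k_0)}_\tau)\ge t\log p$ and $\mathrm{BF}^\alpha_{\sigma,\tau}\ge p^{t}\to\infty$.
\end{itemize}
In effect, this proves your bound with $|G_\tau|-|G_\sigma|$ in place of $|G_\tau\triangle G_\sigma|$, which is exactly the form your final step needs.
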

\begin{proof}
    See Section~\ref{subsub:score-eq} in the supplement.
\end{proof}
We illustrate this result through simulations designed to emulate a high-dimensional asymptotic regime. See Section~\ref{subsec:highdim_simul} in the supplement.

\section{Efficient computation of the posterior distribution}

\subsection{Sampling from the ordering space}\label{subsec:compute}

Sampling-based methods are the standard approach for numerically approximating posterior distributions when closed-form expressions are available only up to a normalizing constant. In our setting, the posterior distribution is defined over the permutation space $\bbS^p$, which grows super-exponentially in $p$ and is therefore prohibitively large for the purpose of computing the normalizing constant. A common approach is to employ Markov chain Monte Carlo (MCMC), which constructs a Markov chain whose stationary distribution coincides with the target posterior.
For sampling on discrete spaces, the Metropolis–Hastings (MH) algorithm proposes a candidate state and accepts or rejects it according to an acceptance probability. The algorithm explores the state space using a predefined set of one-step transitions from the current state, defined by a neighborhood mapping $\cN: \bbS^p \to 2^{\bbS^p}$,  where $\cN(\sigma)$ specifies the set of candidate orderings that can be proposed from the current ordering $\sigma$. 
Algorithm~\ref{alg:full} presents the overview of the scheme.

\begin{algorithm}[!t]
\footnotesize
\captionsetup{font={footnotesize,stretch=1}}
\caption{Bayesian order-based joint structure learning of Gaussian DAG \\
models with random walk MH algorithm}
\label{alg:full}
\KwIn{A set of data matrices $\{X^{(k)}\}_{k=1}^{K}$ and an initial ordering $\sigma_0$}

Calculate $\{\widehat{G}_{\sigma_0}^{(k)}\}_{k=1}^K$  using forward-backward edge selection (Algorithm~\ref{alg:stepwise}), and evaluate $\pi(\sigma_0)$\\

\For{$t =  1, \ldots, T$}{
    Draw a proposal ordering $\sigma$ uniformly from $\cN(\sigma_{t-1})$\\
    Calculate $\{\widehat{G}_\sigma^{(k)}\}_{k=1}^K$  using forward-backward edge selection (Algorithm~\ref{alg:stepwise}), and evaluate $\pi(\sigma)$\\
    Set $a \leftarrow \min(\pi(\sigma)/\pi(\sigma_{t-1}), 1)$ and draw $u \sim \mathrm{Unif}[0,1]$\\
    \eIf{$u \leq a$}{
        $\sigma_t \leftarrow \sigma$; \quad         $\widehat{G}_{\sigma_t}^{(k)} \leftarrow \widehat{G}_{\sigma}^{(k)}$ for all $k \in [K]$
    }{
        $\sigma_t \leftarrow \sigma_{t-1}$; \quad         $\widehat{G}_{\sigma_t}^{(k)} \leftarrow \widehat{G}_{\sigma_{t-1}}^{(k)}$ for all $k \in [K]$
    }
    
    }
\KwOut{Sampled orderings $\{\sigma_t\}_{t=1}^T$ and corresponding DAGs $\bigl\{ \{\widehat G_{\sigma_t}^{(k)}\}_{k=1}^K \bigr\}_{t=1}^T$}
\end{algorithm}

There are two key components of Algorithm~\ref{alg:full} that require additional discussion. The first concerns evaluation of the unnormalized posterior probability at the $t$-th iteration. To compute the posterior value at a given ordering $\sigma$, it is necessary to obtain the corresponding MAP DAG $\widehat{G}_\sigma$  defined in~\eqref{eq:MAP}. Even for a given ordering, the number of candidate DAGs grows on the order of $p^d$, making exhaustive search impractical in high-dimensional settings. We estimate the DAGs by performing forward–backward edge selection. This procedure exploits the decomposability of the score function $\phi^{(k)}$ in \eqref{eq:score} \citep{chickering2002learning, friedman2003being}, which we write as
$\phi^{(k)}(G^{(k)}) = \sum_{j =1}^p \phi^{(k)}_j(\Pa_j(G^{(k)})) $, where 
\begin{equation}\label{eq:nodewise}
        \phi^{(k)}_j(S) = -\!\left(
c_0 \log p
+ \frac{1}{2}\log\!\left(1+\frac{\alpha}{\gamma}\right)
\right) |S|
- \frac{\alpha  n_k + \kappa}{2}
\log \bigl(n_k\,\widehat{\omega}^{(k)}_j(S)\bigr),
\end{equation}
for an index set $S$. Owing to this decomposable structure, for each node $j$, parent set selection can be carried out independently for each node given the ordering $\sigma$. 
%We initialize its parent set as empty and greedily add nodes in $P_j^\sigma$ that yield the largest improvement in the local score, continuing until no further improvement is possible. We then perform a backward pruning step, iteratively removing any parent whose deletion increases the score. 
The procedure is described in Algorithm~\ref{alg:stepwise} in the supplement (Section~\ref{sec:forwardbackward}). We prove its consistency.
\begin{proposition}\label{prop:MAP}
    Assume the faithfulness condition. Suppose (A1)--(A4) hold, and the forward step in Algorithm~\ref{alg:stepwise} terminates with at most $d$ parents per node. Then, for any ordering $\sigma \in \bbS^p$ and $k \in [K]$, Algorithm~\ref{alg:stepwise} outputs $\widehat{G}_\sigma^{(k)} = G_\sigma^{(k)}$ for sufficiently large $n$, with probability at least $1 - p^{-1}$.
\end{proposition}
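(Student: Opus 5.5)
The plan is to reduce the claim to uniform concentration bounds on residual variances, then analyze the forward and backward phases of Algorithm~\ref{alg:stepwise} separately, node by node. The score $\phi^{(k)}$ decomposes as $\sum_j \phi^{(k)}_j(\Pa_j)$, and the algorithm acts independently on each node. It therefore suffices to show, for fixed $k$ and $\sigma$ and each node $j$ with true parent set $T_j := \Pa_j(G_\sigma^{(k)}) \subseteq P_j^\sigma$, that the forward–backward procedure started from $\emptyset$ returns exactly $T_j$. We then take a union bound over $j\in[p]$, $k\in[K]$ and the relevant subsets. All candidate sets visited have size at most $d+1$, because the forward step stops with at most $d$ parents. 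The union bound is then over at most $Kp\cdot p^{d+1}$ events. This is affordable, with failure probability at most $p^{-1}$, by (A2) ($\log K + d\log p = o(n)$, $K=o(p)$) together with the Gaussian concentration bounds already used in the proof of Proposition~\ref{prop:score-eq}(1).

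\textbf{Concentration step.} On a high-probability event $\Omega_n$, for all $j$ and all $S$ with $|S|\le d+1$ and $i\notin S$, I would establish
\[
\log\frac{\widehat\omega_j^{(k)}(S)}{\widehat\omega_j^{(k)}(S\cup\{i\})}
\]
two-sided bounds of the following form. If $\beta_{ij\mid S}=0$ in the population projection, i.e.\ $\sX_i\perp\!\!\!\perp \sX_j\mid \sX_S$, the log ratio is $\chi^2_1$-type noise, at most $C\log p/n$. If instead the partial correlation is nonzero, the gain is at least a constant multiple of $\frac{\underline\nu^2}{\overline\nu^2}\min|B|^2$, which by (A4) exceeds $16 c_0\log p/(\alpha n)$ up to the concentration slack. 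The eigenvalue bound (A1), with the $(1\pm\delta_0)^2$ margins, controls the sample Gram matrices uniformly over $|S|\le d+1$. Comparing with the per-edge penalty $c_0\log p+\tfrac12\log(1+\alpha/\gamma)$ against the likelihood factor $(\alpha n_k+\kappa)/2$ then gives the key dichotomy. Adding $i$ to $S$ increases $\phi_j^{(k)}$ if and only if $\sX_i\not\!\perp\!\!\!\perp \sX_j\mid\sX_S$, and similarly for deletion. Here (A3) makes the penalty dominate the noise and (A4) makes the signal dominate the penalty.

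\textbf{Forward step.} While $T_j\not\subseteq S$, some $i\in T_j\setminus S$ is not conditionally independent of $j$ given $S$; otherwise, by the Markov property of the projection and faithfulness, $S$ would already make $j$ independent of $T_j\setminus S$. Hence a score-improving move exists and the forward phase cannot stop early. Any node it adds is conditionally dependent on $j$ given the current $S$. By assumption it terminates with $|S|\le d$, and at termination no $i\in P_j^\sigma\setminus S$ is dependent given $S$. Taking $i$ ranging over $T_j$, this yields $T_j\subseteq S$.

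\textbf{Backward step.} Starting from $S\supseteq T_j$, any $i\in S\setminus T_j$ satisfies $\sX_i\perp\!\!\!\perp\sX_j\mid \sX_{S\setminus\{i\}}$. This holds because $\sX_j$ given $\sX_{P_j^\sigma}$ depends only on $T_j$, and $T_j\subseteq S\setminus\{i\}$. So deleting $i$ increases the score, and the phase keeps deleting. Throughout, the set still contains $T_j$, so the same argument applies at each stage. No $t\in T_j$ is ever deleted, since $t$ remains dependent given any superset of $T_j\setminus\{t\}$ inside $P_j^\sigma$ by minimality of the I-map. The output is therefore $T_j$.

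The main obstacle is the uniform signal lower bound in the concentration step. (A4) is stated for coefficients of projections onto full predecessor sets $P_j^\sigma$, but in the forward phase we need detectable gains for projections onto intermediate sets $S$. I would first try to transfer the bound through the Schur-complement identity. Under (A1), the partial covariance of $(\sX_i,\sX_j)$ given $\sX_S$ is bounded below by $\underline\nu$ times the full-projection coefficient of some $t\in T_j$. If this fails, I would instead argue via a set containing $S\cup T_j$ of size at most $2d$, which still fits within (A2).
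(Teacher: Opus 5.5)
Your forward/backward skeleton matches the paper's: the forward phase cannot stop while a true parent is missing, and the backward phase removes every non-parent and never a true parent. The difference is in how you supply the quantitative steps.

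The paper builds no new concentration argument and no population-level dichotomy. On the event $\cA\cap\cB\cap\cC$, which has probability at least $1-p^{-1}$, it reuses Case~1 and Case~2 from the proof of Proposition~\ref{prop:score-eq}(1):
\begin{itemize}
\item Case~1: deleting any $\ell\in S\setminus S^*$ from $S\supsetneq S^*$ raises $\phi^{(k)}_j$.
\item Case~2: from any $S$ with $|S|\le d$ and $S^*\not\subseteq S$, adding the maximizing $\ell\in S^*\setminus S$ raises $\phi^{(k)}_j$. Applied to $S\setminus\{m\}$, it is also what keeps each $m\in S^*$ during the backward phase.
\end{itemize}

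Your ``main obstacle'' is exactly Case~2, and your Schur-complement fix is its population analogue. With $\beta=(B^{(k)}_\sigma)_{S^*,j}$ and $R=S^*\setminus S$, one has $\mathrm{Cov}(\sX^{(k)}_R,\sX^{(k)}_j\mid\sX^{(k)}_S)=\Sigma^{(k)}_{RR\cdot S}\beta_R$. So the best $u\in R$ has partial covariance at least $\lambda_{\min}(\Sigma^{(k)})\min_{u\in R}|\beta_u|$.

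The paper does the same thing at the sample level via Lemma~5 of \citet{Chang2022-id}. It splits $X^{(k)}_j=Z^{(k)}_j+\varepsilon^{(k)}_{\sigma,j}$ and lower-bounds the projected signal by $16c_0\overline{\nu}\log p/\alpha$ using $\cC$ on $|S\cup S^*|\le 2d$, which is your fallback. It then bounds the projected noise using $\cB$ and combines the two by the triangle inequality. This route avoids transferring population partial correlations to sample ones, where the uniform error is of the same order as the signal, so constants would need careful tracking. It also makes your separate union bound unnecessary.

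One correction: the ``if and only if'' dichotomy in your concentration step is false as stated. (A4) bounds only the nonzero entries of $B^{(k)}_\sigma$, i.e.\ regression coefficients given sets containing $S^*$. Strong faithfulness is not assumed. So dependence between $\sX_i$ and $\sX_j$ given an intermediate $S$ may be nonzero yet undetectable.

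It can even vanish for an individual true parent. Take $p=3$, $G^{(k)}_*$ equal to $a\to b\leftarrow j$, and $\sigma=(a,b,j)$. Then $a\in\Pa_j(G^{(k)}_\sigma)$, but $\sX_a\perp\!\!\!\perp\sX_j$. In particular, ``at termination no $i\in P^\sigma_j\setminus S$ is dependent given $S$'' need not hold.

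What your argument actually uses, and what is true, are three one-sided statements:
\begin{itemize}
\item Conditional independence given $S$ produces only a noise-level gain, of order $\rho\log p$ with $\rho=O(d)$ coming from the union bound. This is why (A3) ties $c_0$ to $\rho$.
\item From any $S\not\supseteq S^*$, at least one true parent has a gain exceeding the penalty.
\item For $S\supseteq S^*$, the coefficient of each $u\in S^*$ in the regression on $S$ is exactly $(B^{(k)}_\sigma)_{uj}$, so (A4) applies directly in the backward phase.
\end{itemize}
With the dichotomy weakened in this way, your proof goes through.
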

\begin{proof}
    See Section~\ref{subsub:MAP_stepwise}  in the supplement.
\end{proof}
The condition for the forward step output is shown to hold with high probability when choosing $d = O(d^*)$~\citep{an2008stepwise}.

\begin{remark}[On parallel computation]\label{remark:parallel}
When evaluating the posterior at the \(t\)-th iteration, computation can be substantially accelerated through parallelization. Although the computational cost grows linearly with the number of datasets \(K\), the posterior contributions corresponding to different datasets can be evaluated independently. Consequently, the computation can be fully parallelized across \(K\) cores, resulting in wall-clock time comparable to the single dataset case. Moreover, owing to the decomposability across nodes, computation can be further parallelized across the \(p\) local node–parent scores within each dataset. With \(Kp\) available cores, the overall computation can be accelerated by up to a factor of \(Kp\).
\end{remark}

\subsection{Random-to-random proposal neighborhood}\label{subsec:r2r}

The mixing behavior of the Markov chain in Algorithm~\ref{alg:full} critically depends on the choice of the proposal neighborhood $\cN$.
We propose the random-to-random neighborhood $\mathcal{N}_{\mathrm{R2R}}$, motivated by the classical random-to-random shuffle~\citep{bernstein2019cutoff}, and define it using the operators $\operatorname{R2R}_{<}$ and $\operatorname{R2R}_{>}$.
Specifically, $\operatorname{R2R}_{<}(\sigma, i, j)$ denotes the operator that produces a new permutation obtained from $\sigma$ by removing the element at position $j$ and inserting it before the element at position $i$, and can be expressed as
\begin{align*}%\label{eq:r2r}
    \mathrm{R2R}_{<}(\sigma, i, j)\,:\, (\sigma(1), \dots, \sigma(i), \dots,  \sigma(j),  \dots, \sigma(p))  \mapsto (\sigma(1), \dots, \sigma(j), \sigma(i),  \dots, \sigma(p)),
\end{align*}
for $i < j$. Similarly, $ \operatorname{R2R}_{>}(\sigma, i, j)$ inserts the element at position $j$  after the element at position $i$, and can be written as, for $i > j$,
\begin{align*}%\label{eq:r2r-rev}
    \mathrm{R2R}_{>}(\sigma, i, j):\, (\sigma(1), \dots, \sigma(j), \dots,  \sigma(i),  \dots, \sigma(p))  \mapsto (\sigma(1), \dots, \sigma(i), \sigma(j),  \dots, \sigma(p)),
\end{align*}
Finally, for $i \neq j$, we have $\operatorname{R2R}(\sigma,i,j)
= \operatorname{R2R}_{<}(\sigma,i,j)$ if $i<j$, and
$\operatorname{R2R}_{>}(\sigma,i,j)$ if $i>j$,
and the corresponding neighborhood is given by $\cN_{\mathrm{R2R}}(\sigma) = \{\sigma' \in \bbS^p: \sigma' = \operatorname{R2R}(\sigma, i, j) \text{ for } i \neq j\}$. A recent study has shown that a hill climbing algorithm using the neighborhood $\mathcal{N}_{\mathrm{R2R}_{<}}$ is particularly effective for order-based structure learning, whereas other neighborhoods of the same size exhibit suboptimal performance; thus, it is recommended that the constructed neighborhood include at least $\mathcal{N}_{\mathrm{R2R}_{<}}$~\citep{chang2025identifiability}. 
%Following \citet{Zhou2023-cm}, who showed that greedy search and random-walk Metropolis–Hastings algorithms have similar computational complexity, we incorporate the neighborhood $\mathcal{N}_{\mathrm{R2R}_{<}}$ into our algorithm.
To ensure that the Markov chain converges to the posterior distribution, we include $\mathcal{N}_{\mathrm{R2R}_{>}}$ so that the resulting neighborhood $\mathcal{N}_{\mathrm{R2R}}$ is symmetric, that is,  $\sigma' \in \cN_{\mathrm{R2R}}(\sigma)$ if and only if  $\sigma \in \cN_{\mathrm{R2R}}(\sigma')$, for the reversibility of the chain. Note that the neighborhood size is $(p-1)^2$ after removing duplicates. The following proposition further explains why the neighborhood $\mathcal{N}_{\mathrm{R2R}_{<}}$ is effective. 

\begin{proposition}\label{prop:covered_minimap}
    Let $G_\sigma$ be the minimal I-map of the true DAG $G^*$ with respect to $\sigma \in \bbS^p$. If $G_\sigma \neq G^*$, then there exists an ordering $\tau \in \mathcal N_{\mathrm{R2R}_{<}}(\sigma)$ such that  $|G_\tau| \leq |G_\sigma|$. 
\end{proposition}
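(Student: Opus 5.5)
The plan is to use Chickering's transformational characterization of the independence-map relation to locate a covered edge in $G_\sigma$. Reversing that edge should correspond to a single $\mathrm{R2R}_{<}$ move.

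Under faithfulness, $G_\sigma$ is an I-map of $G^*$, i.e.\ $\CI(G_\sigma)\subseteq \CI(G^*)$. By Chickering's theorem \citep{chickering2002optimal}, there is then a sequence $G^*=G_0,G_1,\dots,G_m=G_\sigma$ in which each step is either a single edge addition or a covered edge reversal, and every $G_l$ is an I-map of $G^*$. Since $G_\sigma\neq G^*$, we have $m\ge 1$, and I would look at the last step $G_{m-1}\to G_\sigma$.

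If that step is an edge addition, then $G_{m-1}$ is a proper subgraph of $G_\sigma$ and an I-map. It is also consistent with $\sigma$, being a subgraph of $G_\sigma$. This contradicts the minimality of $G_\sigma$. Hence the last step is a covered edge reversal. Since reversing a covered edge leaves it covered, $G_\sigma$ contains a covered edge $i\to j$, meaning $\Pa_j(G_\sigma)=\Pa_i(G_\sigma)\cup\{i\}$. The graph $G':=G_{m-1}$ is $G_\sigma$ with this edge reversed, is Markov equivalent to $G_\sigma$, and has $|G'|=|G_\sigma|$.

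Next I construct $\tau$. Let $a=\sigma^{-1}(i)<b=\sigma^{-1}(j)$ and set $\tau=\mathrm{R2R}_{<}(\sigma,a,b)$, which moves $j$ to just before $i$. I claim $G'$ is consistent with $\tau$. The only relative orders that change are those between $j$ and the nodes at positions $a,\dots,b-1$ of $\sigma$, i.e.\ $i$ and the nodes strictly between $i$ and $j$.
\begin{itemize}
\item Every edge $k\to j$ in $G_\sigma$ has $k\in\Pa_i(G_\sigma)\cup\{i\}$ by coveredness.
\item So no parent of $j$ lies strictly between $i$ and $j$ in $\sigma$, and the only edge between $j$ and $i$ is $j\to i$ in $G'$.
\item Edges $j\to k$ with $k$ after position $b$ are unaffected.
\end{itemize}
Hence all edges of $G'$ respect $\tau$.

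Finally, I use the standard fact that the minimal I-map with respect to $\tau$ is contained in every I-map consistent with $\tau$. The argument is node by node: for each node $v$, $\Pa_v(G_\tau)$ is the unique minimal subset $S\subseteq P_v^\tau$ with $\sX_v\perp\!\!\!\perp \sX_{P_v^\tau\setminus S}\mid \sX_S$. Uniqueness uses the intersection property, which holds under faithfulness and also for Gaussians. Any consistent I-map supplies such an $S$, which must contain the minimal one. Therefore $G_\tau\subseteq G'$ and $|G_\tau|\le|G'|=|G_\sigma|$.

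The main obstacle is the second step: extracting a covered edge of $G_\sigma$ itself, rather than of some intermediate graph, from Chickering's sequence. Looking at the last step and invoking minimality to rule out an addition handles this. A secondary point is that Chickering's sequence preserves the I-map property at every step, which I would cite directly from \citet{chickering2002optimal}.
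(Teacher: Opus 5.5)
Your proposal is correct and follows essentially the same route as the paper's proof. The paper likewise takes the Chickering sequence from $G^*$ to $G_\sigma$ and uses minimality to rule out a final edge addition, which yields a covered edge $i\to j$ in $G_\sigma$; it then checks that $G_{m-1}$ is consistent with $\tau=\mathrm{R2R}_{<}(\sigma,\sigma^{-1}(i),\sigma^{-1}(j))$ and concludes $|G_\tau|\le|G_{m-1}|=|G_\sigma|$. Your version is somewhat more explicit, since you spell out the edge-by-edge consistency check and justify the containment $G_\tau\subseteq G_{m-1}$ via the intersection property, both of which the paper only asserts.
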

\begin{proof}
See Section~\ref{subsub:covered_minimap} in the supplement.
\end{proof}
\begin{corollary}\label{cor:covered}
Assume the faithfulness condition. Suppose (A1)--(A4) hold.  For any ordering $\sigma \notin \mathcal L(G^{(k)})$, there exists an ordering $\tau \in \mathcal N_{\mathrm{R2R}_{<}}(\sigma)$ such that $\pi^{(k)}(\tau)\geq \pi^{(k)}(\sigma)$. 
\end{corollary}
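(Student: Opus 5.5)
The plan is to combine Proposition~\ref{prop:covered_minimap}, applied to the $k$-th dataset, with Proposition~\ref{prop:score-eq}(1). Together these say that, on the high-probability event, $\pi^{(k)}(\sigma)$ is a strictly decreasing function of $|G^{(k)}_\sigma|$ alone.

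\textbf{Step 1: reduce the posterior to an edge count.} Work on the event of probability at least $1-p^{-1}$ from Proposition~\ref{prop:score-eq}. On this event, for every ordering $\sigma$, the MAP $\widehat G^{(k)}_\sigma$ equals the minimal I-map $G^{(k)}_\sigma$. Because the prior~\eqref{prior:G} puts mass only on $\widehat G^{(k)}_\sigma$, this gives $\pi^{(k)}(\sigma)\propto \exp(\phi^{(k)}(G^{(k)}_\sigma))$.

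The score $\phi^{(k)}$ has two parts, an edge penalty and a sum of log residual variances. Under Gaussianity the population residual variances of the minimal I-maps satisfy $\prod_j \omega^{(k)}_{\sigma,j}=\det\Sigma^{(k)}$ for every $\sigma$. I would then argue that the sample log-determinant term is essentially the same for all orderings, so the comparison is driven by the penalty $-(c_0\log p+\tfrac12\log(1+\alpha/\gamma))|G^{(k)}_\sigma|$.

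The cleanest way to make this rigorous is to reuse the argument behind Proposition~\ref{prop:score-eq}(2), which shows score equivalence. That argument must already compare $\pi^{(k)}(\sigma)$ and $\pi^{(k)}(\tau)$ through $|G_\sigma^{(k)}|$ versus $|G_\tau^{(k)}|$. Concretely, I would establish
\[
|G^{(k)}_\tau|\le |G^{(k)}_\sigma| \;\Longrightarrow\; \pi^{(k)}(\tau)\ge \pi^{(k)}(\sigma),
\]
at least for $\tau\in\mathcal N_{\mathrm{R2R}_<}(\sigma)$, where the two minimal I-maps differ only locally.

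\textbf{Step 2: produce the neighbor.} Take $\sigma\notin\mathcal L(G^{(k)})$. By faithfulness, $G^{(k)}_\sigma$ is the minimal I-map of $G^{(k)}_*$ with respect to $\sigma$. Every edge of $G^{(k)}_\sigma$ is compatible with $\sigma$, while $G^{(k)}_*$ contains an edge that $\sigma$ orders backwards, so $G^{(k)}_\sigma\neq G^{(k)}_*$. Proposition~\ref{prop:covered_minimap} then supplies $\tau\in\mathcal N_{\mathrm{R2R}_<}(\sigma)$ with $|G^{(k)}_\tau|\le|G^{(k)}_\sigma|$. Step 1 converts this into $\pi^{(k)}(\tau)\ge\pi^{(k)}(\sigma)$.

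\textbf{Main obstacle.} The hard case is equality of edge counts, $|G_\tau^{(k)}|=|G_\sigma^{(k)}|$. There the likelihood terms must also agree, not just be dominated by the penalty. My first attempt is to use the structure from the proof of Proposition~\ref{prop:covered_minimap}: the move is a covered-edge-type reversal (Chickering sequence). If the minimal I-maps are Markov equivalent, i.e.\ differ by a covered edge reversal, the residual-sum-of-squares products coincide exactly in the sample, because a covered edge reversal preserves the Gaussian likelihood. Then $\phi^{(k)}$ agrees and the inequality holds with equality.

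When $|G_\tau^{(k)}|<|G_\sigma^{(k)}|$, the penalty gap of at least $c_0\log p$ must dominate sample fluctuations in the log-determinant terms. I would control those fluctuations by the uniform concentration bounds under (A1)--(A4) already used in Proposition~\ref{prop:score-eq}, with the lower bound on $c_0$ in (A3) making the penalty dominate.
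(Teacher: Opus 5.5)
Your choice of $\tau$ (the neighbor produced in Proposition~\ref{prop:covered_minimap}) is correct, and so is your use of Proposition~\ref{prop:score-eq}(1) to write $\pi^{(k)}(\sigma)\propto\exp\{\phi^{(k)}(G^{(k)}_\sigma)\}$. There is, however, a genuine gap in the step that turns $|G^{(k)}_\tau|\le|G^{(k)}_\sigma|$ into $\pi^{(k)}(\tau)\ge\pi^{(k)}(\sigma)$.

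\textbf{Step~1 is false in the sample.} The claim that $\pi^{(k)}(\sigma)$ depends on $\sigma$ only through $|G^{(k)}_\sigma|$ does not hold. The identity $\prod_j\omega^{(k)}_{\sigma,j}=\det\Sigma^{(k)}$ is a population identity, and non-equivalent minimal I-maps with equal edge counts generically have different values of $\sum_j\log\widehat\omega^{(k)}_j(\Pa_j)$. For instance, let $G^{(k)}$ consist of two disjoint v-structures $1\to3\leftarrow2$ and $4\to6\leftarrow5$. The orderings $(3,1,2,4,5,6)$ and $(1,2,3,6,4,5)$ both give five-edge minimal I-maps. Yet their log-RSS sums differ by $\log(1-\hat r_{12}^2)-\log(1-\hat r_{45}^2)$, where $\hat r_{ab}$ is the sample correlation of $X^{(k)}_a$ and $X^{(k)}_b$, and this is almost surely nonzero.

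\textbf{Proposition~\ref{prop:score-eq}(2) does not rescue it.} Its proof does not give such a monotonicity; it only compares an ordering with the optimum along a Chickering path.

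\textbf{The strict case fails as stated.} You propose letting a penalty gap of order $c_0\log p$ dominate ``fluctuations in the log-determinant terms.'' But at the two endpoints of the reversed covered edge, the node-wise scores change by amounts of order $n$. These are signal, not noise, and they cancel only through the exact covered-reversal identity. Meanwhile the events $\cA_k,\cB_k,\cC_k$ and (A4) only deliver comparisons between nested parent sets differing by one variable. They say nothing directly about the non-nested configurations $G^{(k)}_\tau$ versus $G^{(k)}_\sigma$.

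\textbf{The equal-count case is assumed, not proved.} There you take for granted that $G^{(k)}_\tau$ is Markov equivalent to $G^{(k)}_\sigma$.

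\textbf{The missing idea} is to compare through the penultimate DAG $G_{L-1}$ of a Chickering sequence $G^{(k)}=G_0,\dots,G_L=G^{(k)}_\sigma$; this is how the paper argues.
\begin{itemize}
\item Minimality of $G^{(k)}_\sigma$ rules out an edge addition as the last step. So $G^{(k)}_\sigma$ arises from $G_{L-1}$ by reversing a covered edge $j\to i$ into $i\to j$.
\item Hence $\phi^{(k)}(G_{L-1})=\phi^{(k)}(G^{(k)}_\sigma)$ exactly, by the covered-reversal identity you already cite.
\item $G_{L-1}$ is an I-map consistent with $\tau=\mathrm{R2R}_<(\sigma,\sigma^{-1}(i),\sigma^{-1}(j))$.
\item The minimal I-map with respect to $\tau$ is contained in every $\tau$-consistent I-map, so $G^{(k)}_\tau\subseteq G_{L-1}$.
\item Thus $G^{(k)}_\tau$ is reached from $G_{L-1}$ by edge deletions, each removing a non-parent from a superset of $\Pa_m(G^{(k)}_\tau)$. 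By Case~1 in the proof of Proposition~\ref{prop:score-eq}, each such deletion strictly increases $\phi^{(k)}$.
\end{itemize}
This gives $\phi^{(k)}(G^{(k)}_\tau)\ge\phi^{(k)}(G_{L-1})=\phi^{(k)}(G^{(k)}_\sigma)$, with equality exactly when $G^{(k)}_\tau=G_{L-1}$. In particular, equal edge counts force $G^{(k)}_\tau=G_{L-1}$, which is Markov equivalent to $G^{(k)}_\sigma$. Proposition~\ref{prop:score-eq}(1) then yields $\pi^{(k)}(\tau)\ge\pi^{(k)}(\sigma)$, with no global relation between posterior and edge count needed.
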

\begin{proof}
See Section~\ref{subsub:covered} in the supplement.
\end{proof}
Proposition~\ref{prop:covered_minimap} shows that the neighborhood $\mathcal{N}_{\mathrm{R2R}_{<}}$ contains an operator that moves from one minimal I-map to another without increasing the number of edges. In this sense, the operator emulates a step in the Chickering sequence~\citep{chickering2002optimal}, namely a covered edge reversal combined with the deletion of unnecessary edges. With repeated applications, these moves allow transitions to a Markov equivalent DAG of the true DAG. A related scheme was introduced in~\cite{solus2017consistency}, where a depth-first search strategy based on covered edge reversals combined with edge deletions was proposed and shown to be consistent. Our contribution is novel in that it interprets these operations as proposal operators in the ordering space and shows that the R2R neighborhood implicitly includes covered edge reversals combined with edge deletions.  Corollary~\ref{cor:covered} establishes the existence of a monotone scoring path in the ordering space with respect to each individual score. We note that analyzing such scoring path under the joint score is considerably more involved. 

\section{Simulation study}\label{sec:simulation}

\noindent
\textbf{Data simulation.}
A data matrix $X$ is generated according to the linear structural equation model in~\eqref{eq:ln.str.eq} using the following standard scheme. We first fix an ordering $\sigma \in \bbS^p$. Given the ordering, we generate a $p$-node random DAG $G$ by connecting each ordered pair of distinct nodes $(\sigma(i), \sigma(j))$ with $i < j$ by a directed edge $\sigma(i)\to \sigma(j)$ independently with probability $p_\mathrm{edge}$. We set $p_\mathrm{edge} = 3/(2p-2)$ unless stated otherwise.
Given $G$, we sample each edge weight $B_{ij}$ for $i\to j\in G$ independently from the uniform distribution on $[-1, -0.5] \cup [0.5, 1]$, and set all error variances to one. To match the sample size used in the real data example, we fix $n_k = 1000$ for all $k\in[K]$.

\noindent
\textbf{Hyperparameters.}
Unless otherwise specified, we run Algorithm~\ref{alg:full} with hyperparameters $\alpha= 0.99$, $\gamma = 0.01$, $\kappa = 0$, $c_0 = 3$, $d = p$, and run the MCMC sampler for $T = 20 \,p^2$ iterations, using samples after a burn-in period $T' = T/2$.

\noindent
\textbf{Evaluation criteria.}
For a DAG \(G\), let \(\Gamma^G = (\Gamma^G_{ij}) \in \{0,1\}^{p \times p}\) denote its adjacency matrix, where
\(\Gamma^G_{ij} = 1\) if and only if there is a directed edge \(i \to j\) in \(G\), and \(\Gamma^G_{ij} = 0\) otherwise. Since we take a Bayesian approach, we consider the posterior mean DAG $\widehat{G}_\pi$ and its weighted adjacency matrix $\Gamma^\pi \in [0,1]^{p \times p}$, computed from the Markov chain sample $\{ G_{t}\}_{t = 1}^{T}$. Each entry of $\Gamma^\pi$ is defined as the empirical posterior edge inclusion probability
\begin{align}\label{eq:pip}
    \Gamma^{\pi}_{ij} = \widehat{\bbP}_\pi \left(i\to j \in G \right) = \frac{1}{T} \sum_{t=1}^T \ind \left\{i\to j \in G_{t}\right\}.
\end{align}
We define the Hamming distance between any two DAGs \(G\) and \(G'\) as $\mathrm{HD}(G, G') = \sum_{i,j} |\Gamma^G_{ij} - \Gamma^{G'}_{ij}|,$
which measures the edge-wise difference between \(G\) and \(G'\). To measure estimation accuracy, we compute the average Hamming distance between the true DAGs and their estimates across all \(K\) data sources
\begin{align}\label{eq:delta}
\Delta
\;=\;
\frac{1}{K}\sum_{k=1}^K 
\mathrm{HD}\!\left(G^{(k)}, \widehat{G}^{(k)}\right).
\end{align}
When simulations are conducted under the common ordering assumption with true ordering $\sigma^*$, we measure the posterior mean rank correlation (Kendall's tau) from the Markov chain sample $\{ \sigma_t\}_{t=1}^T$ as
\begin{align}\label{eq:tau}
\tau^* \;=\; \frac{1}{T}\sum_{t=1}^T \tau_{\mathrm{Kendall}}\!\left(\sigma^*, \sigma_t\right),
\end{align}
where values of $\tau^*$ closer to 1 indicate the algorithm successfully recovers the causal ordering.
True positive rate (TPR) and false discovery rate (FDR) between any true DAG \(G\) and estimated DAG \(\widehat{G}\) are defined as $\sum_{i,j}\Gamma^G_{ij}\Gamma^{\widehat{G}}_{ij}/|G|$, and $ \sum_{i,j}(1-\Gamma^{G}_{ij})\Gamma^{\widehat{G}}_{ij}/|\widehat{G}|$, respectively. 

 %false positive rate (FPR),$\sum_{i,j}(1-\Gamma^G_{ij})\Gamma^{\widehat{G}}_{ij}/\sum_{i,j}(1-\Gamma^G_{ij})$, 

%To compare with other methods, we compute the true positive rate (TPR), the false positive rate (FPR), and the false discovery rate (FDR). We define the true positive edges between two DAGs $G$ and $G'$ as $\mathrm{TP} = \sum_{i,j}\mathbb{I}\!\left\{\Gamma^{G'}_{ij}=1 ,\Gamma^{G}_{ij}=1\right\}$, true negative edges between two DAGs

\subsection{Performance evaluation}\label{subsec:performance}

We assess the performance of the proposed method and examine how increasing the number of datasets
$K$ improves performance. We fix $\sigma^* = (1, \dots, p)$, and generate $K$ random DAGs with $p$ nodes.  For each graph $G^{(k)}$, $k \in[K]$, we generate the data matrix $X^{(k)}$ as described above. Given the simulated datasets \(\{X^{(k)}\}_{k = 1}^K\), we apply Algorithm~\ref{alg:full} with the proposal neighborhood $\cN_{\mathrm{R2R}}$. For each experiment, we calculate the posterior Hamming distance $\Delta$, and measure the posterior mean rank correlation $\tau^*$. 
\begin{table}[!t]
\centering
%\footnotesize
\captionsetup{font={footnotesize,stretch=1}}
\renewcommand{\arraystretch}{0.8}
\begin{tabular}{ccccccc}
\toprule
\multirow{2}{*}{} & \multicolumn{5}{c}{$p$} \\
\cmidrule(lr){3-7}
  & & 5 & 10 & 20 & 40 & 100\\ 
\midrule 
\multirow{2}{*}{$K$=1} 
 & $\Delta$ & 1.55 $\pm$ 0.04 & 2.22 $\pm$ 0.18 & 4.99 $\pm$ 0.27 & 9.18 $\pm$ 0.38 & 19.35 $\pm$ 0.45\\
  & $\tau^*$ & 0.25 $\pm$ 0.01 & 0.25 $\pm$ 0.02 & 0.23 $\pm$ 0.01 & 0.20 $\pm$ 0.01 & 0.20 $\pm$ 0.01\\
\midrule 
\multirow{2}{*}{$K$=5}
 & $\Delta$ & 0.98 $\pm$ 0.10 & 1.07 $\pm$ 0.06 & 2.53 $\pm$ 0.14 & 4.44 $\pm$ 0.16 & 10.40 $\pm$ 0.23\\
 & $\tau^*$ & 0.67 $\pm$ 0.03 & 0.69 $\pm$ 0.02 & 0.63 $\pm$ 0.01 & 0.63 $\pm$ 0.01 & 0.62 $\pm$ 0.00\\
 \midrule
\multirow{2}{*}{$K$=10}
 & $\Delta$ & 0.65 $\pm$ 0.06 & 0.72 $\pm$ 0.06 & 1.47 $\pm$ 0.09 & 2.52 $\pm$ 0.11 & 6.64 $\pm$ 0.14\\
 & $\tau^*$ & 0.81 $\pm$ 0.02 & 0.84 $\pm$ 0.01 & 0.81 $\pm$ 0.01 & 0.80 $\pm$ 0.01 & 0.78 $\pm$ 0.00\\
\midrule
\multirow{2}{*}{$K$=20} 
 & $\Delta$ & 0.43 $\pm$ 0.04 & 0.38 $\pm$ 0.04 & 0.73 $\pm$ 0.04 & 1.47 $\pm$ 0.06 & 4.01 $\pm$ 0.09\\
 & $\tau^*$ & 0.88 $\pm$ 0.01 & 0.93 $\pm$ 0.01 & 0.92 $\pm$ 0.00 & 0.90 $\pm$ 0.00 & 0.88 $\pm$ 0.00\\
\bottomrule
\end{tabular}
\caption{The average Hamming distance $\Delta$ and average rank correlation $\tau^*$ for varying $K$ and $p$. Each value represents $\mathrm{mean}\pm 1$ standard error.}
\label{tab:performance}
\end{table}
Table~\ref{tab:performance} reports the results of applying the proposed algorithm for different number of variables $p \in \{5, 10, 20, 40, 100\}$ and different number of sources $K \in \{1, 5, 10 ,20\}$. For each setting, we repeat the experiment 50 times.
Although we fix the sample size at $n_k = 1000$ in the simulations, the performance is very similar to that obtained with a smaller sample size of $n_k = 200$ (see Section~\ref{subsec:performance_supp} in the supplement).
Notably, the proposed algorithm can handle the setting with $K = 20$, whereas existing methods typically consider $K < 10$ ~\citep{Wang2020-of, Li2024-af, lee2022bayesian}. Moreover, it scales to $p = 100$, where the search space has a size of $100! \approx 9 \times 10^{157}$. The wall time for $p=100$ and $K =20$ was 2.5 hours for 10,000 iterations.
In the single DAG estimation case ($K = 1$), the values of $\tau^*$ are close to 0, indicating weak or no association between the estimated and true orderings. This implies that the algorithm is unable to identify the direction of edges among the Markov equivalent DAGs, which is also reflected in the large values of $\Delta$. As the number of sources $K$ increases,  we observe that $\Delta$ decreases uniformly, and $\tau^*$ gets larger and approaches 1.  
Importantly, we emphasize that the performance improvement observed for larger $K$ is not attributable to an increased total sample size. To demonstrate this, we conduct additional simulations in which the total sample size is fixed; the results are deferred to Section~\ref{subsec:performance_supp} in the supplement.

\subsection{MCMC convergence analysis} \label{subsec:conv}

We investigate the mixing behavior of the Markov chain (Algorithm~\ref{alg:full}) with the R2R proposal neighborhood $ \cN_{\mathrm{R2R}}$. For each $K \in \{1, 5, 10, 20\}$, we fix $p = 40$ and the true ordering $\sigma^* = (1,\dots, p)$, generate $K$ random DAGs, and simulate the corresponding datasets as described above. We randomly select 50 initial orderings, run an MCMC chain from each initialization, and plot the resulting Markov chain trajectories in Figure~\ref{fig:traj_plot}. The figure shows that, under the R2R proposal neighborhood, all trajectories reach orderings that achieve the same posterior score as those consistent with the true graphs. 
\begin{figure}[b!]
       \centering
       \footnotesize
        \captionsetup{font={footnotesize,stretch=1}}
        \includegraphics[width=0.18\linewidth]{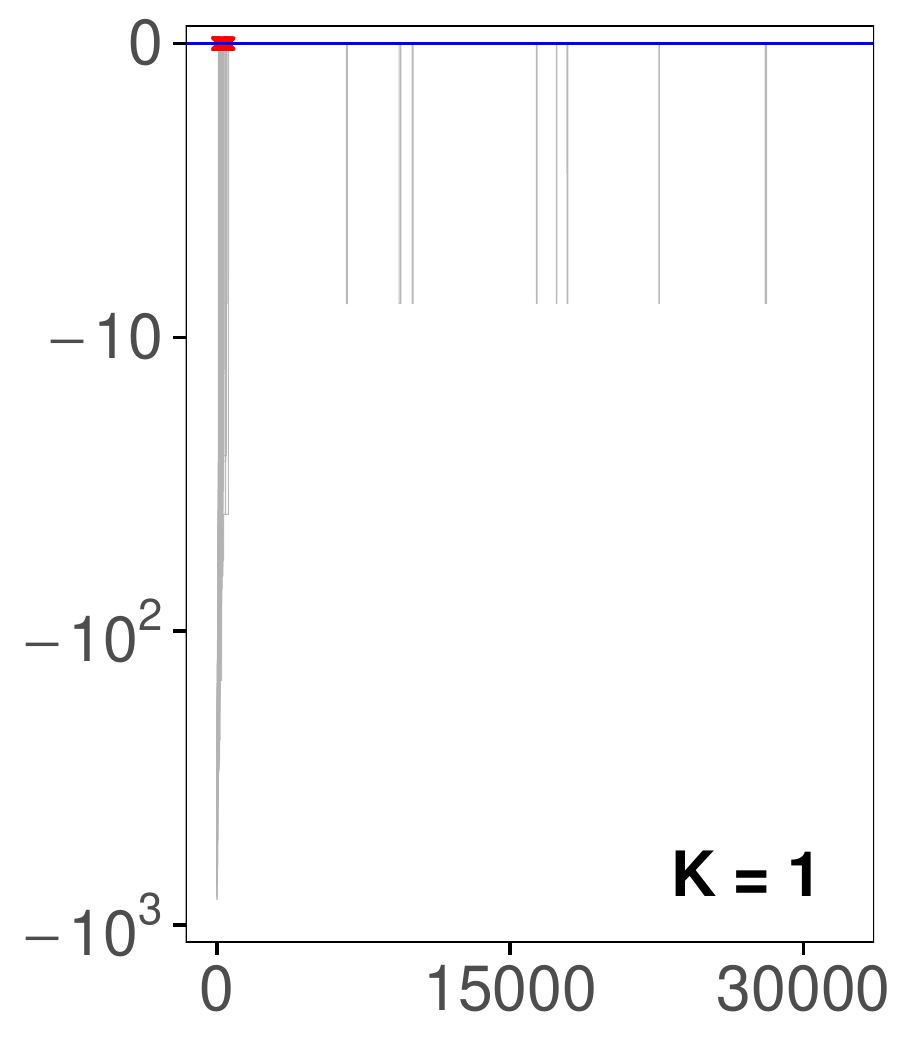}
        \includegraphics[width=0.18\linewidth]{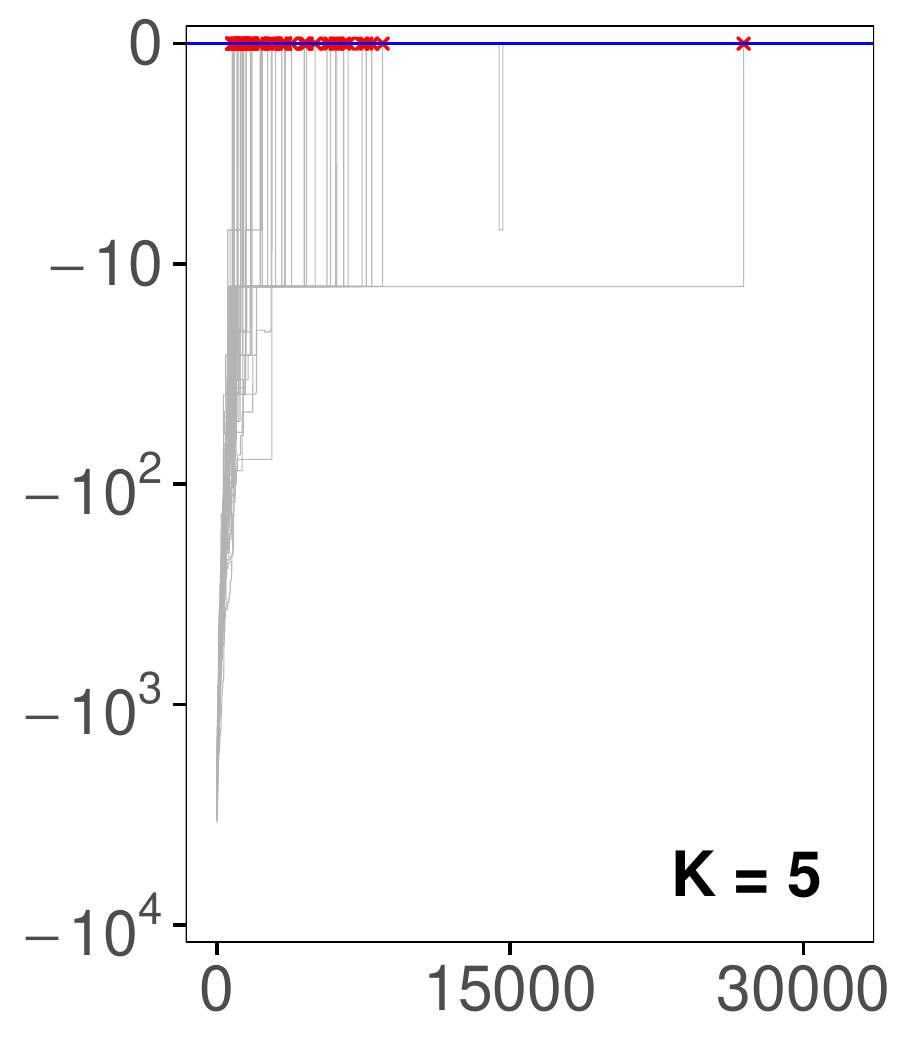}
        \includegraphics[width=0.18\linewidth]{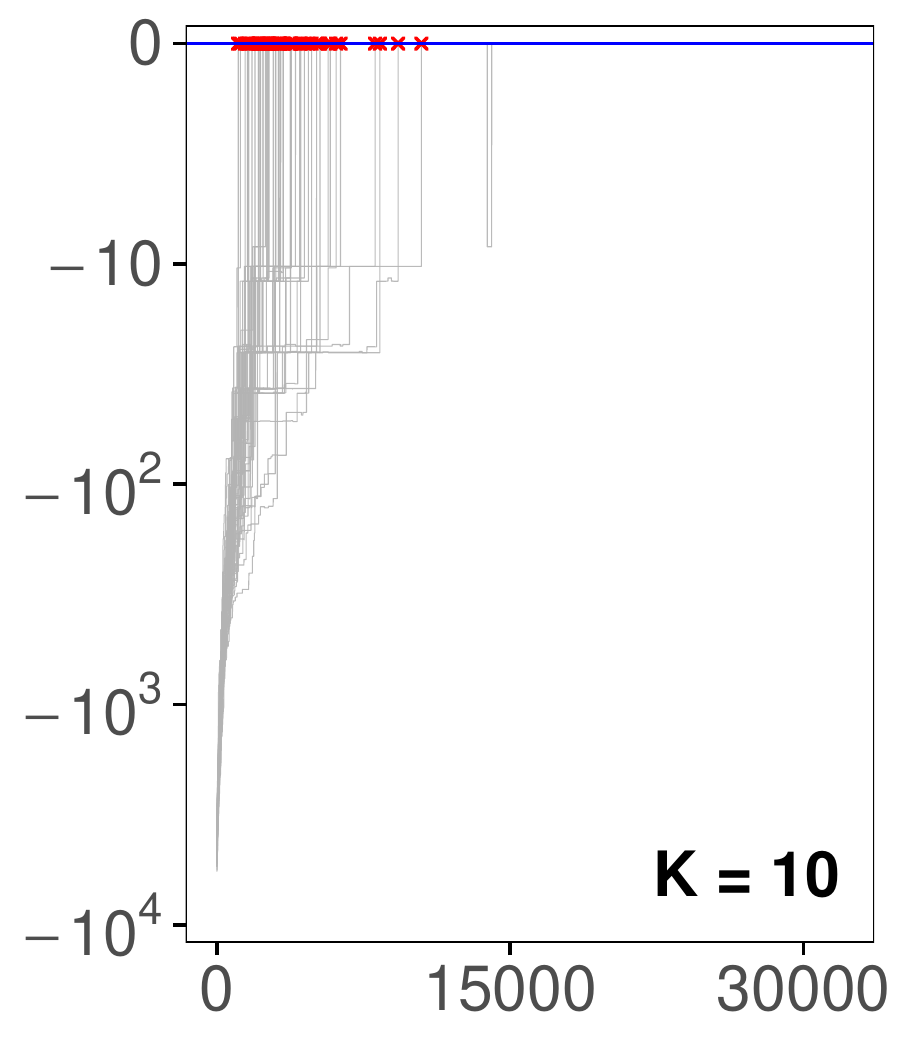}
        \includegraphics[width=0.18\linewidth]{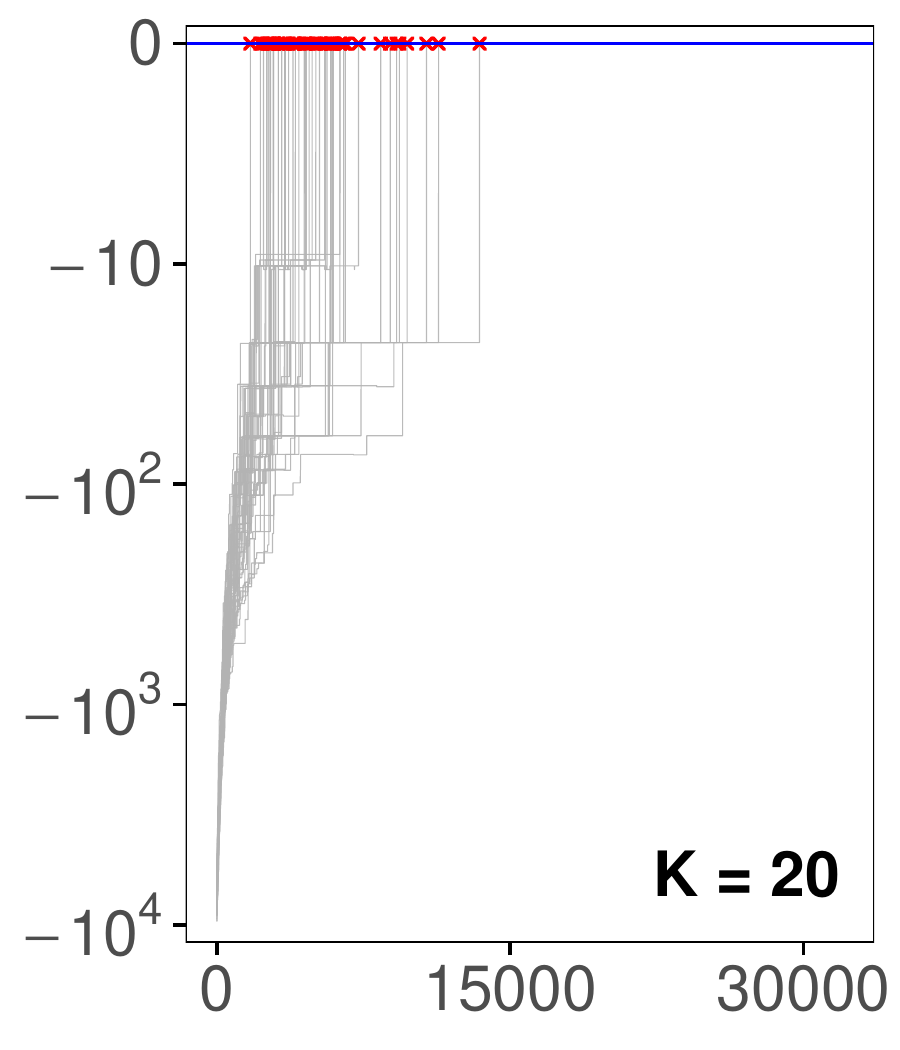}
        \includegraphics[width=0.18\linewidth]{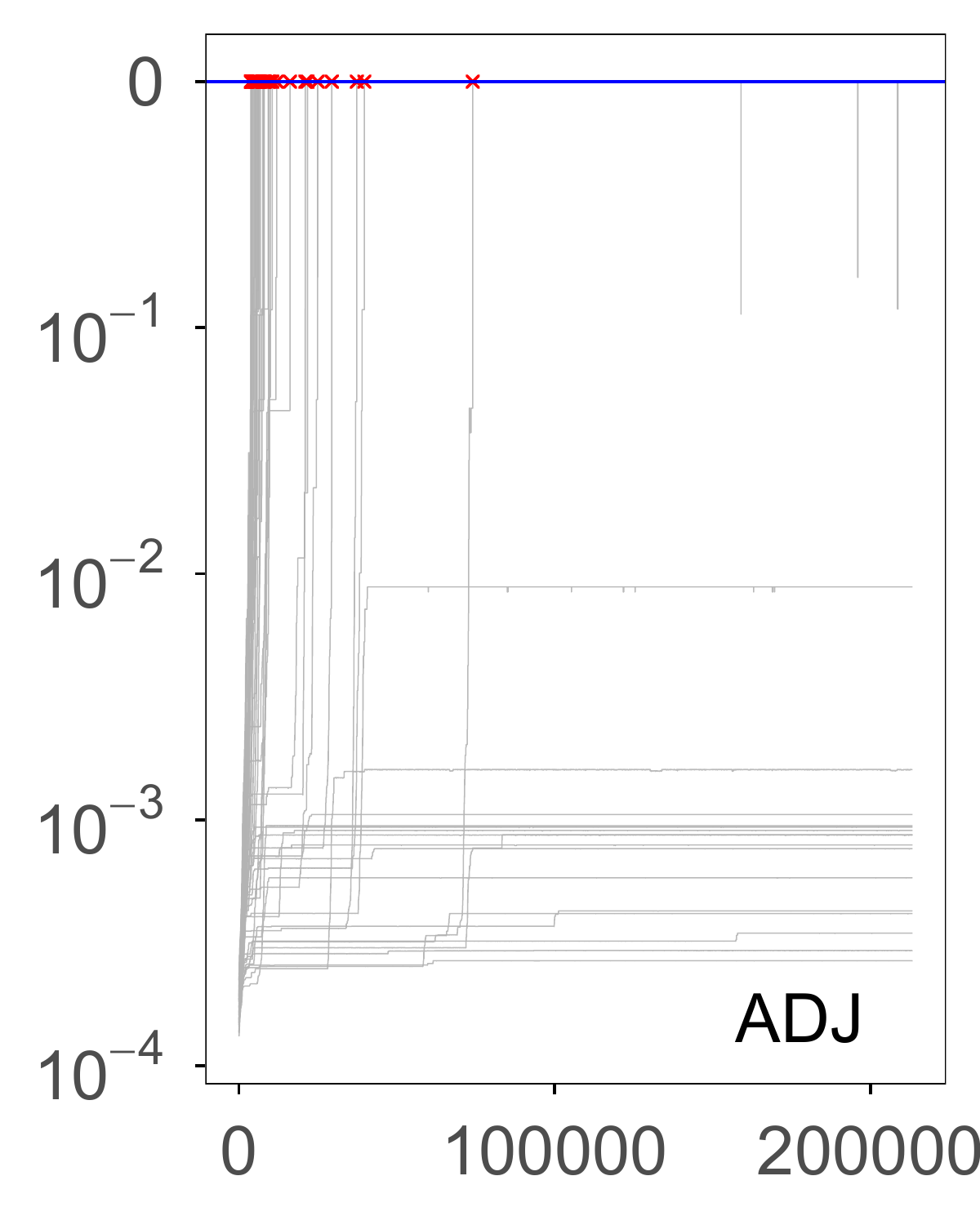}
        \caption{Scaled log posterior probability versus the number of iterations of 50 MCMC runs with random initialization. The first four plots correspond to \(p = 40\) with \(K = 1, 5, 10, 20\) using $\cN_{\mathrm{R2R}}$, and the last corresponds to \(p = 40, K = 20\) using $\cN_{\mathrm{ADJ}}$. The blue line represents the scaled log-posterior probability value of the true ordering. Each red cross marks the first time the chain attains the value.}
    \label{fig:traj_plot}
\end{figure}
At first glance, even the case $K=1$ appears to exhibit good mixing in terms of the ordering score. However, traversing among Markov equivalent orderings does not resolve edge directions, leading to instability in graph estimation. To examine this more closely, we compute the Gelman–Rubin (GR) statistic $\widehat{R}^{(k)}_{(i, j)}$ for each ordered node pair $(i,j)$ in each dataset $k$ to assess the convergence of edgewise marginal inclusion probabilities. This results in $Kp(p-1)$ GR statistics, which are summarized in Table~\ref{tab:GR}. We report the proportion of cases with $\widehat{R} < 1.1$, a commonly used threshold for good convergence, the proportion of cases with $\widehat{R} < 1.001$, and the maximum $\widehat{R}$. The maximum value of $\widehat{R} = 6.97$ in the $K=1$ case indicates poor mixing for certain edges, as expected due to the fundamental identifiability issue. As $K$ increases, the mixing behavior improves substantially, with the maximum $\widehat{R}$ decreasing markedly. These results confirm that increased data heterogeneity enhances identifiability, and that this improvement is effectively captured by the proposed computational scheme.
To highlight the efficacy of $\cN_{\mathrm{R2R}}$, we plot chain trajectories in Figure~\ref{fig:traj_plot} using the adjacent transposition neighborhood $\mathcal{N}_{\mathrm{ADJ}}$, which is frequently used in the literature~\citep{friedman2003being, agrawal2018minimal, Li2024-af}. This neighborhood is defined as $\mathcal{N}_{\mathrm{ADJ}}(\sigma)  = \{\sigma' \in \mathbb{S}^p \mid \sigma' = \mathrm{ADJ}(\sigma, i),\; i \in [p-1]\}$, where $\mathrm{ADJ}(\sigma, i)$ generates a new permutation by swapping the $i$-th and $(i+1)$-th elements of $\sigma$,
\begin{align*}
    \mathrm{ADJ}(\sigma, i):
(\sigma(1), \dots, \sigma(i), \sigma(i+1), \dots, \sigma(p))
\mapsto
(\sigma(1), \dots, \sigma(i+1), \sigma(i), \dots, \sigma(p)).
\end{align*}
Since $\mathcal{N}_{\mathrm{ADJ}}$ is smaller than $\mathcal{N}_{\mathrm{R2R}}$, we adjust the number of iterations to 213,333 to account for different per-iteration computational cost~\citep{Chang2022-id}. Notably, the adjacent transposition neighborhood exhibits suboptimal convergence, with 16 chains becoming trapped in local modes. Additional comparisons are provided in Section~\ref{subsec:mixing_other} in the supplement.
\begin{table}[t!]
\centering
%\footnotesize
\captionsetup{font={footnotesize,stretch=1}}
\renewcommand{\arraystretch}{0.8}
\begin{tabular}{lcccc}
\toprule
\textbf{Setting} & \textbf{K = 1} & \textbf{K = 5} & \textbf{K = 10} & \textbf{K = 20} \\
\midrule
$\%$ of \(\widehat{R}\) $<$ 1.001   & 98.69 & 99.44 & 99.64 & 99.81 \\
$\%$ of \(\widehat{R}\) $<$ 1.1 & 99.50 & 99.96 & 100 & 100  \\
Maximum \(\widehat{R}\) & 6.97 & 1.78 & 1.06 & 1.05  \\
%$\%$ of \(\widehat{R}\) = NA & 0.00 & 0.00 & 0.00 & 0.00  \\
\bottomrule
\end{tabular}
\caption{Gelman-Rubin statistic \(\widehat{R}\) in different settings for \(p = 40\) and varying \(K\).}
\label{tab:GR}
\end{table}

\subsection{Comparison with other methods}\label{subsec:comparison}

We compare the proposed method with four competing algorithms, PC, GES~\citep{chickering2002learning}, jointGES~\citep{Wang2020-of}, and muSuSiE-DAG~\citep{Li2024-af}. Since PC and GES are not designed for joint estimation across multiple datasets, we apply them separately to each dataset. We follow the simulation settings of~\citet{Li2024-af}. We set $K = 5$, $p = 100$, and $n_k =240$ for $k\in [5]$. Given the true ordering $\sigma^* = (1, 2, \dots, p)$, we randomly generate a common edge set $\mathcal{E}_{com}$ such that each edge in $\mathcal{E}_{com}$ is shared across all $K$ DAGs. For each $k \in [5]$, we generate a private edge set $\mathcal{E}_{pri}^{(k)}$ which does not overlap with $\mathcal{E}_{com}$ and consists of edges that appear only in the $k$-th DAG. We set $|\mathcal{E}_{pri}^{(k)}|$ to be the same for all $k$. To generate datasets from $K$ DAGs, for each DAG $G$, we draw edge weight for each edge $i \to j \in G$ independently from uniform distribution on $[-1, -0.1] \cup [0.1, 1]$, and set error variances to one. 
\begin{table}[b!]
\centering
\footnotesize
\captionsetup{font={footnotesize,stretch=1}}
\renewcommand{\arraystretch}{1.0}
%\small
\begin{tabular}{cccccc}
\toprule
 & \textbf{PC} & \textbf{GES} & \textbf{jointGES} & \textbf{muSuSiE-DAG} & \textbf{Proposed} \\
\midrule
$\Delta$ & 51.392 $\pm$ 0.767 & 26.172 $\pm$ 1.199 & 35.392 $\pm$ 1.384 & 71.612 $\pm$ 6.076 & 7.533 $\pm$ 0.304\\
TPR & 0.799 $\pm$ 0.004 & 0.927 $\pm$ 0.003 & 0.949 $\pm$ 0.004 & 0.853 $\pm$ 0.012 & 0.979  $\pm$ 0.001 \\
%FPR & 0.002 $\pm$ 0.000 & 0.002 $\pm$ 0.000 & 0.003 $\pm$ 0.000 & 0.005 $\pm$ 0.000 & 0.000 $\pm$ 0.000 \\
FDR & 0.150 $\pm$ 0.002 & 0.097 $\pm$ 0.005 & 0.162 $\pm$ 0.005 & 0.265 $\pm$ 0.020 & 0.029 $\pm$ 0.001 \\
\bottomrule
\end{tabular}
\caption{Comparison of methods for $K = 5$, $p = 100$, and $n_k = 240$ for $k \in [5]$, under the setting $|\mathcal{E}_{com}| = 100$, $|\mathcal{E}^{(k)}_{pri}| = 50$. 
$\Delta$ is the average Hamming distance, defined as in~\eqref{eq:delta}.
Each value represents $\mathrm{mean}\pm 1$ standard error.}
\label{tab:comparison}
\end{table}
We consider three different settings, where $(|\mathcal{E}_{com}|, |\mathcal{E}_{pri}^{(k)}|)$ is set to 
(50, 50), (100, 20), or  (100, 50), and repeat each setting for each algorithm 50 times. We present results for $|\mathcal{E}_{com}| = 100$ and $|\mathcal{E}_{pri}^{(k)}| = 50$ in Table~\ref{tab:comparison}. After tuning the hyperparameter, we select $c_0 = 7$. We defer the remaining cases and hyperparameter choices for the other methods to Section~\ref{subsec:compare_supp} in the supplement. Table~\ref{tab:comparison} shows that the proposed method outperforms the competing methods. 
% For each method, we calculate the edge difference between the true and estimated DAGs ($\Delta$), the average true positive rate (TPR), the average false positive rate (FPR), and the average false discovery rate (FDR). 
Specifically, the proposed method yields the smallest $\Delta$, demonstrating significantly higher accuracy in recovering the true DAGs.  Its lower false discovery rate implies that it produces far fewer spurious edge discoveries. Notably, the standard errors are consistently small, suggesting stable performance across repetitions.

\subsection{When the common ordering assumption fails}\label{subsec:common_ordering}

We conduct simulation studies to evaluate the impact of violating the common ordering assumption. We generate $K = 20$ datasets in which each dataset $X^{(k)}$ is consistent with a different underlying ordering $\sigma^{(k)}$. For a sensitivity analysis, we define the degree of similarity among those orderings as follows. We first fix a reference ordering $\sigma^* = (1, \dots, p)$. For a given value $\xi \in (0,1)$,  we randomly sample $K$ orderings $\{\sigma^{(k)}\}_{k=1}^{K}$
such that each rank correlation $\tau_{\mathrm{Kendall}}(\sigma^{(k)}, \sigma^*)$ is approximately $\xi$~\citep{diallo2020permutation}. 
For the sampled orderings, we compute their average pairwise rank correlation, denoted by 
\begin{align*}
    U \;=\; \frac{2}{K(K-1)} \sum_{1 \le i < j \le K} \tau_{\mathrm{Kendall}}\!\left(\sigma^{(i)}, \sigma^{(j)}\right),
\end{align*}
which implies that larger values of $U$ correspond to greater similarity among the sampled orderings. We report results corresponding to $U \in \{0, 0.3, 0.5, 0.7, 0.9, 1\}$. In particular, $U = 1$ corresponds to the case where the common ordering assumption holds with the true ordering $\sigma^*$. Given a value $U$, for each sampled ordering $\sigma^{(k)}$, we generate a random DAG $G^{(k)}$ and simulate the corresponding dataset $X^{(k)}$ as previously defined. The experiment is repeated 50 times, and our method is applied to each simulated dataset. The result is displayed in Figure~\ref{fig:violation_common_order}. When the average pairwise rank correlation is moderate ($U \geq 0.7$), joint estimation outperforms individual estimation, whereas individual estimation is more accurate than joint estimation at $U \leq 0.5$. 
Moreover, individual estimation exhibits noticeably greater variability than joint estimation when $U < 0.5$, indicating improved stability and more reliable convergence under joint modeling.

\begin{figure}[t!]
\centering
\captionsetup{font={footnotesize,stretch=1}}
    \begin{minipage}{0.45\textwidth}  % Adjust width as needed
            \includegraphics[width=\linewidth]{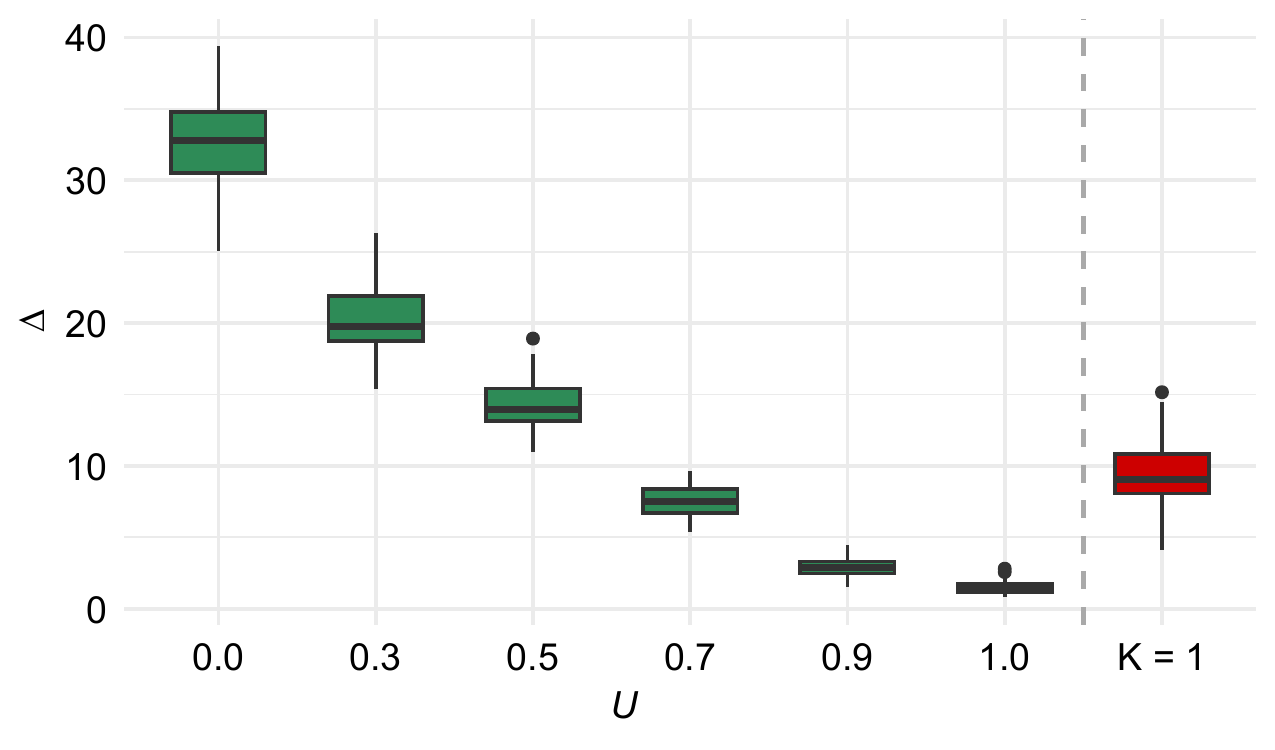}  % Replace with your figure file
    \end{minipage}%
    \hspace{0.2cm}
    \begin{minipage}{0.35\textwidth}  % Adjust width to control spacing
        \captionof{figure}{The green boxplots summarize the average Hamming distance $\Delta$ between the true DAGs and their estimates using Algorithm~\ref{alg:full} across different values of $U$ based on 50 experiments. The red boxplot corresponds to the case for $K = 1$.}
        \label{fig:violation_common_order}
    \end{minipage}
\end{figure}

\section{Real data analysis}

We apply the proposed method to analyze a single-nucleus RNA-seq database of patients with major depressive disorder (MDD)~\citep{nagy2020single}. We follow standard preprocessing procedures, followed by normalization of log-transformed expression levels. We removed one control dataset due to insufficient sample size. The  dataset consists of 33 data matrices, including 17 MDD case datasets and 16 healthy control datasets, with varying sample sizes ranging from 1,061 to 3,757 (see Table~\ref{tab:samplesize} in Section~\ref{sec:real_supp} in the supplement).
We only consider 41 genes reported in \cite{nagy2020single} that remained significant (FDR $<$ 0.1) after FDR correction across all cell clusters  following cell-type–specific differential expression analysis. 
We run the proposed  MCMC with $K = 33$ datasets for 30,000 iterations and then discard the first 15,000 iterations as burn-in. The other hyperparameters are set identically to those in Section~\ref{sec:simulation}.
 We run 50 independent chains to assess convergence of the Markov chains. Following the procedure described in Section~\ref{subsec:conv}, mixing behavior is evaluated using chain trajectory plots (the left panel of Figure~\ref{fig:edge_number and traj_plot}) and the GR statistic \(\widehat{R}^{(k)}_{(i,j)}\) for each pair \((i,j)\) in each dataset \(k\). The results demonstrate good mixing of the chains. Remarkably, all GR statistics are below 1.1, with a maximum value of 1.095. 
\begin{figure}[t!]
       \centering
       %\footnotesize
       \captionsetup{font={footnotesize,stretch=1}}
        \includegraphics[width=0.49\linewidth]{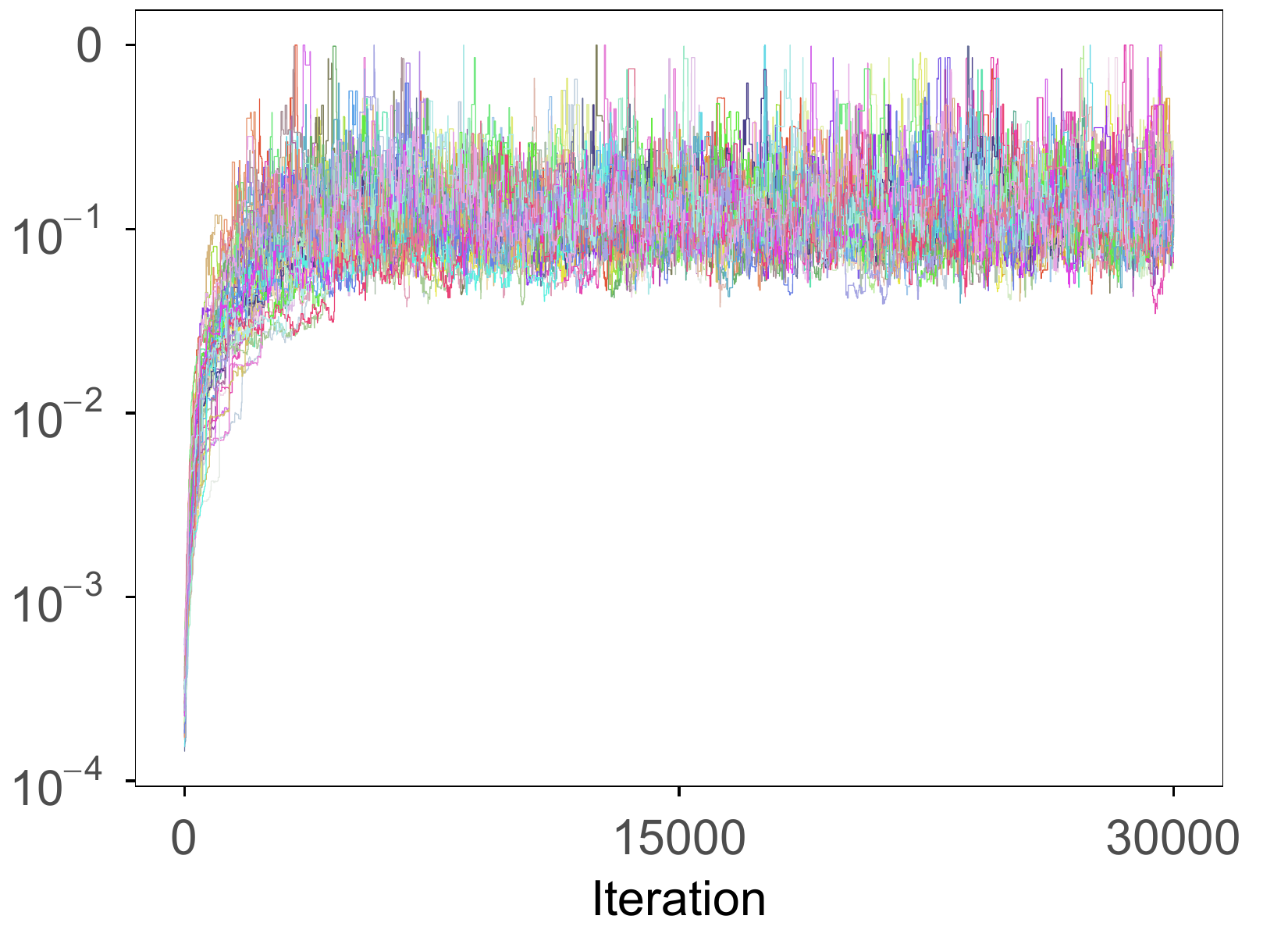}
        \includegraphics[width=0.49\linewidth]{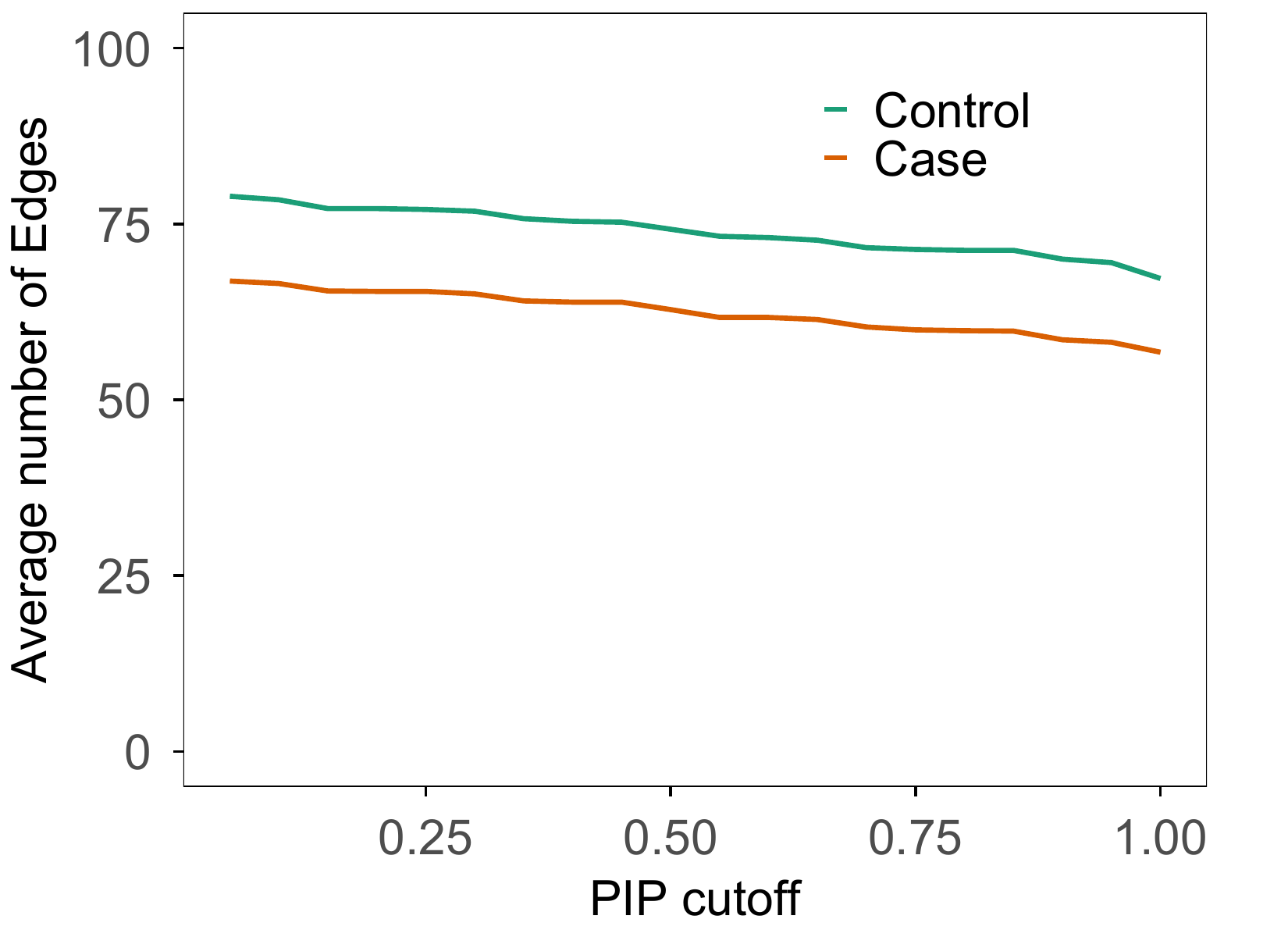}

        \caption{(Left) Scaled log posterior probability versus the number of iterations in 50 MCMC runs with random initialization. 
        (Right) With varying threshold $c \in (0,1)$, the green and orange lines show the average number of edges in the control and case groups, respectively.}
    \label{fig:edge_number and traj_plot}
\end{figure}
After obtaining the posterior mean DAGs, we construct point-estimate DAGs by including an edge whenever its posterior edge inclusion probability exceeds a pre-specified threshold $c \in (0,1)$. 
To assess group-level differences, we compute and plot the average number of edges in the case and control groups across a range of thresholds, as shown in the right panel of Figure \ref{fig:edge_number and traj_plot}. 
Across all thresholds, the control group consistently exhibits more edges than the case group, suggesting reduced network connectivity in the case group. 
To identify which genes drive the observed group-level connectivity differences, we summarize node-level connectivity using the posterior mean values for each dataset. For each dataset and each node, we compute its total connectivity by summing the posterior probabilities of outgoing and incoming edges. We then compute the average connectivity within the case and control groups for each node, and rank nodes by the magnitude of their group differences. Table~\ref{tab:top_genes_difference} reports the top 9 genes ranked by group-level connectivity difference. The top-ranked gene, \texttt{GRIN2A}, has been reported to be associated with psychiatric disorders, including mood-related conditions~\citep{shepard2024differential}. For the remaining 32 genes, the absolute differences were all less than 1.
\begin{table}[h]
\centering
\scriptsize
\captionsetup{font={footnotesize,stretch=1}}
\begin{tabular}{lccccccccc}
\hline
Rank 
& 1 & 2 & 3 & 4 & 5 & 6 & 7 & 8 & 9  \\
\hline
Gene & GRIN2A & CCND1 & PRKAR1B & PEBP1 & HSP90AA1 & NRXN2 
& ALDOA & TMSB4X & TOMIL1 \\
\hline
Diff & 3.0427 & 2.3583 & 2.0583 & 1.8204 & 1.3625 & 1.2005 
& 1.1565 & 1.1564 & -1.0003  \\
\hline
\end{tabular}
\caption{Top 9 genes ranked by group-level connectivity difference. Diff indicates the group average difference (Control $-$ Case).}
\label{tab:top_genes_difference}
\end{table}
\vspace{-5mm}
\section*{Data availability statement}
The single-nucleus RNA sequencing dataset from patients with major depressive disorder (MDD) is publicly available in the Gene Expression Omnibus (GEO) under accession number GSE144136.

%\vspace{-3mm}
\section*{Disclosure Statement}

The authors report there are no competing interests to declare. The authors used ChatGPT (OpenAI, GPT-5.2) for language editing and grammatical refinement. 

%\newpage
\bibliographystyle{plainnat}
\bibliography{reference}

\renewcommand{\thesection}{\Alph{section}}
\setcounter{section}{0}
\renewcommand{\thetheorem}{\thesection\arabic{theorem}}
\setcounter{theorem}{0}
\renewcommand{\thelemma}{\thesection\arabic{lemma}}
\setcounter{lemma}{0}
\renewcommand{\thecorollary}{\thesection\arabic{corollary}}
\setcounter{corollary}{0}
\renewcommand{\thedefinition}{\thesection\arabic{definition}}
\setcounter{definition}{0}
\renewcommand{\theexample}{\thesection\arabic{example}}
\setcounter{example}{0}
\renewcommand{\thefigure}{\thesection\arabic{figure}}
\setcounter{figure}{0}
\renewcommand{\thetable}{\thesection\arabic{table}}
\setcounter{table}{0}
\renewcommand{\theremark}{\thesection\arabic{remark}}
\setcounter{remark}{0}

\setcounter{section}{0}
\setcounter{subsection}{0}
\setcounter{figure}{0}
\setcounter{table}{0}
\setcounter{equation}{0}
\renewcommand{\thesection}{\Alph{section}}
\renewcommand{\thesubsection}{\thesection\arabic{subsection}}
\renewcommand{\thefigure}{\thesection\arabic{figure}}
\renewcommand{\thetable}{\thesection\arabic{table}}
\renewcommand{\theequation}{\thesection\arabic{equation}}

%\renewcommand{\thealg}{\thesection\arabic{alg}}
%\setcounter{alg}{0}

%\setcounter{section}{0}
%\renewcommand\thesection{\Alph{section}}
%\renewcommand\thesubsection{\Alph{section}.\arabic{subsection}}

%\setcounter{equation}{16}
%\counterwithin{algocf}{section}
%\renewcommand{\thealgocf}{\Alph{section}.\arabic{algocf}}
%\counterwithin{figure}{section}
%\renewcommand{\thefigure}{\Alph{section}.\arabic{figure}}
%\counterwithin{table}{section}
%\renewcommand{\thetable}{\Alph{section}.\arabic{table}}
\setcounter{algocf}{1}
\newpage
\begin{center}
\LARGE{Supplementary material}
\end{center}
\section{Proofs}

\subsection{Preliminary definitions and lemmas}\label{subsec:notation_lemmas}

In this section, we review standard concepts in DAG models that will be used throughout the proofs.

\noindent
\textbf{Characterization of the Markov equivalence class.}
Recall that the Markov equivalence class $\mathcal{E}(G)$ of a DAG $G$ is the set of all DAGs that are Markov equivalent to $G$.  It is well known that Markov equivalence admits the following characterization.
\begin{lemma}[{\citealp{verma1991equivalence}}]
Two DAGs are Markov equivalent if and only if they have the same skeleton and the same set of v-structures.
\end{lemma}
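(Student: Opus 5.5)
We would prove the two directions separately, working directly with d-separation. The forward direction is short. The converse requires transporting d-connecting paths from one DAG to the other, and that is where the real work lies.

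\emph{Necessity.} Suppose $\CI(G)=\CI(G')$.

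First, the skeletons agree. If $i,j$ are adjacent in $G$, no conditioning set d-separates them, since the single-edge path is always active. If $i,j$ are nonadjacent and, say, $j$ comes later in some ordering of $G$, then $\Pa_j(G)$ d-separates $i$ from $j$ (local Markov property). Hence adjacency is exactly the statement "no set d-separates $i$ and $j$", which is a property of $\CI(G)$ alone and so is shared by $G'$.

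Second, the v-structures agree. Let $i\to k\leftarrow j$ be a v-structure in $G$, with $i,j$ nonadjacent. Take $S=\Pa_j(G)$ if $j$ is later than $i$, and symmetrically otherwise. This $S$ separates $i$ and $j$, and $k\notin S$ because $k$ is a child of $j$. In $G'$ the triple $i - k - j$ has the same skeleton. If it were not a collider in $G'$, the path $i - k - j$ would be active given any set not containing $k$. Then $S$ would fail to separate $i,j$ in $G'$, contradicting $\CI(G)=\CI(G')$.

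\emph{Sufficiency.} Assume $G$ and $G'$ have the same skeleton and the same v-structures. By symmetry it suffices to show: if $i,j$ are d-connected given $S$ in $G$, then they are d-connected given $S$ in $G'$.

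We would pick a d-connecting path $\pi$ in $G$ that is minimal in a suitable sense, for instance of minimal length among d-connecting paths, with ties broken by the total length of the directed paths from its colliders to $S$. The same sequence of vertices is a path in $G'$, because the skeletons agree. We then check each interior vertex of $\pi$.

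1. If an interior triple $a - b - c$ is unshielded ($a,c$ nonadjacent), its collider status is the same in $G$ and $G'$, since v-structures agree.
2. If the triple is shielded, the edge $a - c$ gives a shortcut path. We argue, by cases on whether $b$ is a collider and on membership in $S$ or $\mathrm{An}(S)$, that the shortcut would still be d-connecting in $G$. That would contradict minimality, so a minimal $\pi$ has no shielded triples that cause trouble.

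Hence in $G'$ the path $\pi$ has the same colliders and non-colliders. Its non-colliders still avoid $S$.

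\emph{Main obstacle.} The step I expect to be hardest is the ancestral condition. A collider $c$ on $\pi$ lies in $\mathrm{An}_G(S)$, but we need $c\in\mathrm{An}_{G'}(S)$, and directed paths need not be preserved. To handle this, I would take a shortest directed path $c\to\cdots\to s$ in $G$ with $s\in S$, and induct on its length.

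Consider its first edge $c\to d$. If this edge is reversed in $G'$, then combining it with the collider $a\to c\leftarrow b$ on $\pi$ shows the following. Either $a$ or $b$ must be adjacent to $d$, since otherwise a new v-structure would be created or destroyed. Or we can reroute $\pi$ through $d$ to obtain a d-connecting path that is shorter, or has a shorter collider-to-$S$ distance, contradicting our choice.

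The fallback, if this bookkeeping becomes unwieldy, is the moralization criterion: separation in the moral graph of the ancestral set $\mathrm{An}(\{i,j\}\cup S)$. One would then show that the two moralized ancestral graphs induce the same separations, using the same skeleton and v-structure facts.
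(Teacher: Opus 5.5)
The paper does not prove this lemma; it is quoted from Verma and Pearl, so your argument has to stand on its own. Your necessity direction is correct. For sufficiency, the architecture is sound: a shortest d-connecting path in $G$, with ties broken by collider-to-$S$ distance. However, both load-bearing steps are only asserted, and the first is false as stated.

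\textbf{Shielded triples.} It is not true that the shortcut across a shielded triple of a shortest d-connecting path stays d-connecting. Take $G$ with edges $i\to a$, $b\to a$, $b\to j$, $a\to j$, and $S=\{a\}$. The only d-connecting path is $i\to a\leftarrow b\to j$, and its triple $(a,b,j)$ is shielded. Its shortcut $i\to a\to j$ is blocked, because $a\in S$ becomes a non-collider. A case check shows this is the only failure mode. Shielded colliders and shielded chains can always be shortcut. A shielded fork $a\leftarrow b\to c$ with $a\to c$ fails exactly when $a$ is a collider of $\pi$ lying in $S$, or in the mirror case. So what you need, and do not have, is an argument that such a fork cannot become a collider in $G'$. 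It goes as follows. If the predecessor $x$ of $a$ on $\pi$ were adjacent to $b$, deleting $a$ from $\pi$ would give a shorter d-connecting path. Hence $x\to a\leftarrow b$ is a v-structure, so $b\to a$ in $G'$ and $b$ stays a non-collider. Only with this does $\pi$ have the same colliders in $G$ and $G'$.

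\textbf{Ancestral step.} Your dichotomy is off. Suppose the first edge $c\to d$ of a shortest directed path to $S$ is reversed in $G'$. Then \emph{both} $a$ and $b$ must be adjacent to $d$; otherwise $a\to c\leftarrow d$ or $b\to c\leftarrow d$ is a v-structure of $G'$ absent from $G$. Acyclicity of $G$ then gives $a\to d\leftarrow b$, which is what licenses replacing $c$ by $d$.

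You must also handle the case $d\in\pi$. Splicing $\pi$ along $a\to d$ or $b\to d$ gives a strictly shorter d-connecting path. This works because if $d$ becomes a non-collider on the spliced path, it was already a non-collider on $\pi$, so $d\notin S$.

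More seriously, ``induction on the length'' does not reach the later edges, because $d$ is not a collider of any path your minimality controls. The missing idea is that a shortest directed path $c=d_0\to d_1\to\cdots\to d_m$ has no chords. So if $d_{k-1}\to d_k$ survives in $G'$ while $d_k\to d_{k+1}$ is reversed, then $d_{k-1}\to d_k\leftarrow d_{k+1}$ is a v-structure of $G'$ that $G$ lacks.

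Your moralization fallback is no easier. The ancestral sets $\mathrm{An}_G(\cdot)$ and $\mathrm{An}_{G'}(\cdot)$ differ in general, so the two moral graphs live on different vertex sets. Comparing their separations is the same difficulty in another guise.
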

Here, the skeleton of a DAG is the unique undirected graph obtained by
replacing all edges in the DAG with undirected ones. A v-structure in a DAG is a triple of distinct vertices
\(i, j, k\) such that
\(i \to j \leftarrow k\) and there is no edge between \(i\) and \(k\). 
Recall that an edge is called an essential arrow if its orientation is invariant across the entire Markov equivalence class.
By the characterization above, any edge that participates in a v-structure is essential. 
The remaining essential arrows are determined by orientation rules that preserve existing v-structures while avoiding the creation of new ones; see Section~3 of~\cite{andersson1997characterization} for a complete characterization.
Essential arrows capture the orientations that are invariant within a Markov equivalence class. 
These invariant orientations admit a canonical graphical representation of the equivalence class, called the completed partially directed acyclic graph (CPDAG).
A CPDAG is a chain graph, i.e., a graph that may contain both directed and undirected edges but has no directed cycles.
It is defined as follows.
\begin{definition}[CPDAG]
Let $\mathcal{E}$ be a Markov equivalence class of DAGs. 
The CPDAG associated with $\mathcal{E}$ is the chain graph that  
(i) has the same skeleton as any $G \in \mathcal{E}$, and  
(ii) contains a directed edge $i \to j$ if and only if it is an essential arrow in $\cE$.
\end{definition}
Any Markov equivalence class $\mathcal{E}$ admits a unique CPDAG representation, and conversely, each CPDAG corresponds to a unique equivalence class. In particular, directed edges in the CPDAG correspond precisely to essential arrows in $\mathcal{E}$.
Using the CPDAG representation, we justify the inclusion $\cL^{\cup}(\cE(G)) \subseteq \cL^{e}(E^*_{G})$, which is used in Lemma~\ref{lemma:essential}. Indeed, any ordering that is consistent with some $G' \in \cE(G)$ must respect all essential arrows of $\cE(G)$, since these orientations are invariant across the equivalence class. 
Hence such orderings belong to $\cL^{e}(E^*_G)$.
\begin{figure}[!b]
    \centering
    \captionsetup{font={footnotesize,stretch=1}}
    \includegraphics[width=1\linewidth]{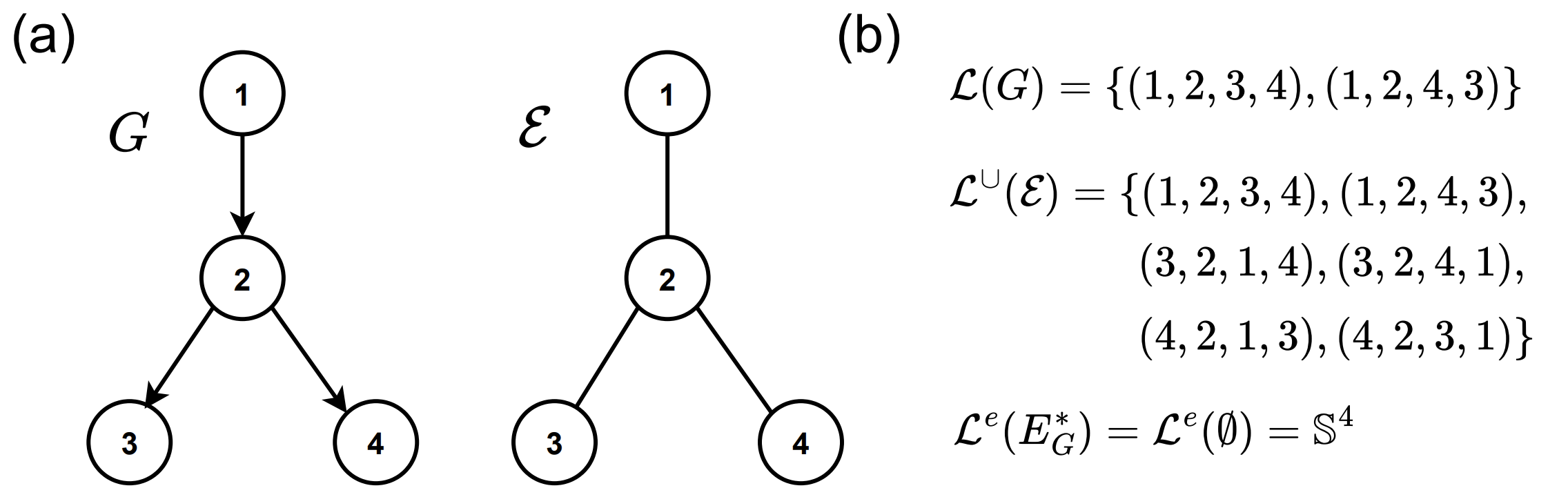}
    \caption{(a) $\cE$ is the CPDAG of $G$. Since $G$ has no essential arrows, all edges are undirected in $\cE$. (b) The corresponding ordering sets.}
    \label{fig:order_set}
\end{figure}
Figure~\ref{fig:order_set} illustrates that the inclusion is strict in general. In the example, $\cL(G)$ contains only two orderings, whereas 
$\cL^{\cup}(\cE(G))$ contains six orderings, corresponding to the three Markov equivalent DAGs
\begin{align*}
    G_1 = \{3\to 2,\, 2 \to 1,\, 2 \to 4 \}, \quad
G_2 = \{4\to 2,\, 2 \to 1,\, 2 \to 3\}, \quad
G.
\end{align*}
Since the equivalence class has no essential arrows, 
$\cL^{e}(E^*_G)$ contains all possible orderings, and therefore the inclusion is strict.

\noindent
\textbf{Covered edge and Chickering sequence.} An edge $i \to j$ in a DAG $G$ is called covered if $\Pa_j(G) = \Pa_i(G) \cup \{i\}.$
Covered edges characterize exactly those edge reversals that preserve Markov equivalence.
\begin{lemma}[{\citealp[Lemma 1]{chickering1995transformational}}]
\label{lemma:reversal_covered}
Let $i \to j$ be a covered edge in $G$, and let $G'$ be the DAG obtained by reversing this edge, that is,
\begin{align*}
    G' := \bigl(G \cup \{j \to i\}\bigr)\setminus \{i \to j\}.
\end{align*}
Then $G'$ is Markov equivalent to $G$.
\end{lemma}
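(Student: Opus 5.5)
The plan is to verify the three conditions in the characterization of \citet{verma1991equivalence} stated above: $G'$ is a DAG, it has the same skeleton as $G$, and it has the same set of v-structures as $G$. The skeleton condition is immediate, since reversing an edge does not change adjacencies. So the work lies in acyclicity and in the v-structures. Throughout, we use that coveredness means $\Pa_j(G)=\Pa_i(G)\cup\{i\}$, hence $\Pa_j(G)\setminus\{i\}=\Pa_i(G)$.

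\emph{Acyclicity.} Suppose $G'$ contains a directed cycle. It must use the new edge $j\to i$, so $G$ contains a directed path from $i$ to $j$ that avoids the edge $i\to j$. Such a path has length at least two. Let $k$ be the node immediately before $j$ on it. Then $k\in\Pa_j(G)\setminus\{i\}=\Pa_i(G)$, so $k\to i\in G$. The subpath from $i$ to $k$ followed by $k\to i$ is then a directed cycle in $G$, a contradiction.

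\emph{V-structures.} The parent sets change only at $i$ and $j$. In $G'$ we have $\Pa_i(G')=\Pa_i(G)\cup\{j\}$ and $\Pa_j(G')=\Pa_j(G)\setminus\{i\}=\Pa_i(G)$. Every v-structure with collider outside $\{i,j\}$ is therefore unchanged, and the two colliders are handled as follows.
\begin{itemize}
\item Collider $j$. In $G$, a v-structure $k\to j\leftarrow i$ would need $k$ nonadjacent to $i$. This is impossible, because $k\in\Pa_j(G)\setminus\{i\}=\Pa_i(G)$. So every v-structure at $j$ in $G$ has both parents in $\Pa_i(G)$. These are exactly the parents of $j$ in $G'$, with the same adjacencies, so the v-structures at $j$ coincide in $G$ and $G'$.
\item Collider $i$. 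V-structures $k\to i\leftarrow l$ with $k,l\in\Pa_i(G)$ are common to $G$ and $G'$. The only potentially new ones in $G'$ have the form $k\to i\leftarrow j$ with $k\in\Pa_i(G)$. This would require $k$ nonadjacent to $j$, but $k\in\Pa_i(G)\subseteq\Pa_j(G)$, so no such v-structure exists.
\end{itemize}
Hence $G$ and $G'$ have the same v-structures.

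Combining acyclicity, equal skeletons and equal v-structures with the Verma--Pearl characterization shows that $G'$ is Markov equivalent to $G$. The only mildly delicate step is acyclicity, where one must pick the last node before $j$ on the alternative path and use coveredness to close a cycle in $G$; the v-structure bookkeeping is routine once the parent sets of $i$ and $j$ in $G'$ are written out.
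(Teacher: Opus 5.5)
Your proof is correct: the acyclicity step (take the last node $k$ before $j$ on an alternative $i$-to-$j$ path, so that $k\in\Pa_i(G)$ closes a cycle in $G$) and the v-structure bookkeeping at colliders $i$ and $j$ are both sound. The paper gives no proof of its own for this lemma and simply cites Chickering (1995, Lemma 1). Your route, which establishes acyclicity and then equal skeletons and v-structures before invoking the Verma--Pearl characterization, is essentially the standard argument behind that citation.
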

Since a covered edge can be reversed without leaving the Markov equivalence class, it cannot be an essential arrow. The following lemma identifies an edge that is necessarily covered under a fixed ordering and hence cannot be essential.
\begin{lemma}\label{lemma:ordering-essential}
Let $\sigma \in \mathbb{S}^p$ and consider the set of DAGs $\mathcal{G}_\sigma^p$ consistent with $\sigma$.

(i) For any $G \in \mathcal{G}_\sigma^p$ such that 
$\sigma(1)\to\sigma(2) \in G$, 
the edge $\sigma(1)\to\sigma(2)$ is not an essential arrow.

(ii) For every edge $e \neq \sigma(1)\to\sigma(2)$ 
with $e=\sigma(i)\to\sigma(j)$ and $i<j$, 
there exists a DAG $G \in \mathcal{G}_\sigma^p$ 
in which $e$ is an essential arrow.
\end{lemma}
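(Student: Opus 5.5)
Part (i) reduces to showing that $\sigma(1)\to\sigma(2)$ is covered whenever it is present in some $G\in\mathcal{G}_\sigma^p$. Since $G$ is consistent with $\sigma$, the node $\sigma(1)$ has no predecessors, so $\Pa_{\sigma(1)}(G)=\emptyset$. Likewise $P_{\sigma(2)}^\sigma=\{\sigma(1)\}$, so $\Pa_{\sigma(2)}(G)\subseteq\{\sigma(1)\}$. If the edge is present, then
\[
\Pa_{\sigma(2)}(G)=\{\sigma(1)\}=\Pa_{\sigma(1)}(G)\cup\{\sigma(1)\},
\]
which is the definition of a covered edge. By Lemma~\ref{lemma:reversal_covered}, reversing it gives a DAG $G'\in\cE(G)$ that does not contain $\sigma(1)\to\sigma(2)$. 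Hence the edge is not essential.

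For part (ii), I will build an explicit small DAG consistent with $\sigma$ in which $e=\sigma(i)\to\sigma(j)$, with $i<j$ and $(i,j)\neq(1,2)$, lies in a v-structure. The Verma–Pearl characterization then makes $e$ essential, because every edge of a v-structure keeps its orientation across the equivalence class. The construction splits into two cases.

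Case $i\ge 2$. Choose $\ell\in\{1,\dots,i-1\}\setminus\{i\}$, for instance $\ell=1$, noting $1<i$. Take $G=\{\sigma(i)\to\sigma(j),\ \sigma(1)\to\sigma(j)\}$. Both edges point forward in $\sigma$, and $\sigma(1)$ and $\sigma(i)$ are non-adjacent, so $\sigma(1)\to\sigma(j)\leftarrow\sigma(i)$ is a v-structure containing $e$.

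Case $i=1$, so $j\ge 3$. Take $\ell=2$, which satisfies $\ell<j$ and $\ell\neq 1$, and let $G=\{\sigma(1)\to\sigma(j),\ \sigma(2)\to\sigma(j)\}$. Again the two parents are non-adjacent, so $e$ sits in a v-structure.

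In both cases $G\in\mathcal{G}_\sigma^p$ by construction, which completes (ii).

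The only step needing care is the claim that edges in v-structures are essential. This is immediate from the characterization lemma: any Markov equivalent DAG has the same skeleton and the same v-structures, so it contains the triple with the same orientation. Beyond that, the main thing to check is that the case split covers every pair other than $(1,2)$, and it does, since $i\ge2$ or ($i=1$ and $j\ge3$).
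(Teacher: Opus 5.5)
Your proposal is correct and follows the paper's proof essentially verbatim. Part (i) shows that $\sigma(1)\to\sigma(2)$ is covered because $\Pa_{\sigma(1)}(G)=\emptyset$ and $\Pa_{\sigma(2)}(G)\subseteq\{\sigma(1)\}$. Part (ii) places $e$ in a v-structure $\sigma(i)\to\sigma(j)\leftarrow\sigma(k)$ with $k<j$ and $k\neq i$; your case split on $i$ simply makes explicit the choice of $k$, whose existence the paper gets directly from $j\ge 3$.
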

\begin{proof}
(i) Let $G \in \mathcal{G}_\sigma^p$ such that $\sigma(1)\to\sigma(2) \in G$. Recall that $P_j^\sigma$ is the set of predecessors of $j$ under the ordering $\sigma$, defined in~\eqref{eq:potential}.
Since $G$ is consistent with $\sigma$, $P_{\sigma(1)}^\sigma = \emptyset$, so $\sigma(1)$ has no parents and 
$\Pa_{\sigma(2)}(G) \subseteq P_{\sigma(2)}^\sigma = \{\sigma(1)\}$.
Hence, whenever the edge $\sigma(1)\to\sigma(2)$ appears, we necessarily have 
\begin{align*}
    \Pa_{\sigma(2)}(G) = \Pa_{\sigma(1)}(G) \cup \{\sigma(1)\},
\end{align*}
so the edge is covered.
By Lemma~\ref{lemma:reversal_covered}, a covered edge can be reversed
without leaving the Markov equivalence class,
and therefore it is not essential.

(ii) Fix an edge $e=\sigma(i)\to\sigma(j)$ with $i<j$
and $e \neq \sigma(1)\to\sigma(2)$.
Then $j \ge 3$, so there exists some $k<j$ with $k\neq i$.
Construct a DAG $G \in \mathcal{G}_\sigma^p$ containing the
v-structure
\[
\sigma(i)\to\sigma(j)\leftarrow\sigma(k),
\]
and no edge between $\sigma(i)$ and $\sigma(k)$.
Such a graph is consistent with $\sigma$.
Since any edge participating in a v-structure is essential,
the edge $e$ is essential in this $G$.
\end{proof}

Covered edges play a fundamental role in DAG learning, as they are the only edge reversals that preserve Markov equivalence. 
More generally, together with edge additions, covered edge reversals generate paths in the space of DAGs under the partial order induced by I-map inclusion. 
This property, known as Meek’s conjecture and proved by \citet{chickering2002optimal}, is stated below.

\begin{theorem}[Meek's conjecture]\label{thm:meek}
Let $G$ and $G'$ be two DAGs such that $G'$ is an I-map of $G$.  
Then there exists a sequence of DAGs
\begin{align}\label{eq:chickering}
    G = G_0, G_1, \dots, G_L = G',
\end{align}
such that for each $\ell = 0, \dots, L-1$, 
$G_{\ell+1}$ is obtained from $G_\ell$ by either 
adding a single edge or reversing a covered edge, 
and remains an I-map of $G_\ell$.
\end{theorem}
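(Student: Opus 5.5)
This is Meek's conjecture, proved by \citet{chickering2002optimal}; I would follow Chickering's constructive strategy, arguing by induction on the size of the difference between $G$ and $G'$. Throughout I use that $G'$ being an I-map of $G$ means $\CI(G') \subseteq \CI(G)$. If $G = G'$ there is nothing to prove. Otherwise I will exhibit a single move applied to $G$: either reversing a covered edge of $G$, or adding an edge to $G$. The resulting graph $G_1$ must still have $G'$ as an I-map and must strictly decrease a potential. A natural potential is the pair (number of edges of $G'$ absent from the skeleton of $G$, number of edges of $G$ oriented oppositely in $G'$), ordered lexicographically. Edge additions decrease the first coordinate. Covered reversals keep the first coordinate and decrease the second. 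Since each move is either a covered reversal (so $G_1 \in \cE(G)$) or an edge addition (so $\CI(G_1)\subseteq\CI(G)$), $G_1$ is an I-map of $G$. Concatenating the moves gives the sequence \eqref{eq:chickering}.

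The construction of the move proceeds as follows. First, since $G'$ is an I-map of $G$, every edge of $G$ is adjacent in $G'$; otherwise some separation statement in $G'$ would fail in $G$. Choose a sink node $y$ of $G'$ and look at it in $G$. If $y$ is also a sink in $G$ with the same parents, delete $y$ from both graphs and recurse on the subgraphs. Marginalizing a sink preserves the I-map relation, and the moves lift back. Otherwise, either $y$ has a child in $G$, or $y$ has strictly more parents in $G'$ than in $G$.

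In the first case, let $z$ be a child of $y$ in $G$, chosen minimal among children of $y$ so that no other child of $y$ is an ancestor of $z$. If $y \to z$ is covered, reverse it. This fixes one edge orientation toward agreement with $G'$, where $z \to y$ holds because $y$ is a sink and the edge is adjacent. If $y \to z$ is not covered, there is a parent $x$ of exactly one of $y, z$. Add the missing edge $x \to z$ or $x \to y$, choosing it so as to remain acyclic. One then checks, using that $y$ is a sink in $G'$ adjacent to all of $x, z$ as required by the I-map property and the absence of v-structure conflicts, that the added edge is present in $G'$. Hence $G'$ remains an I-map. In the second case, $y$ has a parent $x$ in $G'$ that is not adjacent to $y$ in $G$. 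Add $x \to y$ to $G$, after first making $y$ a sink via the first case if needed; the I-map property is preserved because $y$ is a sink in $G'$.

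The main obstacle is verifying, in the non-covered case, that the added edge keeps $G'$ an I-map of the new graph and creates no cycle. This requires careful d-separation arguments, in particular the lemma that if $G'$ is an I-map of $G$ and $y$ is a sink of $G'$, then $G'$ restricted to the remaining nodes is an I-map of the marginal of $G$. For this step I would rely on Chickering's ``APPLY-EDGE-OPERATION'' lemmas rather than re-derive them.
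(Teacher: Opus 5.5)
The paper does not prove this statement; it only cites \citet{chickering2002optimal}, so a bare citation would have matched it. The construction you write down, however, departs from Chickering's at exactly the delicate step, and that step fails.

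In the non-covered case you take $z$ to be any child of $y$ in $G$ with no other child of $y$ as an ancestor. You then justify the edge addition by ``the added edge is present in $G'$, hence $G'$ remains an I-map''. That inference is invalid: agreeing with an edge of $G'$, even in orientation, does not stop the new edge from destroying a separation of $G'$. Here is a concrete failure.
\begin{itemize}
\item Take $G=\{x\to z,\ y\to z,\ y\to w\}$ and $G'=\{x\to z,\ w\to z,\ z\to y,\ w\to y,\ x\to y\}$.
\item The only non-adjacent pair of $G'$ is $\{x,w\}$, so $\CI(G')$ consists of the single statement that $x$ and $w$ are marginally independent.
\item This statement holds in $G$, where $z$ is an unconditioned collider on the only path $x\to z\leftarrow y\to w$. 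So $G'$ is an I-map of $G$, and $y$ is the sink of $G'$.
\item Both $z$ and $w$ are admissible under your rule. Choosing $z$: the edge $y\to z$ is not covered and $y$ has no parents, so your procedure adds $x\to y$. This edge is acyclic and even lies in $G'$ with the same orientation.
\item But in $G+\{x\to y\}$ the path $x\to y\to w$ is open given $\emptyset$, so $G'$ is no longer an I-map.
\end{itemize}

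What is missing is Chickering's specific choice of $z$. Since $y$ is a sink of $G'$, the children of $y$ in $G$ are pairwise adjacent in $G'$. Chickering lets $D$ be the one that comes first in a topological order of $G'$. He then takes $z$ topologically minimal among the children of $y$ in $G$ that are ancestors of, or equal to, $D$ in $G$. He adds $x\to z$ whenever some parent $x$ of $y$ is not a parent of $z$, and adds $x\to y$ only otherwise. In the example $D=w$, so $z=w$, and $y\to w$ is covered.

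This choice is what makes the $x\to y$ addition safe. The move can only work if $x$ is adjacent in $G'$ to every child $w$ of $y$ in $G$. Otherwise $G'$ separates $x$ and $w$ by a set avoiding $y$, while $x\to y\to w$ is open given any such set. The facts that $z$ is an ancestor of $D$ in $G$ and that $D$ precedes the other children in $G'$ force exactly these adjacencies; Chickering's lemma then completes the verification.

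The $x\to z$ addition, by contrast, is always safe. Since $x\to y\to z$ already lies in $G$, adding $x\to z$ changes no d-separation whose conditioning set avoids $y$. The ordered local Markov statements of $G'$ for a topological order ending at the sink $y$ all condition on such sets, and any DAG satisfying them has $G'$ as an I-map.

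Because Chickering's lemmas are proved for his $z$, deferring to them does not cover your rule. The rest of your outline is fine: the adjacency transfer, the removal of common sinks, the case where $y$ is already a sink of $G$, and the lexicographic potential.
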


We refer to the sequence in \eqref{eq:chickering} as a Chickering sequence.

\newpage

\subsection{Proofs in Section~\ref{subsec:order-based}}\label{subsec:proof_order-based}

\subsubsection{Proofs of Lemma~\ref{lemma:joint_score}}\label{subsub:lemma:joint_score}

\begin{proof}[Proof of Lemma~\ref{lemma:joint_score}]

(1) Suppose $\sigma \in \bigcap_{G \in \cG} \cL^{\cup}(\cE(G))$. Then for every $k \in [K]$, we have 
$\sigma \in \cL^{\cup}(\cE(G^{(k)}))$.
By Definition~\ref{def:score-eq}, this implies that 
$\psi^{(k)}(\sigma)$ attains its maximum value over $\bbS^p$ for each $k$. Since $F$ is strictly increasing in each coordinate, $\Psi(\sigma)$ is maximized over $\bbS^p$ whenever each $\psi^{(k)}(\sigma)$ is maximized over $\bbS^p$.
Thus $\sigma$ maximizes $\Psi$.

Now let $\tau \notin \bigcap_{G \in \cG} \cL^{\cup}(\cE(G))$. 
Then there exists at least one $k \in [K]$ such that 
$\tau \notin \cL^{\cup}(\cE(G^{(k)}))$.
Hence $\psi^{(k)}(\tau) < \psi^{(k)}(\sigma).$
Since $F$ is strictly increasing in each coordinate, it follows that $\Psi(\tau) < \Psi(\sigma).$
Therefore, $\sigma$ maximizes $\Psi$ over $\bbS^p$ if and only if $\sigma \in \bigcap_{G \in \cG} \cL^{\cup}(\cE(G)).$

(2) Fix $k \in [K]$. 
By part (1), the set of maximizers of $\Psi$ over $\bbS^p$ is $\bigcap_{G \in \cG} \cL^{\cup}(\cE(G)).$
Similarly, the set of maximizers of $\psi^{(k)}$ over $\bbS^p$ is $\cL^{\cup}(\cE(G^{(k)})).$
Since $\bigcap_{G \in \cG} \cL^{\cup}(\cE(G))
\subseteq 
\cL^{\cup}(\cE(G^{(k)})),$
we obtain
\begin{align*}
        \cM^{\Psi}(G^{(k)}) = \bigcap_{G\in \cG} \cL^{\cup}(\cE(G)) \setminus \cL (G^{(k)})\subseteq  \cL^{\cup}(\cE(G^{(k)})) \setminus \cL (G^{(k)}) =  \cM^{\psi^{(k)}}(G^{(k)}).
    \end{align*}
\end{proof}

\subsubsection{Proofs of Lemma~\ref{lemma:essential}}\label{subsub:lemma:essential}

\begin{proof}[Proof of Lemma~\ref{lemma:essential}]
For any DAG $G$, we first note that
 $\cL^{\cup}(\cE(G)) \subseteq \cL^{e}(E^*_{G})$.
 %See the example with Figure~\ref{fig:order_set} in Section~\ref{subsec:notation_lemmas}. 
 To see this, suppose $i \to j$ is an essential arrow of $G$. 
Since this edge appears in every DAG in $\cE(G)$, 
any ordering $\sigma$ that is consistent with a DAG in $\cE(G)$ must satisfy 
$\sigma^{-1}(i) < \sigma^{-1}(j)$; that is, node $i$ precedes node $j$ in $\sigma$. 
Therefore, $\sigma \in \cL^{e}(E_G^*)$. Taking intersections over $G \in \cG$ yields 
\begin{align*}
         \bigcap_{G\in \cG} \cL^{\cup}(\cE(G)) \subseteq \bigcap_{G\in \cG}\cL^{e}(   E^*_{G}).
    \end{align*}
It remains to show that
\begin{align*}
         \bigcap_{G \in \mathcal G} \mathcal L^{e}(E_G^*)
\;\subseteq\;
\mathcal L^{e}\!\left(\bigcup_{G \in \mathcal G} E_G^*\right).
\end{align*}
Let $\sigma \in \bigcap_{G \in \mathcal G} \mathcal L^{e}(E_G^*)$. Then $\sigma$ is consistent with all directed edges in $E_G^*$ for every $G \in \mathcal G$; that is, for each $(i,j) \in E_G^*$, $\sigma^{-1}(i) < \sigma^{-1}(j).$
Hence, $\sigma$ is consistent with every edge in $\bigcup_{G \in \mathcal G} E_G^*$, which implies
\begin{align*}
    \sigma \in \mathcal L^{e}\!\left(\bigcup_{G \in \mathcal G} E_G^*\right).
\end{align*}
This concludes the proof. In fact, equality holds:
\[
\bigcap_{G \in \mathcal G} \mathcal L^{e}(E_G^*)
\;=\;
\mathcal L^{e}\!\left(\bigcup_{G \in \mathcal G} E_G^*\right).
\]
By definition,
\begin{align*}
    \cL^{e}(E) =\{\sigma \in \bbS^p : \sigma^{-1}(i) < \sigma^{-1}(j) \text{ for all } (i,j) \in E\}.
\end{align*}
Let $\sigma \in \mathcal L^{e}(\bigcup_{G\in\mathcal G} E_G^*)$. Then, $\sigma^{-1}(i) < \sigma^{-1}(j) \text{ for all } (i,j) \in \bigcup_{G\in\mathcal G} E_G^*$. In particular, for each fixed $G \in \cG$ and each $(i,j) \in E_G^*$,
we have $\sigma^{-1}(i) < \sigma^{-1}(j)$,
so $\sigma \in \cL^{e}(E_G^*)$. Since $G$ was arbitrary, $\sigma \in \bigcap_{G \in \cG} \cL^{e}(E_G^*).$
\end{proof}

\subsubsection{Proofs of Theorem~\ref{thm:identifiability}}\label{subsub:thm1}

\begin{proof}[Proof of Theorem~\ref{thm:identifiability}]

First, observe that $\cL^e(E_{\max}) = \{\sigma^*, \sigma^\dagger\}$. To see this, $E_{\max}$ contains the chain constraints
$\sigma^*(2)\to\sigma^*(3)\to\cdots\to\sigma^*(p)$ and the additional constraint
$\sigma^*(1)\to\sigma^*(3)$. Hence any $\sigma\in\cL^{e}(E_{\max})$ must satisfy
$\sigma^{-1}(\sigma^*(j))<\sigma^{-1}(\sigma^*(j+1))$ for all $2\le j\le p-1$, so the relative order of
$\{\sigma^*(2),\ldots,\sigma^*(p)\}$ is fixed, and also $\sigma^*(1)$ must appear before $\sigma^*(3)$.
The only remaining freedom is the relative order between $\sigma^*(1)$ and $\sigma^*(2)$, which yields exactly the two orderings $\sigma^*$ and $\sigma^\dagger$.

We next show that $\sigma^*$ and $\sigma^\dagger$ are maximizers of $\Psi$. Fix $k\in[K]$. Since $\sigma^*$ is consistent with $G^{(k)}$ (common ordering assumption), we have
$\sigma^*\in \cL(G^{(k)}) \subseteq \cL^\cup(\cE(G^{(k)}))$, and thus $\psi^{(k)}(\sigma^*)$ attains its maximum value over $\bbS^p$ by score equivalence. It remains to show that $\psi^{(k)}(\sigma^\dagger)$ also attains its maximum. Note that $\sigma^\dagger$ differs from $\sigma^*$ only in the relative order of the two nodes $\sigma^*(1)$ and $\sigma^*(2)$. If there is no edge between $\sigma^*(1)$ and $\sigma^*(2)$ in $G^{(k)}$, then $G^{(k)}$ is also consistent with $\sigma^\dagger$, so $\sigma^\dagger \in \cL(G^{(k)}) \subseteq \cL^\cup(\cE(G^{(k)}))$.
Otherwise, since $\sigma^*$ is consistent with $G^{(k)}$, the only possible directed edge between them is $\sigma^*(1)\to\sigma^*(2)$.
By Lemma~\ref{lemma:ordering-essential} (i), the edge $\sigma^*(1)\to\sigma^*(2)$ is covered whenever it appears in a DAG consistent with $\sigma^*$. Therefore, by Lemma~\ref{lemma:reversal_covered}, there exists a Markov equivalent DAG $\widetilde G \in \cE(G^{(k)})$ in which this edge is oriented as $\sigma^*(2)\to\sigma^*(1)$.
Since all other pairwise order relations coincide for $\sigma^*$ and $\sigma^\dagger$, the DAG $\widetilde G$ is consistent with $\sigma^\dagger$, implying
$\sigma^\dagger \in \cL^\cup(\cE(G^{(k)}))$. In either case, score equivalence yields that $\psi^{(k)}(\sigma^\dagger)$ attains its maximum over $\bbS^p$.

Consequently, $(\psi^{(1)}(\sigma^*),\ldots,\psi^{(K)}(\sigma^*))$ and
$(\psi^{(1)}(\sigma^\dagger),\ldots,\psi^{(K)}(\sigma^\dagger))$
are coordinatewise maximizers. Since $F$ is strictly increasing in each coordinate, both $\sigma^*$ and $\sigma^\dagger$ maximize
\[
\Psi(\sigma)=F\bigl(\psi^{(1)}(\sigma),\ldots,\psi^{(K)}(\sigma)\bigr)
\quad\text{over } \bbS^p.
\]

Finally, we show that there are no other maximizers.
Let $\tau\notin \cL^{e}(E_{\max})$. Then there exists an edge $(i,j)\in E_{\max}$ such that
$\tau^{-1}(j)<\tau^{-1}(i)$.
By the assumption $E_{\max}\subseteq \bigcup_{k=1}^K E^*_{G^{(k)}}$, there exists $k'\in[K]$ such that
$(i,j)\in E^*_{G^{(k')}}$.
By definition of essential arrows, every DAG in $\cE(G^{(k')})$ contains the directed edge $i\to j$.
Hence no DAG in $\cE(G^{(k')})$ can be consistent with $\tau$, i.e.,
$\tau\notin \cL^\cup(\cE(G^{(k')}))$.
By score equivalence, this implies $\psi^{(k')}(\tau) < \max_{\sigma\in\bbS^p}\psi^{(k')}(\sigma)
= \psi^{(k')}(\sigma^*).$
Since $F$ is strictly increasing in each coordinate, the strict inequality in the $k'$th coordinate yields
$\Psi(\tau)<\Psi(\sigma^*)$.
Therefore, $\tau$ cannot maximize $\Psi$.
Combining the above and using $\cL^{e}(E_{\max})=\{\sigma^*,\sigma^\dagger\}$, we conclude $\arg\max_{\sigma\in\bbS^p}\Psi(\sigma)=\{\sigma^*,\sigma^\dagger\}.$
\end{proof}

\subsubsection{On the number of different datasets $K$}\label{subsub:propK}

We investigate how large $K$ needs to be for the condition $E_{\max} \subseteq \bigcup_{k = 1}^K E^*_{G^{(k)}}$ to hold with high probability. To gain intuition, suppose each graph is generated according to an ordered random DAG model with edge probability $p_{\mathrm{edge}}\in(0,1)$. Specifically, fixing an ordering $\sigma^*$, for each $k\in[K]$ and each ordered pair
$(\sigma^*(i), \sigma^*(j))$ with $i < j$, the directed edge $\sigma^*(i)\to \sigma^*(j)$ is included in $G^{(k)}$ independently with probability $p_{\mathrm{edge}}$. This scheme is commonly used to generate DAGs in simulation studies.
The following proposition provides a rough idea of how many datasets need to be collected in this case. 

\begin{proposition}\label{prop:K}
Fix an ordering $\sigma^*\in \mathbb S^p$ and sample $G^{(1)},\ldots,G^{(K)}$ from an ordered random DAG model with edge probability  $p_{\mathrm{edge}}\in(0,1)$ independently. Then, for any $\epsilon > 0$, there exists  $C = C(p_{\mathrm{edge}}, \epsilon)>0$, such that $K \ge C \log p$ implies
\begin{equation*}
\mathbb P\!\left(E_{\max}\subseteq \bigcup_{k = 1}^K E^*_{G^{(k)}}\right)
\geq 1 - \epsilon.  
\end{equation*}
\end{proposition}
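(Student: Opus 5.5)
We reduce the event to a union bound over the $p-1$ edges of $E_{\max}$ and exploit independence across the $K$ graphs. For each edge $e \in E_{\max}$ and each $k$, let $A_k(e)$ be the event that $e$ is an essential arrow of $G^{(k)}$. Then
\begin{align*}
\mathbb P\Bigl(E_{\max}\not\subseteq \textstyle\bigcup_{k} E^*_{G^{(k)}}\Bigr) \le \sum_{e\in E_{\max}} \mathbb P\Bigl(\textstyle\bigcap_{k=1}^K A_k(e)^c\Bigr) = \sum_{e \in E_{\max}} \bigl(1-q_e\bigr)^K,
\end{align*}
where $q_e = \mathbb P(A_1(e))$. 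Since the $G^{(k)}$ are i.i.d., the events $A_k(e)$ are independent across $k$. It therefore suffices to find a constant $q = q(p_{\mathrm{edge}})>0$, independent of $p$, with $q_e \ge q$ for all $e\in E_{\max}$. Then the bound is at most $(p-1)(1-q)^K \le p\, e^{-qK}$. This is below $\epsilon$ once $K \ge q^{-1}(\log p + \log(1/\epsilon))$, which for $p\ge 2$ holds whenever $K \ge C\log p$ with $C = q^{-1}\bigl(1+\log(1/\epsilon)/\log 2\bigr)$.

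The main step is the uniform lower bound on $q_e$. The plan is to exhibit, as in the proof of Lemma~\ref{lemma:ordering-essential}(ii), a local configuration that forces $e$ to be essential via a v-structure and involves only a bounded number of node pairs.

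Take $e = \sigma^*(i)\to\sigma^*(j)$ with $j\ge 3$, using the fact that every edge in $E_{\max}$ has a head at position at least $3$. Choose a third node $\sigma^*(m)$ with $m<j$ and $m\neq i$; for example, $m=1$ if $i\ne 1$, and otherwise $m=2$. Consider the event that both $\sigma^*(i)\to\sigma^*(j)$ and $\sigma^*(m)\to\sigma^*(j)$ are present and there is no edge between $\sigma^*(i)$ and $\sigma^*(m)$. On this event $\sigma^*(i)\to\sigma^*(j)\leftarrow\sigma^*(m)$ is a v-structure, so $e$ is essential by the Verma--Pearl characterization. The three node pairs are distinct, so edge independence in the ordered random DAG model gives
\begin{align*}
q_e \ge p_{\mathrm{edge}}^2(1-p_{\mathrm{edge}}) =: q,
\end{align*}
which does not depend on $p$ or on $e$.

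The only care needed is to check that this lower bound is valid no matter what the rest of the graph looks like. It is: a v-structure is essential regardless of the other edges, so we need not control the remaining edges at all. No case analysis on the rest of $G^{(k)}$ is required, and the obstacle reduces to choosing $m$ correctly for each of the $p-1$ edges of $E_{\max}$, in particular for $\sigma^*(1)\to\sigma^*(3)$ and $\sigma^*(2)\to\sigma^*(3)$.
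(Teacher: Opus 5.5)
Your proposal is correct and follows essentially the same route as the paper: you lower-bound the probability that each edge of $E_{\max}$ is essential by $p_{\mathrm{edge}}^2(1-p_{\mathrm{edge}})$ via a v-structure, then use independence across the $K$ graphs and a union bound over the $p-1$ edges. Your explicit choice of the third node $\sigma^*(m)$ and your explicit constant $C$ simply make precise what the paper leaves implicit.
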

\begin{proof}
Fix a DAG $G$ and an edge $a\to b\in G$. If there exists $c\neq a$ such that $c\to b\in G$ and $a$ is not adjacent to $c$ in $G$, then $a\to b\leftarrow c$ is a v-structure in $G$. Hence $a\to b$ is essential in the Markov
equivalence class of $G$. Pick an edge $e=a\to b\in E_{\max}$ and consider one sampled graph $G^{(k)}$. Recall that $P_b^{\sigma^*}$ denotes the set of predecessors of $b$ under the ordering $\sigma^*$.
If there exists $c\in P_b^{\sigma^*}\setminus \{a\}$ such that $c\to b$ is present and $a$ is not adjacent to $c$, then $e$ is essential in $G^{(k)}$ by the argument above. We can lower bound the probability that $e=a\to b$ is essential in $G^{(k)}$ by the probability that it participates in a v-structure. 
\begin{align*}
\mathbb P\!\left(e \in E^*_{G^{(k)}} \right)
\ \ge\ q_{\min} := p_{\mathrm{edge}}^2(1-p_{\mathrm{edge}}).
\end{align*}
This lower bound may be conservative, since an edge can be essential even when it does not belong to a v-structure. The probability also depends on the position of $b$. If $b$ stands later in the ordering, $|P_b^{\sigma^*}|$ increases, yielding more opportunities for $e$ to form a v-structure of the form $a\to b \leftarrow c$ with $c\in P_b^{\sigma^*}\setminus\{a\}$. Consequently,
$\mathbb P(e \in E^*_{G^{(k)}} )$ is at least $q_{\min}$ and is strictly larger than $q_{\min}$ whenever
$|P_b^{\sigma^*}\setminus\{a\}|\ge 2$.
In particular, among edges in $E_{\max}$, the smallest  probability of being essential is attained by $\sigma^*(1)\to \sigma^*(3)$ and $\sigma^*(2)\to \sigma^*(3)$. 
Since $\{G^{(k)}\}_{k=1}^K$ are independent,
\begin{align*}
    \mathbb P\!\left(e\notin \bigcup_{k=1}^K E^*_{G^{(k)}}\right)
=\prod_{k=1}^K \mathbb P(e\notin E^*_{G^{(k)}})
\le (1-q_{\min})^K.
\end{align*}
Finally, applying the union bound over $|E_{\max}|=p-1$ edges gives
\begin{align*}
    \mathbb P\!\left(E_{\max}\not\subseteq \bigcup_{k=1}^K E^*_{G^{(k)}}\right)
\le \sum_{e\in E_{\max}} \mathbb P\!\left(e\notin \bigcup_{k=1}^K E^*_{G^{(k)}}\right)
\le (p-1)(1-q_{\min})^K.
\end{align*}
It is enough to require $(p-1)(1-q_{\min})^K \leq \epsilon$, rearranging the term yields 
 \begin{align*}
     K \ \ge\ \frac{\log\!\bigl((p-1)/\epsilon\bigr)}{-\log(1-q_{\min})}.
 \end{align*}
 For fixed $p_{\mathrm{edge}} \in (0, 1)$, we have $q_{\min} > 0$. Therefore, the sufficient order is $K = O(\log p)$ and we choose $C = C(p_{\mathrm{edge}}, \epsilon)$. 
\end{proof}
The bound is conservative and may be tightened. Moreover, different graph generating mechanisms may require substantially different values of $K$.

\newpage

\subsection{Proofs in Section~\ref{subsec:highdim}}\label{subsec:proof_model}

\subsubsection{Some event sets for proofs}\label{subsub:event}

Recall that the random vector $\sX^{(k)}$ is generated by a linear SCM associated with a DAG $G^{(k)}_*$ with Gaussian errors, as described in~\eqref{eq:true_pop_data}. The dataset $X^{(k)}$ then follows as
\begin{equation}\label{eq:true_data}
    X_j^{(k)} = \sum_{i \in \Pa_j(G^{(k)}_*)} (B^{(k)}_*)_{ij} X_i^{(k)} + \varepsilon^{(k)}_{j}, \quad \varepsilon^{(k)}_{j} \sim \mathrm{MVN}_{n_k}(0, \omega^{(k)}_{*,j}I) \text{ independently for } j \in [p]. 
\end{equation}
For any ordering $\sigma \in \bbS^p$, the distribution of $\sX^{(k)}$ admits an equivalent expression~\eqref{eq:permutedSEM}. 
At the sample level, we can write the  corresponding structural equation model using the data matrix $X^{(k)}$ as
\begin{equation}\label{eq:data}
    X_j^{(k)} = \sum_{i \in \Pa_j(G^{(k)}_\sigma)} (B^{(k)}_\sigma)_{ij} X_i^{(k)} + \varepsilon^{(k)}_{\sigma,j}, \quad \varepsilon^{(k)}_{\sigma,j} \sim \mathrm{MVN}_{n_k}(0, \omega^{(k)}_{\sigma,j}I) \text{ independently for } j \in [p]. 
\end{equation}
Since the ordering corresponding to the true structural causal model is unknown, our analysis requires uniform control of the error vectors $\{\varepsilon^{(k)}_{\sigma,j}: \sigma \in \bbS^p, j \in [p], k \in [K]\}$. To achieve this, we construct high-probability events on which these vectors satisfy suitable concentration bounds simultaneously. Define the set of the standardized error vectors by 
\begin{equation*}
\mathcal{Z} = \{ z_{\sigma,j}^{(k)} = (\omega^{(k)}_{\sigma,j})^{-1/2} \varepsilon^{(k)}_{\sigma,j} : \sigma \in \mathbb{S}^p,\; j \in [p], \; k \in [K] \}.
\end{equation*}
Although the representation is indexed by permutations $\sigma \in \bbS^p$, it is not necessary to control all $p!$ permutations to apply a union bound over the elements of $\mathcal{Z}$. To see this, the regression coefficients $\{(B^{(k)}_\sigma)_{ij}: i\in \Pa_j(G^{(k)}_\sigma)\}$ and the error term $\varepsilon^{(k)}_{\sigma,j}$ in~\eqref{eq:data} depend only on the parent set $\Pa_j(G^{(k)}_\sigma)$, not directly on the ordering $\sigma$ itself. 
To be more specific, for an index set $S \subseteq [p]$, let $\proj^{(k)}_{S}
= X^{(k)}_{S}\bigl(X^{(k)\top}_{S}X^{(k)}_{S}\bigr)^{-1}X^{(k)\top}_{S}$ 
denote the orthogonal projection matrix onto the column space of $X^{(k)}_{S}$, and let $\proj^{(k)\perp}_{S} = I-\proj^{(k)}_{S}$
denote the orthogonal projection onto its orthogonal complement. Then, for any $S \subseteq [p]$, the residual vector obtained from regressing $X_j^{(k)}$ on $X_S^{(k)}$ is given by $\proj_S^{(k)\perp} X_j^{(k)}.$ In particular, when $S = \Pa_j(G_\sigma^{(k)})$, we have 
$\varepsilon^{(k)}_{\sigma,j} = \proj^{(k)\perp}_{S} X_j^{(k)}$. Therefore, if two permutations $\sigma$ and $\sigma'$ induce the same parent set $S$ for node $j$, then $\varepsilon^{(k)}_{\sigma,j} = \varepsilon^{(k)}_{\sigma',j}.$

Recall that each possible parent set has cardinality at most $d^*$, as defined in~\eqref{eq:d*}.
For a fixed node $j \in [p]$, the number of possible parent sets of $X_j^{(k)}$ is bounded by $\sum_{\ell = 0}^{d^*} {p \choose \ell} \leq p^{d^*}$, and hence the number of possible standardized error vectors $\{z_{\sigma,j}^{(k)}: \sigma \in \bbS^p\}$ is also bounded by $p^{d^*}$. Therefore, $|\mathcal{Z}| \leq Kp^{d^* +1}$. Let $\mathcal{P}_j^\sigma(d) = \{ S \subseteq P_j^\sigma: |S| \leq d\}$. Define
    \begin{align*}
        \cA_k &= \left\{\min_{\sigma \in \bbS^p} \min_{j \in [p]} \min_{S \in \mathcal{P}_j^\sigma(2d)} z_{\sigma,j}^{(k)\top} \proj^{(k)\perp}_{S} z_{\sigma,j}^{(k)} \geq n_k/2 \right\}\\
        \cB_k &= \left\{\max_{\sigma \in \bbS^p} \max_{j \in [p]} \max_{S \in \mathcal{P}_j^\sigma(2d-1)} \max_{\ell \in P^\sigma_j \setminus S }z_{\sigma,j}^{(k)\top} (\proj^{(k)}_{S\cup \{\ell\}} - \proj^{(k)}_{S}) z_{\sigma,j}^{(k)} \leq \rho \log p\right\}\\
        \cC_k &= \Bigl\{
n_k\underline{\nu}
\;\le\;
\min_{S \subseteq [p]: |S| \leq 2d }
\lambda_{\min}\bigl(X^{(k)\top}_S X^{(k)}_S\bigr)
\;\le\;
\max_{S \subseteq [p]: |S| \leq 2d }
\lambda_{\max} \bigl(X^{(k)\top}_S X^{(k)}_S\bigr)
\;\le\;
n_k\overline{\nu}
\Bigr\},
    \end{align*}
and let $\cA = \bigcap_{k\in [K]} \cA_k, \cB = \bigcap_{k\in [K]} \cB_k$ and $\cC = \bigcap_{k\in [K]} \cC_k$. The following lemma implies that $\cA \cap \cB \cap \cC$ is a high-probability event, on which our analysis is carried out.
\begin{lemma}
    Assume that (A1), (A2), and (A3) hold. Then, for sufficiently large $n$, we have $\mathbb{P}^*(\cA \cap \cB \cap \cC) \ge 1 - p^{-1}$, 
    where $\bbP^*$ denotes the probability measure over the true model from~\eqref{eq:true_data}.
\end{lemma}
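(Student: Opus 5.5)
The plan is to bound the three complements separately and combine them with a union bound, giving each failure probability at most $p^{-1}/3$ for large $n$. Everything hinges on the counting argument already made: for fixed $k$ and $j$, the vector $z^{(k)}_{\sigma,j}$ depends on $\sigma$ only through the parent set $\Pa_j(G^{(k)}_\sigma)$. So $|\mathcal Z|\le Kp^{d^*+1}$, and the triples $(z,S,\ell)$ that must be controlled number at most $Kp^{d^*+1}\cdot p^{2d}\cdot p\le Kp^{3d+2}$, using $d^*\le d$ from (A2). Two further ingredients are needed. First, for a valid $\sigma$, $z^{(k)}_{\sigma,j}$ is a standard Gaussian vector in $\mathbb R^{n_k}$. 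Second, it is independent of $X^{(k)}_{P_j^\sigma}$, since the projection residual is independent of the predecessors in the Gaussian model.

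\textbf{Events $\cA$ and $\cB$.} Condition on $X^{(k)}_{P_j^\sigma}$. For $S\subseteq P_j^\sigma$, the matrices $\proj_S^{(k)\perp}$ and $\proj^{(k)}_{S\cup\{\ell\}}-\proj^{(k)}_S$ are fixed projections of rank at least $n_k-2d$ and at most $1$, respectively.
\begin{itemize}
\item For $\cA$: conditionally, $z^\top\proj_S^\perp z\sim\chi^2_{r}$ with $r\ge n_k-2d$. The Laurent--Massart lower tail gives $\bbP(\chi^2_r\le n_k/2)\le \exp(-c n_k)$ once $d=o(n)$. A union bound over $Kp^{3d+1}$ triples costs $\exp(\log K+3d\log p+\log p - cn)$, which is $o(p^{-1})$ by $\log K+d\log p=o(n)$.
\item For $\cB$: the quadratic form is $\chi^2_1$ (or $0$), so $\bbP(\chi^2_1>\rho\log p)\le p^{-\rho/2}$. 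The union bound gives $Kp^{3d+2}p^{-\rho/2}$. This is where $\rho=O(d)$ from (A3) is used: choose $\rho\ge 6d+6$, together with $K=o(p)$, so the total is at most $p^{-1}/3$.
\end{itemize}

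\textbf{Event $\cC$.} This is a restricted-eigenvalue statement for Gaussian designs. For each $S$ with $|S|\le 2d$, write $X_S^{(k)}=W\Sigma_{SS}^{(k)1/2}$ with $W$ an $n_k\times|S|$ standard Gaussian matrix. Davidson--Szarek gives singular values of $W/\sqrt{n_k}$ in $[1-\sqrt{2d/n_k}-t,\,1+\sqrt{2d/n_k}+t]$ with probability at least $1-2e^{-n_kt^2/2}$. Take $t$ small so that both deviations together are at most $\delta_0$, which is possible since $d=o(n)$. Combining with (A1), which gives $\underline\nu/(1-\delta_0)^2\le\lambda(\Sigma_{SS})\le\overline\nu/(1+\delta_0)^2$ by eigenvalue interlacing, yields exactly the bounds $n_k\underline\nu$ and $n_k\overline\nu$. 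A union bound over $Kp^{2d}$ sets costs $\exp(\log K+2d\log p-cn)=o(p^{-1})$.

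\textbf{Main obstacle.} The delicate step is justifying the conditional distribution in $\cA$ and $\cB$ uniformly over $\sigma$. The projection must be built only from columns in $P_j^\sigma$, which is why the maxima range over $S\subseteq P_j^\sigma$ and $\ell\in P_j^\sigma\setminus S$. This makes the noise $z_{\sigma,j}$ independent of the projection for that fixed $\sigma$. The union bound must then be indexed by the distinct pairs $(z,S)$ rather than by $\sigma$. The worry is that a single $z$ indexed by a parent set $T$ arises from several $\sigma$ with different predecessor sets. I would handle this by noting that independence only needs $S\cup\{\ell\}\cup T$ to lie in some predecessor set for which $T$ is the minimal I-map parent set. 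The $\chi^2$ law then holds for every valid triple, so counting triples $(T,S,\ell)$ suffices.
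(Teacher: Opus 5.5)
Your proposal is correct and follows the same route as the paper: it bounds the failure probabilities of $\cA_k$, $\cB_k$, $\cC_k$ over the at most $Kp^{d^*+1}$ distinct standardized error vectors and union-bounds using $\log K + d\log p = o(n)$ and $K=o(p)$. The paper imports the per-$k$ bounds from Lemmas F1--F2 of \citet{Zhou2023-cm}, whereas you derive them directly via $\chi^2$ tails and Davidson--Szarek, and you make explicit the independence of $z^{(k)}_{\sigma,j}$ from $X^{(k)}_{P_j^\sigma}$ that the citation leaves implicit. One arithmetic slip: with your count $Kp^{3d+2}$ and $\rho = 6d+6$, the bound on $\bbP^*(\cB^c)$ is $Kp^{-1}$, which is only $o(1)$; take $\rho \ge 6d+8$ (still $O(d)$) to get $Kp^{-2} = o(p^{-1})$.
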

\begin{proof}
    The individual probabilities of the events $\cA_k$, $\cB_k$, and $\cC_k$ are controlled using Lemmas~F1 and~F2 of~\citep{Zhou2023-cm}, and the extension to the joint event is obtained by applying a union bound over $k \in [K]$. Observe $d \log p = o(n_k)$ since $n_k = \Theta(n)$ for all $k \in [K]$.
    
    For each $k\in [K]$, by applying the same tail bound used in Lemma F2 of~\citep{Zhou2023-cm} to candidate sets $S \in \mathcal{P}_j^\sigma (2d)$,  we have $\mathbb{P}^*(\cA_k) \;\ge\; 1 - \exp (-n_k /96)$, and there exists $\rho = O(d)$ such that  
    $\mathbb{P}^*(\cB_k) \;\ge\; 1 - 2p^{-2}$.
    Then, by the union bound, we have
    \begin{align*}
        \mathbb{P}^*(\cA^c) &= \mathbb{P}^*\Bigl(\bigcup_{k\in [K]} \cA_k^c\Bigr) \leq 
\sum_{k=1}^K \mathbb{P}^*(\cA_k^c) \le
 \sum_{k=1}^K \exp\!\left(-n_k/96 \right) \leq K \exp\!\left(-n/96\right),
\\
\mathbb{P}^*(\cB^c) &= \mathbb{P}^*\Bigl(\bigcup_{k\in [K]} \cB_k^c\Bigr) \leq 
\sum_{k=1}^K \mathbb{P}^*(\cB_k^c) \le 2 K p^{-2},
    \end{align*}
    From the Lemma F1 of~\citep{Zhou2023-cm}, for each $k\in [K]$, we have $\mathbb{P}^*(\cC_k) \;\ge\; 1 - 2 \exp(-n_k \delta_0^2 / 16).$
    Therefore, by the union bound,
    \begin{align*}
\mathbb{P}^*(\cC^c)
&=
\mathbb{P}^*\Bigl(\bigcup_{k\in [K]} \cC_k^c\Bigr)
\le
\sum_{k=1}^K \mathbb{P}^*(\cC_k^c) \le
2 \sum_{k=1}^K \exp\!\left(-n_k \delta_0^2/16\right)
\\
& \le
2K \exp\!\left(-n \delta_0^2/16\right),
\end{align*}
where the last inequality follows from $n_k \ge n$ for all $k \in [K]$. Since $\log K = o(n)$, we have $K \exp (-c n) = o(1)$ for any constant $c>0$, and $Kp^{-2} = o(p^{-1})$ from $K = o(p)$, we conclude   $\mathbb{P}^*(\cA^c \cup \cB^c  \cup \cC^c) \leq 2K \exp\!\left(-n/96\right)+ 2 K p^{-2}+ 2K \exp\!\left(-n \delta_0^2/16\right) \leq p^{-1}$.
\end{proof}

\subsubsection{Proof of Proposition~\ref{prop:score-eq}}\label{subsub:MAP}

For proving (1), we use decomposability of the score $\phi^{(k)}(G)=\sum_{j\in[p]}\phi^{(k)}_j(\Pa_j(G))$, where $\phi^{(k)}_j$ is defined in~\eqref{eq:nodewise}. 
Fix $\sigma\in\bbS^p$, $k\in[K]$, and $j\in[p]$, and let $S^*=\Pa_j(G^{(k)}_\sigma)$. 
It suffices to show that for any $S\in \mathcal{P}_j^\sigma(d)$, if $S\neq S^*$ then $S$ cannot maximize $\phi^{(k)}_j(\cdot)$ over $\mathcal{P}_j^\sigma(d)$.

\textbf{Case 1: } $S^*\subsetneq S$.
Pick any $\ell\in S\setminus S^*$ and define $S'=S\setminus\{\ell\}$. Then
\begin{align*}
\phi^{(k)}_j(S)-\phi^{(k)}_j(S')
&=
-\Big(c_0\log p+\frac{1}{2}\log(1+\alpha/\gamma)\Big)\big(|S|-|S'|\big)
-\frac{\alpha n_k+\kappa}{2}\log\!\Big(\frac{n_k\,\widehat\omega^{(k)}_j(S)}{n_k\,\widehat\omega^{(k)}_j(S')}\Big) \\[4pt]
&=
-\Big(c_0\log p+\frac{1}{2}\log(1+\alpha/\gamma)\Big)
+\frac{\alpha n_k+\kappa}{2}\log\!\Big(\frac{\widehat\omega^{(k)}_j(S')}{\widehat\omega^{(k)}_j(S)}\Big) \\[4pt]
&=
-\Big(c_0\log p+\frac{1}{2}\log(1+\alpha/\gamma)\Big)
+\frac{\alpha n_k+\kappa}{2}\log\!\Big(1+
\frac{\widehat\omega^{(k)}_j(S')-\widehat\omega^{(k)}_j(S)}{\widehat\omega^{(k)}_j(S)}
\Big)\\
& \leq -\Big(c_0\log p+\frac{1}{2}\log(1+\alpha/\gamma)\Big)
+\frac{\alpha n_k+\kappa}{2} \cdot
\frac{\widehat\omega^{(k)}_j(S')-\widehat\omega^{(k)}_j(S)}{\widehat\omega^{(k)}_j(S)},
\end{align*}
where we used $\log(1+x) \leq x$ for $x \geq 0$, and $\widehat\omega^{(k)}_j(S') \geq \widehat\omega^{(k)}_j(S)$ since $S'\subset S$.
Next, we upper bound the increment $\widehat\omega^{(k)}_j(S')-\widehat\omega^{(k)}_j(S)$ and lower bound $\widehat\omega^{(k)}_j(S)$.

Since $S^*\subseteq S'\subset S$, the regression residual equals the error term in~\eqref{eq:data}, and thus
\begin{align*}
\widehat\omega^{(k)}_j(S')-\widehat\omega^{(k)}_j(S)
&=
\frac{1}{n_k} \,X^{(k)\top}_j \Big(
\proj^{(k)}_{S}-\proj^{(k)}_{S'}
\Big) X^{(k)}_j  \\
&=
\frac{1}{n_k} \,\varepsilon^{(k)\top}_{\sigma, j} \Big(
\proj^{(k)}_{S}-\proj^{(k)}_{S'}
\Big)\varepsilon^{(k)}_{\sigma, j}\\
&=
\frac{\omega^{(k)}_{\sigma, j}}{n_k}\,z^{(k)\top}_{\sigma, j} \Big(
\proj^{(k)}_{S}-\proj^{(k)}_{S'}
\Big) z^{(k)}_{\sigma, j}\\
& \leq \frac{\omega^{(k)}_{\sigma, j}\rho \log p}{n_k},
\end{align*}
where the last inequality holds on the event $\cB_k$.

Moreover,
\begin{align*}
\widehat\omega^{(k)}_j(S)
&= \frac{1}{n_k}\,X^{(k)\top}_j \proj_{S}^{(k)\perp} X^{(k)}_j
= \frac{1}{n_k}\,\varepsilon^{(k)\top}_{\sigma, j} \proj_{S}^{(k)\perp}\varepsilon^{(k)}_{\sigma, j}\\
&= \frac{\omega^{(k)}_{\sigma, j}}{n_k}\,z^{(k)\top}_{\sigma, j} \proj_{S}^{(k)\perp} z^{(k)}_{\sigma, j}
\ \ge\ \frac{\omega^{(k)}_{\sigma, j}}{2},
\end{align*}
where the last inequality holds on the event $\cA_k$. Combining the above bounds, on $\cA_k\cap\cB_k$ we obtain
\[
\frac{\widehat\omega^{(k)}_j(S')-\widehat\omega^{(k)}_j(S)}{\widehat\omega^{(k)}_j(S)}
\ \le\ 
\frac{\omega^{(k)}_{\sigma, j}\rho\log p}{n_k \omega^{(k)}_{\sigma, j}/2}
=
\frac{2\rho\log p}{n_k}.
\]
Therefore,
\begin{align*}
\phi^{(k)}_j(S)-\phi^{(k)}_j(S')
&\le
-\Big(c_0\log p+\frac{1}{2}\log(1+\alpha/\gamma)\Big)
+\frac{\alpha n_k+\kappa}{2}\cdot \frac{2\rho\log p}{n_k}\\
&=
-\Big(c_0\log p+\frac{1}{2}\log(1+\alpha/\gamma)\Big)
+\rho\Big(\alpha+\frac{\kappa}{n_k}\Big)\log p\\
&\le
\bigl(\rho(\alpha+1)-c_0\bigr)\log p
\ \le\ -t\log p\ <\ 0,
\end{align*}
where we used $\kappa\le n\le n_k$ and assumption (A3) in the last two inequalities. 
Hence $\phi^{(k)}_j(S')>\phi^{(k)}_j(S)$, so no strict superset $S\supsetneq S^*$ can maximize $\phi^{(k)}_j(\cdot)$ over $\mathcal{P}_j^\sigma(d)$. We note that the result holds for $S \in \mathcal{P}_j^\sigma(2d)$ such that $S^* \subsetneq S$.

\textbf{Case 2: $S^* \not\subseteq S$.} Here, we note that since $|S|, |S^*| \leq d$, we have $|S \cup S^*| \leq 2d$.
We will show that there exists $\ell \in S^* \setminus S$ such that 
$\phi^{(k)}_j(S) <\phi^{(k)}_j(S \cup \{\ell\})$. 
Let $S' = S \cup \{\ell\}$. Then 
\begin{align*}
\phi^{(k)}_j(S)-\phi^{(k)}_j(S')
&=
c_0\log p+\frac{1}{2}\log(1+\alpha/\gamma)
-\frac{\alpha n_k+\kappa}{2}\log\!\Big(\frac{\widehat\omega^{(k)}_j(S)}{\widehat\omega^{(k)}_j(S')}\Big) \\[4pt]
&=
c_0\log p+\frac{1}{2}\log(1+\alpha/\gamma)
-\frac{\alpha n_k+\kappa}{2}\log\!\Big(1+
\frac{\widehat\omega^{(k)}_j(S)-\widehat\omega^{(k)}_j(S')}{\widehat\omega^{(k)}_j(S')}
\Big).
\end{align*}
Since $S\subset S'$, we have $\widehat\omega^{(k)}_j(S)\ge \widehat\omega^{(k)}_j(S')$.
Using the inequality $\log(1+x)\ge x/(1+x)$ for $x\ge 0$, we obtain
\begin{align*}
\phi^{(k)}_j(S)-\phi^{(k)}_j(S')
&\le 
c_0\log p+\frac{1}{2}\log(1+\alpha/\gamma)
-\frac{\alpha n_k+\kappa}{2}\cdot
\frac{\widehat\omega^{(k)}_j(S)-\widehat\omega^{(k)}_j(S')}{\widehat\omega^{(k)}_j(S)}\\
&= c_0\log p+\frac{1}{2}\log(1+\alpha/\gamma)
-\frac{\alpha n_k+\kappa}{2}\cdot
\frac{X^{(k)\top}_j \Big(
\proj^{(k)}_{S'}-\proj^{(k)}_{S}
\Big) X^{(k)}_j }{X^{(k)\top}_j
\proj^{(k)\perp}_{S}
 X^{(k)}_j }
\end{align*}

Next we  upper bound $X^{(k)\top}_j 
\proj^{(k)\perp}_{S}
 X^{(k)}_j$ and lower bound $X^{(k)\top}_j \Big(
\proj^{(k)}_{S'}-\proj^{(k)}_{S}
\Big) X^{(k)}_j $. For the first term, we have
\begin{align*}
    X^{(k)\top}_j 
\proj^{(k)\perp}_{S}
 X^{(k)}_j \leq X^{(k)\top}_j X^{(k)}_j  \leq n_k\overline{\nu},
\end{align*}
where the last inequality is from the event $\cC_k$. 
For the second term, let $Z^{(k)}_j  = \sum_{i \in S^*} (B^{(k)}_\sigma)_{ij} X_i^{(k)}$ in~\eqref{eq:data}, and let $\ell\in S^*\setminus S$ be chosen as
\[
\ell \in \arg\max_{u\in S^*\setminus S} 
\ Z^{(k)\top}_j\bigl(\proj^{(k)}_{S\cup\{u\}}-\proj^{(k)}_{S}\bigr)Z^{(k)}_j.
\]
Since $\proj^{(k)}_{S'}-\proj^{(k)}_{S}$ is an orthogonal projection matrix, we have
\[
X^{(k)\top}_j(\proj^{(k)}_{S'}-\proj^{(k)}_{S})X^{(k)}_j
=
\bigl\|(\proj^{(k)}_{S'}-\proj^{(k)}_{S})X^{(k)}_j\bigr\|_2^2.
\]
By the triangle inequality,
\begin{align*}
\bigl\|(\proj^{(k)}_{S'}-\proj^{(k)}_{S})X^{(k)}_j\bigr\|_2
&\ge
\bigl\|(\proj^{(k)}_{S'}-\proj^{(k)}_{S})Z^{(k)}_j\bigr\|_2
-
\bigl\|(\proj^{(k)}_{S'}-\proj^{(k)}_{S})\varepsilon^{(k)}_{\sigma,j}\bigr\|_2,
\end{align*}
and hence
\begin{align*}
X^{(k)\top}_j(\proj^{(k)}_{S'}-\proj^{(k)}_{S})X^{(k)}_j
\ &\ge\
\Bigl(
\bigl\|(\proj^{(k)}_{S'}-\proj^{(k)}_{S})Z^{(k)}_j\bigr\|_2
-
\bigl\|(\proj^{(k)}_{S'}-\proj^{(k)}_{S})\varepsilon^{(k)}_{\sigma,j}\bigr\|_2
\Bigr)^2 \\
 &\ge \frac{9c_0}{\alpha}\overline{\nu}\log p ,
\end{align*}
where the last inequality is followed by Lemma 5 of~\cite{Chang2022-id}, where we have
\begin{align*}
\bigl\|(\proj^{(k)}_{S'}-\proj^{(k)}_{S})Z^{(k)}_j\bigr\|_2^2
& \ge
\frac{16c_0}{\alpha}\overline{\nu}\log p  \\ 
\bigl\|(\proj^{(k)}_{S'}-\proj^{(k)}_{S})\varepsilon^{(k)}_{\sigma, j}\bigr\|_2^2
& = \omega^{(k)}_{\sigma, j} \bigl\|(\proj^{(k)}_{S'} -\proj^{(k)}_{S}) z^{(k)}_{\sigma, j}\bigr\|_2^2 \leq \rho \omega^{(k)}_{\sigma, j}\log p \leq \frac{c_0}{\alpha} \overline{\nu} \log p.
\end{align*}
Finally, choosing $\rho$ sufficiently large in (A3), we can conclude with 
\begin{align*}
\phi^{(k)}_j(S)-\phi^{(k)}_j(S')
&\leq c_0\log p+\frac{1}{2}\log(1+\alpha/\gamma)
-\frac{\alpha n_k+\kappa}{2}\cdot
\frac{9c_0 \overline{\nu} \log p }{\alpha n_k\overline{\nu}} \leq (1 - 7c_0/2) \log p < 0.
\end{align*}
Starting from $S \in \mathcal{P}_j^\sigma(d)$, we apply the result repeatedly until all variables in $S^* \setminus S$ are added. Since each intermediate set has size at most $|S \cup S^*| \leq 2d$, we have 
\begin{align*}
    \phi_j^{(k)}(S) < \phi_j^{(k)}(S \cup S^*).
\end{align*}
If $S \cup S^* \neq S^*$, we apply \textbf{Case 1} repeatedly to delete the variables in $(S \cup S^*) \setminus S^*$, obtaining $   \phi_j^{(k)}(S^*) > \phi_j^{(k)}(S^* \cup S)$. Therefore, $\{\phi_j^{(k)}(S):S\in \mathcal{P}_j^\sigma(d)\}$ is maximized at $S^* = \Pa_j(G^{(k)}_\sigma)$ for all $j \in [p]$ and $k \in [K]$, and $\phi^{(k)}$ is maximized at $G^{(k)}_\sigma$.

To see (2), we note that the score $\phi^{(k)}$
is score equivalent; that is, $\phi^{(k)}(G)=\phi^{(k)}(G')$
 for any Markov equivalent DAGs  $G$ and $G'.$
The proof follows the same argument as Lemma~3 of~\citet{Zhou2023-cm}.  
 The proof is based on the fact that any two Markov equivalent DAGs
can be connected by a sequence of covered edge reversals (Lemma~\ref{lemma:reversal_covered}), and $\phi^{(k)}$ is invariant under a single
covered reversal. Let $G'$ be obtained from $G$ by reversing a covered edge $i\to j$ in $G$ to $j\to i$ in $G'$. Observe that the parent sets of all nodes except $i$ and $j$ coincide in $G$ and $G'$.  Let $S = \Pa_i(G) = \Pa_j(G')$, we have
\begin{align*}
\phi^{(k)}(G)-\phi^{(k)}(G')
&=
\phi^{(k)}_i\!\bigl(S\bigr)
+
\phi^{(k)}_j\!\bigl(S \cup \{i\}\bigr)
-
\phi^{(k)}_i\!\bigl(S \cup \{j\}\bigr)
-
\phi^{(k)}_j\!\bigl(S\bigr) \\
&=
-\frac{\alpha n_k+\kappa}{2}\,
\log\!\Bigg(
\frac{
\widehat\omega^{(k)}_i(S)\,
\widehat\omega^{(k)}_j(S \cup \{i\})
}{
\widehat\omega^{(k)}_i(S \cup \{j\})\,
\widehat\omega^{(k)}_j(S)
}
\Bigg) \\
&=
-\frac{\alpha n_k+\kappa}{2}\,
\log\!\Bigg(
\frac{
\left(X_i^{(k)}\right)^{\top}\proj_{S}^{\perp}X_i^{(k)}
\;
\left(X_j^{(k)}\right)^{\top}\proj_{S \cup \{i\}}^{\perp}X_j^{(k)}
}{
\left(X_i^{(k)}\right)^{\top}\proj_{S \cup \{j\}}^{\perp}X_i^{(k)}
\;
\left(X_j^{(k)}\right)^{\top}\proj_{S}^{\perp}X_j^{(k)}
}
\Bigg) \\
&= 0.
\end{align*}

To see this, define residuals
\(r_i \coloneqq \proj_S^{(k)\perp} X_i^{(k)}\), \(r_j \coloneqq \proj_S^{(k)\perp} X_j^{(k)}\). Since neither \(X_i^{(k)}\) nor \(X_j^{(k)}\) lies in the column space of
\(X_S^{(k)}\), we have \(r_i \neq 0\) and \(r_j \neq 0\).
Adding one variable yields the rank-one updates
\[
\proj_{S\cup \{i\}}^{(k)\perp} \;=\; \proj_S^{(k)\perp} - \frac{r_i r_i^\top}{r_i^\top r_i},
\qquad
\proj_{S\cup \{j\}}^{(k)\perp} \;=\; \proj_S^{(k)\perp} - \frac{r_j r_j^\top}{r_j^\top r_j}.
\]
Plugging in these expressions gives
\begin{align*}
\left(X_i^{(k)}\right)^{\top}\proj_{S}^{(k)\perp}X_i^{(k)}
\;
\left(X_j^{(k)}\right)^{\top}\proj_{S \cup \{i\}}^{(k)\perp}X_j^{(k)}
&=
(r_i^\top r_i)
\left(
r_j^\top r_j
-
\frac{(r_i^\top r_j)^2}{r_i^\top r_i}
\right) \\
&=
r_i^\top r_i\, r_j^\top r_j
-
(r_i^\top r_j)^2 \\
&=
\left(
r_i^\top r_i
-
\frac{(r_i^\top r_j)^2}{r_j^\top r_j}
\right)
(r_j^\top r_j) \\
&=
\left(X_i^{(k)}\right)^{\top}\proj_{S \cup \{j\}}^{\perp}X_i^{(k)}
\;
\left(X_j^{(k)}\right)^{\top}\proj_{S}^{\perp}X_j^{(k)}.
\end{align*}
Now, we show that $\pi^{(k)}(\sigma)$ is score equivalent. Let 
\begin{align}\label{eq:unnormalized}
\tilde{\pi}^{(k)}(  \sigma)  =  \sum_{G\in \cG_p^\sigma}\,\exp\left( \phi^{(k)}(G)\right) \ind_{ \{ \widehat{G}^{(k)}_\sigma \} }(G) = \exp \left( \phi^{(k)}(\widehat{G}^{(k)}_\sigma) \right)
\end{align}
denote the unnormalized marginal contribution of the $k$-th dataset. 
Let $G^{(k)}_*$ be the true DAG generating $X^{(k)}$, and let $\sigma^*$ be an ordering consistent with $G^{(k)}_*$.   For any ordering $\sigma$ such that the minimal I-map $G^{(k)}_\sigma$ with respect to the ordering $\sigma$ is Markov equivalent to $G^{(k)}_*$, (1) implies
\begin{align*}
\log \left(\tilde{\pi}^{(k)}(  \sigma) \right) = \phi^{(k)}(\widehat{G}^{(k)}_\sigma) = \phi^{(k)}(G^{(k)}_\sigma) = \phi^{(k)}(G^{(k)}_*) = \phi^{(k)}(\widehat{G}^{(k)}_{\sigma^*}) = \log \left(\tilde{\pi}^{(k)}(\sigma^*) \right).
\end{align*}
Therefore, the corresponding normalized quantities satisfy $\pi^{(k)}(\sigma) = \pi^{(k)}(\sigma^*)$.

Next, let $G^{(k)}_\tau$ be the minimal I-map of $G^{(k)}_*$ with respect to $\tau$, and suppose that $G^{(k)}_\tau$ and $G^{(k)}_*$ are not Markov equivalent. Then $|G^{(k)}_*| < |G^{(k)}_\tau|$. 
By Theorem~\ref{thm:meek}, there exists a sequence of DAGs
\begin{align*}
    G^{(k)}_* = G_0, G_1, \dots, G_L = G^{(k)}_\tau,
\end{align*}
which satisfies the conditions of the theorem.
Let 
\begin{align*}
     U = \{u \in [L]: G_{u} \text{ is obtained from } G_{u-1} \text{ by adding a single edge} \}.
\end{align*}
Since $G^{(k)}_\tau$ and $G^{(k)}_*$ are not Markov equivalent, we have $|U| \geq 1$. If $u \notin U$, then $\phi^{(k)}(G_{u-1}) = \phi^{(k)}(G_{u})$ since $G_u$ and $G_{u-1}$ are Markov equivalent, differing only by a single covered edge reversal (Lemma~\ref{lemma:reversal_covered}). 

For any $u \in U$, we claim that $\phi^{(k)}(G_{u-1}) > \phi^{(k)}(G_{u})$. To see this, choose an ordering $\sigma$ that is consistent with $G_{u}$. Then $G_{u-1}$ is also consistent with $\sigma$ since every edge of $G_{u-1}$ appears in $G_u$. Since $G_\sigma^{(k)}$ is the minimal I-map of  $G^{(k)}_*$ with respect to $\sigma$, and since $G_{u-1}$ is an I-map of $G^{(k)}_*$ that is also consistent with $\sigma$,
it follows that $G_\sigma^{(k)} \subseteq G_{u-1}$. 

Let $S_j = \Pa_j(G^{(k)}_\sigma)$. Then we have $S_{j} \subseteq  \Pa_{j}(G_{u-1}) \subseteq \Pa_{j}(G_{u})$ for all $j \in [p]$. Also, there exists a unique index $j^* \in [p]$ such that $\Pa_{j^*}(G_{u-1}) \subsetneq \Pa_{j^*}(G_{u})$ while for all $j\neq j^*$,  $\Pa_j(G_{u}) = \Pa_j(G_{u-1})$. By using the result of \textbf{Case 1} in Section~\ref{subsub:MAP}, we have 
$\phi^{(k)}_{j^*}(\Pa_{j^*}(G_{u-1}) ) > \phi^{(k)}_{j^*}(\Pa_{j^*}(G_{u}))$ and $\phi^{(k)}_{j}(\Pa_{j}(G_{u-1}) ) = \phi^{(k)}_{j}(\Pa_{j}(G_{u}))$ for $j \neq j^*$. Therefore, 
\begin{align*}
\log \left(\tilde{\pi}^{(k)}(  \tau) \right) = \phi^{(k)}(\widehat{G}^{(k)}_\tau) \leq \phi^{(k)}(G^{(k)}_*) - t\log p =  \log \left(\tilde{\pi}^{(k)}(\sigma^*) \right)- t\log p.
\end{align*}
Therefore, we have $\pi^{(k)}(\tau) < \pi^{(k)}(\sigma^*)$ after normalization.

\subsubsection{Proof of Theorem~\ref{thm:score-eq}}\label{subsub:score-eq}

By observing
\begin{align*}
    \pi(\sigma) &= \sum_{G^{(1)}\in \cG_p^\sigma } \cdots
        \sum_{G^{(K)}\in \cG_p^\sigma } \pi(G^{(1)}, \dots, G^{(K)}, \sigma)\\
    &= \sum_{G^{(1)}\in \cG_p^\sigma } \cdots
        \sum_{G^{(K)}\in \cG_p^\sigma }\left\{\prod_{k=1}^K \pi^{(k)}(G^{(k)} , \sigma)  \right\} \\
    &= \prod_{k=1}^K \left\{ \sum_{G^{(k)}\in \cG_p^\sigma } \pi^{(k)}(G^{(k)} , \sigma)  \right\} \\
    &= \prod_{k=1}^K \pi^{(k)}(\sigma), 
\end{align*}
where the second last equality holds due to $\pi^{(k)}(G^{(k)}, \sigma) \propto \exp(\phi^{(k)}(G^{(k)}))\ind_{\{\widehat{G}_\sigma^{(k)}\}}(G^{(k)})$.
By Proposition~\ref{prop:score-eq}, we have $\pi^{(k)}$ asymptotically score equivalent. 
The posterior distribution $\pi$ satisfies Definition~\ref{def:joint_score} with $F$ given by $F(x_1, \dots, x_K) = \prod_{k=1}^K x_k $ for all $x_k > 0, k \in [K]$, as $F$ is strictly increasing in each coordinate. Moreover, the $\alpha$-fractional posterior Bayes factor
\begin{align*}
    \mathrm{BF}^\alpha_{\sigma, \tau}(X) &= \frac{
\sum_{G^{(1)} \in \mathcal G_p^\sigma}\cdots
\sum_{G^{(K)} \in \mathcal G_p^\sigma}
\prod_{k=1}^K
L_k^{\alpha}\!\left(X^{(k)} \mid G^{(k)}\right)
\,\pi_0\!\left(G^{(k)} \mid \sigma\right)
}{
\sum_{G^{(1)} \in \mathcal G_p^\tau}\cdots
\sum_{G^{(K)} \in \mathcal G_p^\tau}
\prod_{k=1}^K
L_k^{\alpha}\!\left(X^{(k)} \mid G^{(k)}\right)
\,\pi_0\!\left(G^{(k)} \mid \tau\right)
} \\
&= \frac{\pi(\sigma)}{\pi(\tau)}
\cdot
\frac{\pi_0(\tau)}{\pi_0(\sigma)} = \frac{\pi(\sigma)}{\pi(\tau)} = \frac{\prod_{k=1}^K \tilde{\pi}^{(k)}(\sigma)}{\prod_{k=1}^K \tilde{\pi}^{(k)}(\tau)},
\end{align*}
where we have $\pi_0(\sigma)\propto 1$ and $\tilde{\pi}^{(k)}$ is the unnormalized marginal contribution  defined in \eqref{eq:unnormalized}.  Suppose that $\sigma \in \bigcap_{G\in \cG_*} \cL^{\cup}(\cE(G))$ and $\tau \notin \bigcap_{G\in \cG_*} \cL^{\cup}(\cE(G))$. By Proposition~\ref{prop:score-eq}, for every $k$, $\log \tilde{\pi}^{(k)} (\sigma) - \log \tilde{\pi}^{(k)} (\tau) \geq 0$, and there exists $k' \in [K]$ such that $G^{(k')}_\tau$ is not Markov equivalent to $G^{(k')}_*$, so $\log \tilde{\pi}^{(k')} (\sigma) - \log \tilde{\pi}^{(k')} (\tau) \geq t \log p$. This leads to 
\begin{align*}
    \mathrm{BF}^\alpha_{\sigma, \tau}(X) &   =  \exp\left(\sum_{k=1}^K\phi^{(k)}(\widehat{G}_\sigma^{(k)}) - \sum_{k=1}^K\phi^{(k)}(\widehat{G}_\tau^{(k)})\right)
    \\ &= \exp\left(\sum_{k=1}^K\phi^{(k)}(G_\sigma^{(k)}) - \phi^{(k)}(G_\tau^{(k)})\right)
    \\ &\ge \exp\left( \phi^{(k')}(G_\sigma^{(k')}) - \phi^{(k')}(G_\tau^{(k')})\right) \ge p^{t} \rightarrow \infty,
\end{align*}
where the third inequality holds from Proposition~\ref{prop:score-eq} (1), and the last inequality holds from the result of \textbf{Case 1} in Section~\ref{subsub:MAP}.

\subsection{Proofs in Section~\ref{subsec:compute}}\label{subsec:proof_compute}

\subsubsection{Proof of Proposition~\ref{prop:MAP}}\label{subsub:MAP_stepwise}

\begin{proof}[Proof of Proposition~\ref{prop:MAP}]
    The proof closely follows that in Section~\ref{subsub:MAP}. Fix $\sigma\in\bbS^p$, $k\in[K]$, and $j\in[p]$, and let $S^*=\Pa_j(G^{(k)}_\sigma)$. 
We initialize the algorithm at $S = \emptyset$.  If $S^* = \emptyset$, for any $\ell \in [p]$, \textbf{Case 1} in Section~\ref{subsub:MAP} implies  $\phi^{(k)}_j(\emptyset)  > \phi^{(k)}_j(\{\ell\})$. Therefore, Algorithm~\ref{alg:stepwise} correctly returns $S^* = \emptyset$. 

Suppose now that $S^* \neq \emptyset$. During the forward step, the algorithm continues adding elements until $S^* \subseteq S$. To see this, if $S^* \setminus S \neq \emptyset$, then by \textbf{Case 2} of Section~\ref{subsub:MAP}, there exists some $ \ell \in S^* \setminus S$, such that $\phi^{(k)}_j(S \cup \{\ell\})  > \phi^{(k)}_j(S)$. Therefore, the forward phase cannot terminate while $S^* \setminus S \neq \emptyset$. Since we have assumed that the forward phase terminates within $d$ iterations, the set $S^F$ at the final iteration satisfies $S^* \subseteq S^F$. 
Once the backward step begins with $S^* \subseteq S^F$, with $|S^F| \leq d$, \textbf{Case 2} of  Section~\ref{subsub:MAP} implies that no true parent $m \in S^*$ can be removed for any $S$ satisfying $S^* \subseteq S \subseteq S^F$. On the other hand, \textbf{Case 1} of Section~\ref{subsub:MAP} guarantees that, for 
any $ \ell \in S \setminus S^*$, $\phi^{(k)}_j(S \setminus \{\ell\})  > \phi^{(k)}_j(S)$, so such elements are removed. The procedure terminates exactly at $S = S^*$.
\end{proof}

\subsection{Proofs in Section~\ref{subsec:r2r}}\label{subsec:r2r_supp}

\subsubsection{Proof of Proposition~\ref{prop:covered_minimap}}\label{subsub:covered_minimap}

\begin{proof}[Proof of Proposition~\ref{prop:covered_minimap}]
If $G_\sigma \neq G^*$, then by Theorem~\ref{thm:meek}, there exists a sequence of DAGs $G^* = G_0, \dots, G_L = G_\sigma$ as in~\eqref{eq:chickering}, consisting of more than one DAG. 
Note that $G_{L-1}$ cannot be obtained by removing an edge from $G_\sigma$, because $G_\sigma$ is defined as the minimal I-map of $G^*$ with respect to $\sigma$. If such a $G_{L-1}$ existed, it would be a strictly smaller DAG that remains an I-map of $G^*$, contradicting the minimality of $G_\sigma$.
Therefore, $G_{L-1}$ is obtained from $G_\sigma$ by reversing a covered edge. Let $i\to j \in G_\sigma$ be such covered edge. Then, $G_{L-1} = G_\sigma \cup \{j \rightarrow i \} \setminus \{i \rightarrow j\}$ is the DAG with the reverse covered edge from  $G_\sigma$. By the definition of covered edge, there is no parent node of $j$ in $\{\sigma(k) : \sigma^{-1}(i) < k < \sigma^{-1}(j) \}$. This implies that $\tau =  \mathrm{R2R}_<(\sigma , \sigma^{-1}(i), \sigma^{-1}(j))$ is consistent with $G_{L-1}$. By Lemma 1 in~\cite{chickering1995transformational}, we have 
$G_{L-1}$ is Markov equivalent to $G_\sigma$, which implies that $G_{L-1}$ is an I-map of the true DAG with respect to $\tau$. Since $G_\tau$ is the minimal I-map, we can conclude that $|G_\tau| \leq |G_{L-1}| = |G_\sigma|$.   
\end{proof}

\subsubsection{Proof of Corollary~\ref{cor:covered}}\label{subsub:covered}

\begin{proof}[Proof of Corollary~\ref{cor:covered}]
Since $\sigma \notin \cL(G^{(k)})$, the minimal I-map $G^{(k)}_\sigma$ of $G^{(k)}$ with respect to $\sigma$ is not equal to $G^{(k)}$. Because $G^{(k)}_\sigma$ is an I-map of $G^{(k)}$, Theorem~\ref{thm:meek} implies that there exists a sequence of DAGs $G^{(k)} =  G_0, G_1, \dots, G_L = G^{(k)}_\sigma$, such that, for each $\ell = 0, \dots, L-1$, $G_{\ell+1}$ is an I-map of $G_\ell$ obtained by either adding a single edge or reversing a covered edge. 

If $G^{(k)}_\sigma$ is Markov equivalent to $G^{(k)}$, then all DAGs in the sequence $G_0, G_1, \dots, G_L$ are Markov equivalent. In this case,  there exists a covered edge $j \to i \in G_{L-1}$ such that $G_{L}$ is obtained from $G_{L-1}$ by reversing this covered edge. Define $\tau = \mathrm{R2R}_<(\sigma, \sigma^{-1}(i), \sigma^{-1}(j))$. Then $|G_\tau^{(k)}| = |G_\sigma^{(k)}|$, and $\phi^{(k)}(G_\tau^{(k)}) = \phi^{(k)}(G_\sigma^{(k)})$. Applying Proposition~\ref{prop:score-eq} (1), we conclude that $\pi^{(k)}(\sigma) = \pi^{(k)}(\tau)$.

If $G^{(k)}_\sigma$ is not Markov equivalent to $G^{(k)}$, we have $|G_\sigma^{(k)}| > |G^{(k)}|$. Hence, in the sequence $G_0, G_1, \dots, G_L$, there must exist at least one step involving edge addition.
Since $G_\sigma^{(k)}$ is a minimal I-map, the final transition from $G_{L-1}$ to $G_\sigma^{(k)}$ cannot be edge addition. Therefore, the final step must be a covered edge reversal. In particular, there exists a covered edge $j \to i \in G_{L-1}$ such that $G_\sigma^{(k)}$ is obtained by reversing this edge, yielding $i \to j \in G_\sigma^{(k)}$. Define $\tau = \mathrm{R2R}_<(\sigma, \sigma^{-1}(i), \sigma^{-1}(j))$ again. Then, $\tau$ is  consistent with $G_{L-1}$. If $G_{L-1} = G_\tau^{(k)}$, then by Proposition~\ref{prop:score-eq} (1), we have $\pi^{(k)}(\sigma) = \pi^{(k)}(\tau)$. Otherwise $G_{L-1} \neq G_\tau^{(k)}$, in which case there exists a DAG $G'$ by removing a single edge from $G_{L-1}$. To see this, $G_\tau^{(k)}$ is the minimal I-map with respect to $\tau$, so it is contained in any I-map consistent with $\tau$, including $G_{L-1}$. Since $G_{L-1} \neq G_\tau^{(k)}$, $G_{L-1}$ contains an extra edge not in $G_\tau^{(k)}$, which can be removed while preserving the I-map property.
Therefore, 
\begin{align*}
    \phi^{(k)}(G_\tau^{(k)}) \geq \phi^{(k)}(G') > \phi^{(k)}(G_{L-1}) = \phi^{(k)}(G_\sigma^{(k)})
\end{align*}
Applying Proposition~\ref{prop:score-eq} (1) again, we conclude that $\pi^{(k)}(\tau)  > \pi^{(k)}(\sigma)$. Combining the two results, we obtain $\pi^{(k)}(\tau)  \geq \pi^{(k)}(\sigma)$. 
\end{proof}
\newpage

\section{Derivation of posterior distribution}\label{sec:derivation}

For each $k \in [K]$, let $L_k(B^{(k)}, \Omega^{(k)})$ denote
the likelihood under a linear Gaussian structural equation model,
\begin{equation*}
    X_j^{(k)} = \sum_{i \in \Pa_j(G^{(k)})} (B^{(k)})_{ij} X_i^{(k)} + \varepsilon^{(k)}_{j}, \quad \varepsilon^{(k)}_{j} \sim \mathrm{MVN}_{n_k}(0, \omega^{(k)}_{j}I) \text{ independently for } j \in [p],
\end{equation*}
where $\Omega^{(k)} = \diag(\omega_1^{(k)}, \dots, \omega_p^{(k)})$. The $\alpha$-fractional posterior distribution is given by
\begin{align*}
\pi\!\left(
\{B^{(k)}, \Omega^{(k)}, G^{(k)}\}_{k=1}^K, \sigma
\right) 
&= \pi\!\left(
\{B^{(k)}, \Omega^{(k)}, G^{(k)}\}_{k=1}^K \mid \sigma \right) \cdot \pi(\sigma)\\
&= \prod_{k=1}^K \pi^{(k)}\!\left(
B^{(k)}, \Omega^{(k)}, G^{(k)} \mid \sigma \right) \cdot \pi(\sigma)\\
&= \prod_{k=1}^K \left\{ L_k \left(B^{(k)}, \Omega^{(k)} \right)^\alpha
\, \pi_0\!\left(B^{(k)}, \Omega^{(k)} \mid G^{(k)} \right)
\, \pi_0\!\left(G^{(k)} \mid \sigma\right)
\,\right\} \cdot \pi(\sigma).
\end{align*}
By the normal-inverse-gamma conjugacy,
\begin{align*}
\pi^{(k)}(G^{(k)}\mid \sigma)
&\propto \pi_0(G^{(k)}\mid\sigma)\!
\int \pi_0(B^{(k)},\Omega^{(k)} \mid G^{(k)})\,
L_k(B^{(k)},\Omega^{(k)})^\alpha\,
d(B^{(k)}, \Omega^{(k)}) \\       
&\propto \pi_0(G^{(k)}\mid\sigma)
\int \prod_{j=1}^p 
\left(\frac{\omega_j^{(k)}}{\gamma}\right)^{-\frac{|\Pa_j(G^{(k)})|}{2}}
\det \Bigl(X^{(k)\top}_{\Pa_j(G^{(k)})} X^{(k)}_{\Pa_j(G^{(k)})}\Bigr)^{1/2} \\
&\hspace{-2cm} \times
\exp\!\left\{-\frac{\gamma}{2\omega_j^{(k)}}
(B^{(k)}_{\Pa_j(G^{(k)}),j}-\hat B^{(k)}_{\Pa_j(G^{(k)}),j})^\top
\Bigl(X^{(k)\top}_{\Pa_j(G^{(k)})} X^{(k)}_{\Pa_j(G^{(k)})}\Bigr)
(B^{(k)}_{\Pa_j(G^{(k)}),j}-\hat B^{(k)}_{\Pa_j(G^{(k)}),j})\right\} \\ %\displaybreak
&\hspace{-2cm} \times 
(\omega_j^{(k)})^{-\frac{\kappa}{2}-1-\frac{\alpha n_k}{2}}
\exp\!\left\{-\frac{\alpha}{2\omega_j^{(k)}}
\|X^{(k)}_j - X^{(k)}_{\Pa_j(G^{(k)})} B^{(k)}_{\Pa_j(G^{(k)}),j}\|^2\right\}\,
dB^{(k)}_{\Pa_j(G^{(k)}),j}\,d\omega_j^{(k)}.\\
&\hspace{-2.5cm} \propto 
\pi_0(G^{(k)}\mid\sigma)
\int \prod_{j=1}^p
\left( \frac{\omega_j^{(k)}}{\gamma} \right)^{-\frac{|\Pa_j(G^{(k)})|}{2}}
\left( \omega_j^{(k)} \right)^{-\frac{\alpha n_k + \kappa}{2} - 1} \exp\!\left\{
-\frac{\alpha}{2\omega_j^{(k)}}
X^{(k)\top}_j
\proj^{(k)\perp}_{\Pa_j(G^{(k)})}
X^{(k)}_j
\right\} \\
&\hspace{-2cm} \times \int \exp\!\left\{
-\frac{\alpha+\gamma}{2\omega_j^{(k)}}
\bigl(B^{(k)}_{\Pa_j(G^{(k)}),j}-\hat B^{(k)}_{\Pa_j(G^{(k)}),j}\bigr)^\top
\Bigl(X^{(k)\top}_{\Pa_j(G^{(k)})} X^{(k)}_{\Pa_j(G^{(k)})}\Bigr)
\bigl(B^{(k)}_{\Pa_j(G^{(k)}),j}-\hat B^{(k)}_{\Pa_j(G^{(k)}),j}\bigr)
\right\}\,
 \\
&\hspace{-2cm} \times 
\det \Bigl(X^{(k)\top}_{\Pa_j(G^{(k)})} X^{(k)}_{\Pa_j(G^{(k)})}\Bigr)^{1/2}
dB^{(k)}_{\Pa_j(G^{(k)}),j}\, d\omega_j^{(k)} \\
&\hspace{-2.5cm} = \pi_0 (G^{(k)}\mid\sigma) \prod_{j=1}^p  \left(1 + \frac{\alpha}{\gamma} \right)^{-\frac{|\Pa_j(G^{(k)})|}{2}}  \int 
       \left( \omega_j^{(k)} \right)^{-\frac{\alpha n_k + \kappa}{2} - 1} \exp \left\{ -\frac{\alpha}{2\omega_j^{(k)}} 
       X^{(k)\top}_j \proj^{(k)\perp}_{\Pa_j(G^{(k)})} X^{(k)}_j \right\} 
       d\omega_j^{(k)}\\
&\hspace{-2.5cm} \propto \pi_0 (G^{(k)}\mid\sigma) \left(1 + \frac{\alpha}{\gamma} \right)^{-\frac{|G^{(k)}|}{2}}\prod_{j=1}^p  \left(n_k\hat{\omega}^{(k)}_j(\Pa_j(G^{(k)}))\right)^{-\frac{\alpha n_k + \kappa}{2}},
\end{align*}
\noindent
where we recall that $\proj^{(k)\perp}_S = I - X^{(k)}_{S}\bigl(X^{(k)\top}_{S} X^{(k)}_{S}\bigr)^{-1}X^{(k)\top}_{S}$ is the orthogonal projection onto its orthogonal complement, and $\widehat{\omega}^{(k)}_j(S) = n_k^{-1} X^{(k)\top}_j \proj^{(k)\perp}_SX^{(k)}_j$ is the residual variance from the linear regression of $X^{(k)}_j$ onto $X^{(k)}_{S}$.
We obtain the joint posterior distribution of $(G^{(1)}, \dots ,G^{(K)}, \sigma)$ as 
\begin{align*}
\pi(G^{(1)},\dots,G^{(K)},\sigma)
&\propto \pi(\sigma) \cdot
\prod_{k=1}^K \pi^{(k)}(G^{(k)}\mid\sigma)\\
&\propto  
\prod_{k=1}^K  p^{-c_0|G^{(k)}|} \left(1 + \frac{\alpha}{\gamma} \right)^{-\frac{|G^{(k)}|}{2}}
\prod_{j=1}^p
\left(n_k\hat{\omega}_j^{(k)}(\Pa_j(G^{(k)}))\right)^{-\frac{\alpha n_k+\kappa}{2}} \;\ind_{\{\hat G^{(k)}_\sigma\}}(G^{(k)}).
\\
&=   \prod_{k=1}^K
\exp ({\phi^{(k)}(G^{(k)})} )\,
\ind_{\{\hat G^{(k)}_\sigma\}}(G^{(k)}).
\end{align*}
\newpage
\section{Forward-backward edge selection}\label{sec:forwardbackward}
Given an ordering $\sigma$, structure learning reduces to selecting parent sets from $P_j^\sigma$ for each $j \in [p]$. We employ a forward–backward edge selection procedure that leverages score decomposability to efficiently construct locally optimal parent sets. The forward phase greedily adds edges that improve the local score, while the backward phase removes redundant edges to mitigate overfitting and refine the solution. This node-wise optimization significantly reduces computational complexity compared to exhaustive DAG search. As stated in Remark~\ref{remark:parallel}, the procedure can be carried out independently for each node, allowing efficient parallel implementation. 

\begin{algorithm}[!b]
\caption{Forward-backward edge selection given an ordering}
\label{alg:stepwise}
\KwIn{Data matrix $X$, decomposable score function $\phi = \sum_{j=1}^p\phi_j$, and ordering $\sigma$} \vspace{-1mm}

\For{$j = 1, \ldots, p$}{ \vspace{-1mm}
\textbf{Forward edge selection}\; \vspace{-1mm}
    $S_j \leftarrow \emptyset$\; \vspace{-1mm}
    \While{$\mathrm{true}$}{ \vspace{-1mm}
    \If{$P_j^\sigma \setminus S_j = \emptyset$}{\textbf{break}\;}
        Choose $i^\star \in \arg\max_{i \in P_j^\sigma \setminus S_j} \Big( \phi_j(S_j\cup\{i\})-\phi_j(S_j) \Big)$\;  \vspace{-1mm}
        \If{$\phi_j(S_j \cup \{i^\star\}) - \phi_j(S_j) > 0$}{ \vspace{-1mm}
            $S_j \leftarrow S_j \cup \{i^\star\}$\; \vspace{-1mm}
        } \vspace{-1mm}
        \Else{ \vspace{-1mm}
            \textbf{break}\; \vspace{-1mm}
        } 
    } \vspace{-1mm}
    \textbf{Backward edge deletion}\; \vspace{-1mm}

    \While{$S_j \neq \emptyset$}{ \vspace{-1mm}
        Choose $i^\star \in \arg\max_{i \in S_j} 
        \Big( \phi_j(S_j \setminus \{i\}) - \phi_j(S_j) \Big)$\; \vspace{-1mm}
        \If{$\phi_j(S_j \setminus \{i^\star\}) - \phi_j(S_j) > 0$}{ \vspace{-1mm}
            $S_j \leftarrow S_j \setminus \{i^\star\}$\; \vspace{-1mm}
        } \vspace{-1mm}
        \Else{ \vspace{-1mm}
            \textbf{break}\; \vspace{-1mm}
        
    } 
} 
}
\KwOut{Estimated parent sets $\{S_j\}_{j=1}^p$ and induced graph $\widehat{G}_\sigma$}
\end{algorithm}

\newpage
\section{Additional simulation results}\label{sec:additional_simul}

\subsection{Performance evaluation on different settings}\label{subsec:performance_supp}

To complement the results in Section~\ref{subsec:performance}, we examine the performance of the proposed method under additional configurations. These experiments are designed to verify that the observed performance gains are robust to smaller sample sizes and are not driven by an increase in the total number of observations.

\noindent
\textbf{Smaller sample size.}
We repeat the simulation study described in Section~\ref{subsec:performance}, but reduce the sample size of each dataset to $n_k = 200$. All other settings remain unchanged. The results in Table~\ref{tab:performance_supp_1} are similar to those obtained with $n_k=1000$. This indicates that the proposed method remains effective even with substantially fewer observations per dataset.
\begin{table}[!h]
\centering
\captionsetup{font={footnotesize,stretch=1}}
\renewcommand{\arraystretch}{0.8}
\begin{tabular}{ccccccc}
\toprule
\multirow{2}{*}{} & \multicolumn{5}{c}{$p$} \\
\cmidrule(lr){3-7}
  & & 5 & 10 & 20 & 40 & 100\\ 
\midrule 
\multirow{2}{*}{$K$=1} 
 & $\Delta$ & 2.24 $\pm$ 0.29 & 2.50 $\pm$ 0.29 & 5.76 $\pm$ 0.35 & 9.55 $\pm$ 0.40 & 21.61 $\pm$ 0.54\\
 & $\tau^*$ & 0.17 $\pm$ 0.04 & 0.25 $\pm$ 0.02 & 0.21 $\pm$ 0.01 & 0.20 $\pm$ 0.01 & 0.20 $\pm$ 0.01\\
\midrule 
\multirow{2}{*}{$K$=5}
 & $\Delta$ & 1.52 $\pm$ 0.20 & 1.73 $\pm$ 0.15 & 2.75 $\pm$ 0.14 & 5.13 $\pm$ 0.19 & 12.36 $\pm$ 0.34\\
 & $\tau^*$ & 0.58 $\pm$ 0.04 & 0.66 $\pm$ 0.02 & 0.65 $\pm$ 0.01 & 0.62 $\pm$ 0.01 & 0.60 $\pm$ 0.01\\
\midrule
\multirow{2}{*}{$K$=10}
 & $\Delta$ & 0.98 $\pm$ 0.15 & 1.10 $\pm$ 0.09 & 1.74 $\pm$ 0.10 & 3.18 $\pm$ 0.13 & 8.28 $\pm$ 0.19\\
 & $\tau^*$ & 0.74 $\pm$ 0.04 & 0.83 $\pm$ 0.01 & 0.81 $\pm$ 0.01 & 0.79 $\pm$ 0.00 & 0.76 $\pm$ 0.00\\
\midrule
\multirow{2}{*}{$K$=20} 
 & $\Delta$ & 0.66 $\pm$ 0.08 & 0.62 $\pm$ 0.04 & 1.03 $\pm$ 0.06 & 1.87 $\pm$ 0.08 & 5.50 $\pm$ 0.12\\
 & $\tau^*$ & 0.83 $\pm$ 0.02 & 0.92 $\pm$ 0.01 & 0.91 $\pm$ 0.00 & 0.90 $\pm$ 0.00 & 0.87 $\pm$ 0.00\\
\bottomrule
\end{tabular}
\caption{The average Hamming distance $\Delta$ and average rank correlation $\tau^*$ for varying $K$ and $p$ with $n_k = 200$. Each value represents $\mathrm{mean}\pm 1$ standard error.}
\label{tab:performance_supp_1}
\end{table}

\noindent
\textbf{Fixed total number of samples.}
To verify that the performance improvement is not due to a larger total sample size, we conduct additional simulations in which the total number of observations is fixed while varying $K$. Specifically, we set the total sample size to $n_{\text{tot}} = 4000$ and assign $n_k = n_{\text{tot}}/K$ observations to each dataset.  Table~\ref{tab:performance_supp_2} shows that increasing $K$ leads to improved estimation accuracy. This confirms that the performance gain is from the data heterogeneity rather than from an increase in the overall sample size.
\begin{table}[!h]
\centering
\captionsetup{font={footnotesize,stretch=1}}
\renewcommand{\arraystretch}{0.8}
\begin{tabular}{ccccccc}
\toprule
\multirow{2}{*}{} & \multicolumn{5}{c}{$p$} \\
\cmidrule(lr){3-7}
  & & 5 & 10 & 20 & 40 & 100\\ 
\midrule 
\multirow{2}{*}{$K$=1} 
 & $\Delta$ & 1.53 $\pm$ 0.18 & 2.02 $\pm$ 0.15 & 4.58 $\pm$ 0.24 & 8.76 $\pm$ 0.36 & 19.45 $\pm$ 0.47\\
 & $\tau^*$ & 0.25 $\pm$ 0.03 & 0.27 $\pm$ 0.02 & 0.23 $\pm$ 0.01 & 0.20 $\pm$ 0.01 & 0.20 $\pm$ 0.01\\
\midrule 
\multirow{2}{*}{$K$=5}
 & $\Delta$ & 1.05 $\pm$ 0.10 & 1.08 $\pm$ 0.08 & 2.22 $\pm$ 0.09 & 4.56 $\pm$ 0.22 & 10.46 $\pm$ 0.22\\
 & $\tau^*$ & 0.67 $\pm$ 0.03 & 0.71 $\pm$ 0.01 & 0.65 $\pm$ 0.01 & 0.63 $\pm$ 0.01 & 0.62 $\pm$ 0.00\\
\midrule
\multirow{2}{*}{$K$=10}
 & $\Delta$ & 0.86 $\pm$ 0.11 & 0.71 $\pm$ 0.07 & 1.31 $\pm$ 0.07 & 2.87 $\pm$ 0.11 & 7.02 $\pm$ 0.16\\
 & $\tau^*$ & 0.74 $\pm$ 0.03 & 0.86 $\pm$ 0.01 & 0.82 $\pm$ 0.01 & 0.80 $\pm$ 0.01 & 0.77 $\pm$ 0.00\\
\midrule
\multirow{2}{*}{$K$=20} 
 & $\Delta$ & 0.66 $\pm$ 0.08 & 0.62 $\pm$ 0.04 & 1.03 $\pm$ 0.06 & 1.87 $\pm$ 0.08 & 5.50 $\pm$ 0.12\\
 & $\tau^*$ & 0.83 $\pm$ 0.02 & 0.92 $\pm$ 0.01 & 0.91 $\pm$ 0.00 & 0.90 $\pm$ 0.00 & 0.87 $\pm$ 0.00\\
\bottomrule
\end{tabular}
\caption{The average Hamming distance $\Delta$ and average rank correlation $\tau^*$ for varying $K$ and $p$ with $n_k = n_{\mathrm{tot}}/K$ and $n_{\mathrm{tot}} = 4000$. Each value represents $\mathrm{mean}\pm 1$ standard error.}
\label{tab:performance_supp_2}
\end{table}

\subsection{When the faithfulness assumption fails}\label{subsec:faithfulness}

As noted in Remark~\ref{remark:faithfulness}, violations of the faithfulness assumption can cause order-based structure learning methods to fail. We examine the effect of joint estimation in mitigating the impact of such violations. To this end, we repeatedly sample a DAG $G$ until it contains at least 2 triangular motifs, where a triangular motif is defined as an induced three-node subgraph $(i, j, \ell)$ with $i \to j$, $i \to \ell$, and $j \to \ell$ are present in $G$, so that there exist two distinct directed paths from $i$ to $\ell$. We set $p=20$.
For each $K \in \{1, 5, 10, 20, 40\}$, we generate $K$ such DAGs and, for each DAG $G$, simulate both faithful and unfaithful datasets. To construct the unfaithful datasets, we randomly select two triangular motifs from $G$, and adjust edge coefficients within each selected triangular motif to induce path cancellation (as the example in Remark~\ref{remark:faithfulness}), yielding a Gaussian distribution that is not faithful with respect to $G$. To construct the faithful datasets, we generate data using the same graph $G$ as described in the main text.
Figure~\ref{fig:faithful} shows an increase in average Hamming distance $\Delta$ under the unfaithful case for the single DAG estimation ($K=1$), whereas the differences are at most marginal for $K \geq 5$. The mean rank correlation $\tau^*$ is slightly smaller in the unfaithful setting when $K=1$, but increases monotonically with $K$ and rapidly approaches one, with the gap between settings diminishing and indicating nearly perfect ordering recovery at $K = 40$.
\begin{figure}[h!]
    \centering
    \captionsetup{font={footnotesize,stretch=1}}
    \begin{minipage}{0.6\textwidth}  % Adjust width as needed
            \includegraphics[width=\linewidth]{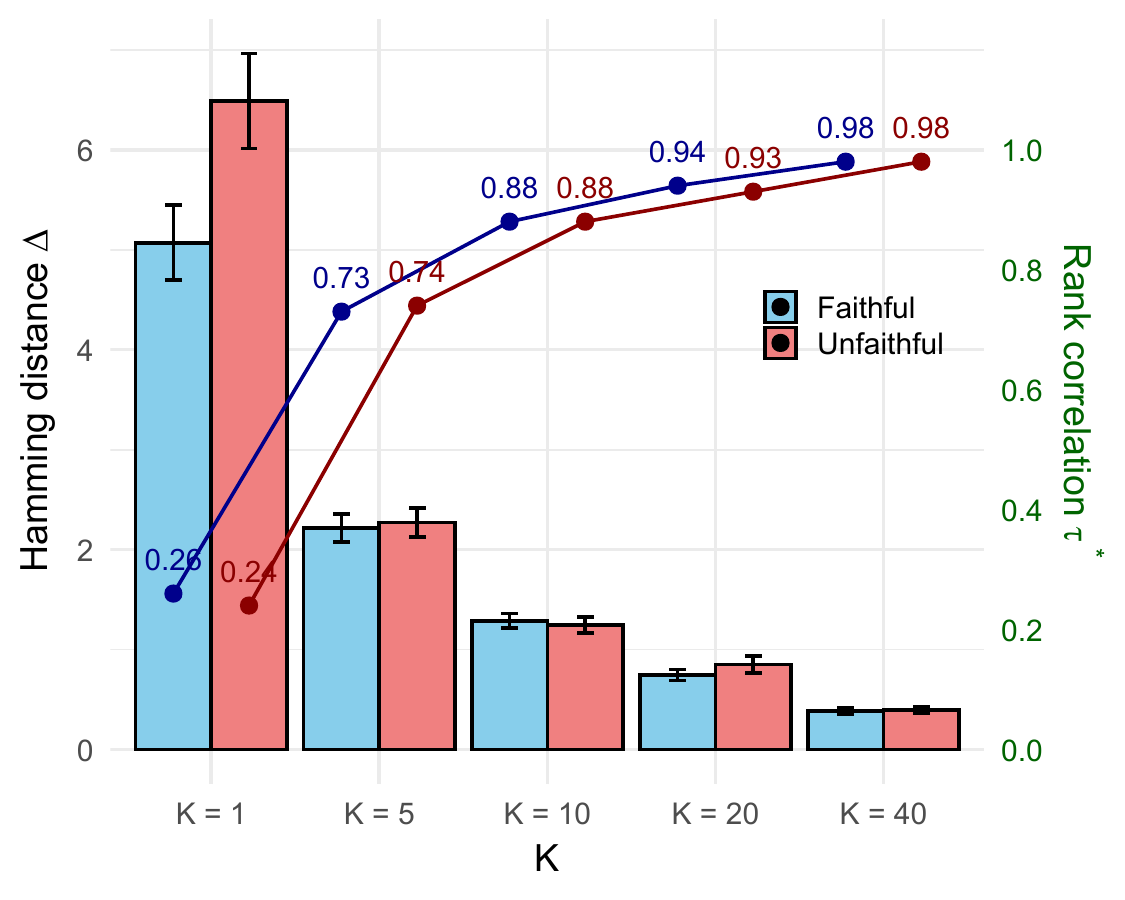}
    \end{minipage}%
    \hspace{0.2cm}
    \begin{minipage}{0.35\textwidth}  % Adjust width to control spacing
        \captionof{figure}{The average Hamming distance $\Delta$  between the true DAGs and their estimates (bar graph) and the mean rank correlation $\tau^*$ between the true ordering and estimated ordering (line graph) using Algorithm~\ref{alg:full}. We repeat 50 experiments.  The blue graphs illustrate results from a probability distribution that is faithful to the true DAG, while the red graphs depict results under unfaithful conditions. The error bars in the bar graphs represent one standard error.}
        \label{fig:faithful}
    \end{minipage}
\end{figure}

\subsection{Performance under high-dimensional regime.}\label{subsec:highdim_simul}

We empirically illustrate the result of Theorem~\ref{thm:score-eq} with varying $n$, $p$, and $K$. To emulate a high-dimensional asymptotic regime where \(n\) grows linearly and \(p\) increases exponentially, we follow the setting of Section 4.2 in \cite{Chang2022-id}, and consider five settings where \(n = 30(\ell+2)\) and \(p = 7 \cdot 2^{\ell}\) in the \(\ell\)-th setting. Note that \(\ell= 5\), we have \(p > n\). We fix $\sigma^* = (1, \dots, p)$, and set $K =  \left\lfloor (p \log p)/20 \right\rfloor$ with \(p_{\mathrm{edge}} = d/(p-1)\), where \(d = 0.2\sqrt{n}\) is the expected number of neighbors for each node. 
We generate $K$ random DAGs, and for each $k \in [K]$, simulate the corresponding data matrix $X^{(k)}$. We enumerate the union of essential arrows across all $K$ DAGs and denote this set by $\mathbf{E}$. We randomly select 50 edges from $\mathbf{E}$. For each essential arrow $i \to j \in \mathbf{E}$, we get an inconsistent ordering $\tau$ by placing $j$ before $i$ in $\sigma^*$.
We select $c_0 = 1$, and compute its posterior probabilities $\pi(\tau)$. We then compare each ordering $\tau$ with $\sigma^*$ using the Bayes factor $\mathrm{BF}^\alpha_{\sigma^*, \tau}.$ 
In Table~\ref{tab:highdim}, we can observe that the log Bayes factor grows rapidly as the problem dimension increases, indicating stronger evidence in favor of $\sigma^*$. 

\begin{table}[t]
\centering
\captionsetup{font={footnotesize,stretch=1}}
\begin{tabular}{c c c c c }
\toprule
$p$ & $n$ & $d$ & $K$ & log(BF) \\
\midrule
14  & 90  & 1.897 & 1  & 8.423 $\pm$ 0.245\\
28  & 120 & 2.191 & 4  & 43.052 $\pm$ 5.036\\
56  & 150 & 2.449 & 11 & 156.280 $\pm$ 17.167\\
112 & 180 & 2.683 & 26 & 494.572 $\pm$ 54.318\\
224 & 210 & 2.898 & 60 & 1613.487 $\pm$ 188.020\\
\bottomrule
\end{tabular}
\caption{Logarithm of the Bayes factor, denoted by $\log(\mathrm{BF})$, under the five different settings. Each value represents $\mathrm{mean}\pm 1$ standard error.
}
\label{tab:highdim}
\end{table}

\subsection{Comparison with other methods with different settings}\label{subsec:compare_supp}
For all competing methods, tuning parameters are selected according to the suggestion in~\cite{Li2024-af}. For the PC algorithm, the significance level used in the conditional independence tests is set to $\alpha = 0.005$. For GES and joint GES, the $\ell_0$-penalization parameter (scaled by $\log p$) is fixed at $\lambda = 2$. For the muSuSiE-DAG method, the penalty parameters are specified as $p^{-\omega_1} = p^{-1.75}/5$, $p^{-\omega_2} = p^{-2}/10$, $p^{-\omega_3} = p^{-2.25}/10$, $p^{-\omega_4} = p^{-2.5}/5$, and $p^{-\omega_5} = p^{-3}$, 
where $\omega_1, \dots, \omega_5$ denote the corresponding penalty exponents. We run muSuSiE-DAG for a total of 100,000 MCMC iterations, retaining the final 5,000 samples after a burn-in of 95,000 iterations.
For the proposed method, we choose $c_0 = 7$, and run for 20,000 iterations, with the first 10,000 iterations discarded as burn-in. 
Table~\ref{tab:comparison1} reports the results under the setting  $|\mathcal{E}_{\mathrm{com}}| = 50$ and $|\mathcal{E}^{(k)}_{\mathrm{pri}}| = 50$,
where the numbers of common and dataset-specific edges are balanced. Table~\ref{tab:comparison2} presents the results under  $|\mathcal{E}_{\mathrm{com}}| = 100$ and $|\mathcal{E}^{(k)}_{\mathrm{pri}}| = 20$, representing a scenario with stronger shared structure across datasets.

\begin{table}[h!]
\centering
\footnotesize
\captionsetup{font={footnotesize,stretch=1}}
\renewcommand{\arraystretch}{1.0}
%\small
\begin{tabular}{cccccc}
\toprule
 & \textbf{PC} & \textbf{GES} & \textbf{Joint GES} & \textbf{muSuSiE-DAG} & \textbf{Proposed} \\
\midrule
$\Delta$ & 36.928 $\pm$ 0.434 & 7.252 $\pm$ 0.483 & 47.204 $\pm$ 1.489 & 34.904 $\pm$ 2.871 & 9.117 $\pm$ 0.266\\
TPR & 0.887 $\pm$ 0.003 & 0.968 $\pm$ 0.002  & 0.879 $\pm$ 0.006 & 0.877 $\pm$ 0.010 & 0.961 $\pm$ 0.001\\
%FPR & 0.003 $\pm$ 0.000 & 0.000 $\pm$ 0.000 & 0.004 $\pm$ 0.000 & 0.002 $\pm$ 0.000 & 0.001 $\pm$ 0.000\\
FDR & 0.186 $\pm$ 0.003 & 0.055 $\pm$ 0.003 & 0.164 $\pm$ 0.003 & 0.192 $\pm$ 0.016 & 0.046 $\pm$ 0.001\\
\bottomrule
\end{tabular}
\caption{Comparison of methods for $K = 5$, $p = 100$, and $n_k = 240$ for $k \in [5]$, under the setting $|\mathcal{E}_{com}| = 50$, $|\mathcal{E}^{(k)}_{pri}| = 50$. $\Delta$ is the average Hamming distance, defined as in~\eqref{eq:delta}. Each value represents $\mathrm{mean}\pm 1$ standard error. }
\label{tab:comparison1}
\end{table}

\begin{table}[h!]
\centering
\footnotesize
\captionsetup{font={footnotesize,stretch=1}}
\renewcommand{\arraystretch}{1.0}
%\small
\begin{tabular}{cccccc}
\toprule
 & \textbf{PC} & \textbf{GES} & \textbf{Joint GES} & \textbf{muSuSiE-DAG} & \textbf{Proposed} \\
\midrule
$\Delta$ & 39.556 $\pm$ 0.644 & 11.972 $\pm$ 0.769 & 25.508 $\pm$ 0.602 & 38.856 $\pm$ 3.662 & 9.787 $\pm$ 0.401\\
TPR & 0.869 $\pm$ 0.003 & 0.957 $\pm$ 0.003  & 0.980 $\pm$ 0.002 & 0.905 $\pm$ 0.008 & 0.965 $\pm$ 0.001\\
%FPR & 0.002 $\pm$ 0.000 & 0.000 $\pm$ 0.000 & 0.002 $\pm$ 0.000 & 0.003 $\pm$ 0.000 & 0.001 $\pm$ 0.000\\
FDR & 0.224 $\pm$ 0.002 & 0.040 $\pm$ 0.003 & 0.283 $\pm$ 0.007 & 0.198 $\pm$ 0.015 & 0.052 $\pm$ 0.001\\
\bottomrule
\end{tabular}
\caption{Comparison of methods for $K = 5$, $p = 100$, and $n_k = 240$ for $k \in [5]$, under the setting $|\mathcal{E}_{com}| = 100$, $|\mathcal{E}^{(k)}_{pri}| = 20$. $\Delta$ is the average Hamming distance, defined as in~\eqref{eq:delta}. Each value represents $\mathrm{mean}\pm 1$ standard error. }
\label{tab:comparison2}
\end{table}

\subsection{Mixing behavior under other neighborhoods}\label{subsec:mixing_other}

With the same dataset used in Section~\ref{subsec:conv}, we examine how the choice of proposal neighborhood affects the mixing of Algorithm~\ref{alg:full} by comparing the R2R neighborhood with two other neighborhoods. One neighborhood, the adjacent transposition neighborhood $\mathcal{N}_{\mathrm{ADJ}}$, is defined in the main text.
The other neighborhood is the random transposition neighborhood, defined as $\mathcal{N}_{\mathrm{RTS}}(\sigma) = \{\sigma' \in \mathbb{S}^p \mid \sigma' = \mathrm{RTS}(\sigma, i, j),\; i < j,\; i,j \in [p]\},$ where $\mathrm{RTS}(\sigma, i, j)$ produces a new permutation by interchanging the $i$-th and $j$-th elements of $\sigma$, leaving all other positions unchanged,
\begin{align*}
    \mathrm{RTS}(\sigma, i, j):
(\sigma(1), \dots, \sigma(i), \dots, \sigma(j), \dots, \sigma(p))
\mapsto
(\sigma(1), \dots, \sigma(j), \dots, \sigma(i), \dots, \sigma(p)),
\end{align*}
for $i < j$ with $i,j \in [p]$. 
To account for differences in per-iteration computational cost across neighborhood types, we follow the Markov chain sample size calculation suggested by~\cite{Chang2022-id}. Accordingly, we set the number of iterations to 213,333 for the adjacent transposition neighborhood and to 32,000 for the other two neighborhoods. We present the results in Figure~\ref{fig:neighbor}. Notably, the adjacent transposition neighborhood exhibits poor convergence, with 16 chains becoming trapped in local modes. In contrast, random transposition performs similarly to the random-to-random neighborhood, although it is not supported by our theoretical analysis.

\begin{figure}[t!]
    \centering
    \captionsetup{font={footnotesize,stretch=1}}
    \begin{subfigure}[b]{0.30\linewidth}
        \centering
        \includegraphics[width=\linewidth]{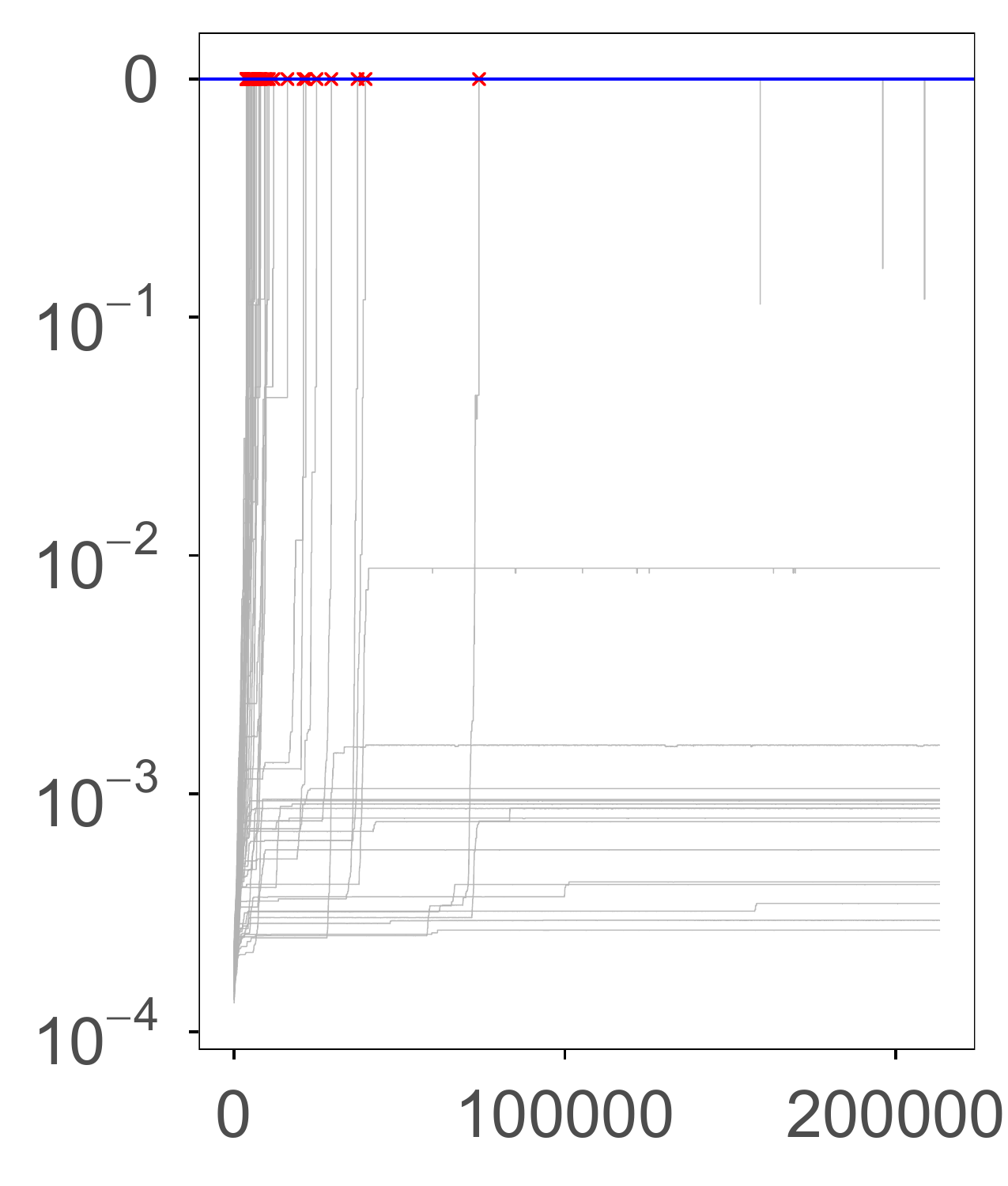}
        \caption{}
        \label{fig:neighbor_a}
    \end{subfigure}
    \hfill
    \begin{subfigure}[b]{0.30\linewidth}
        \centering
        \includegraphics[width=\linewidth]{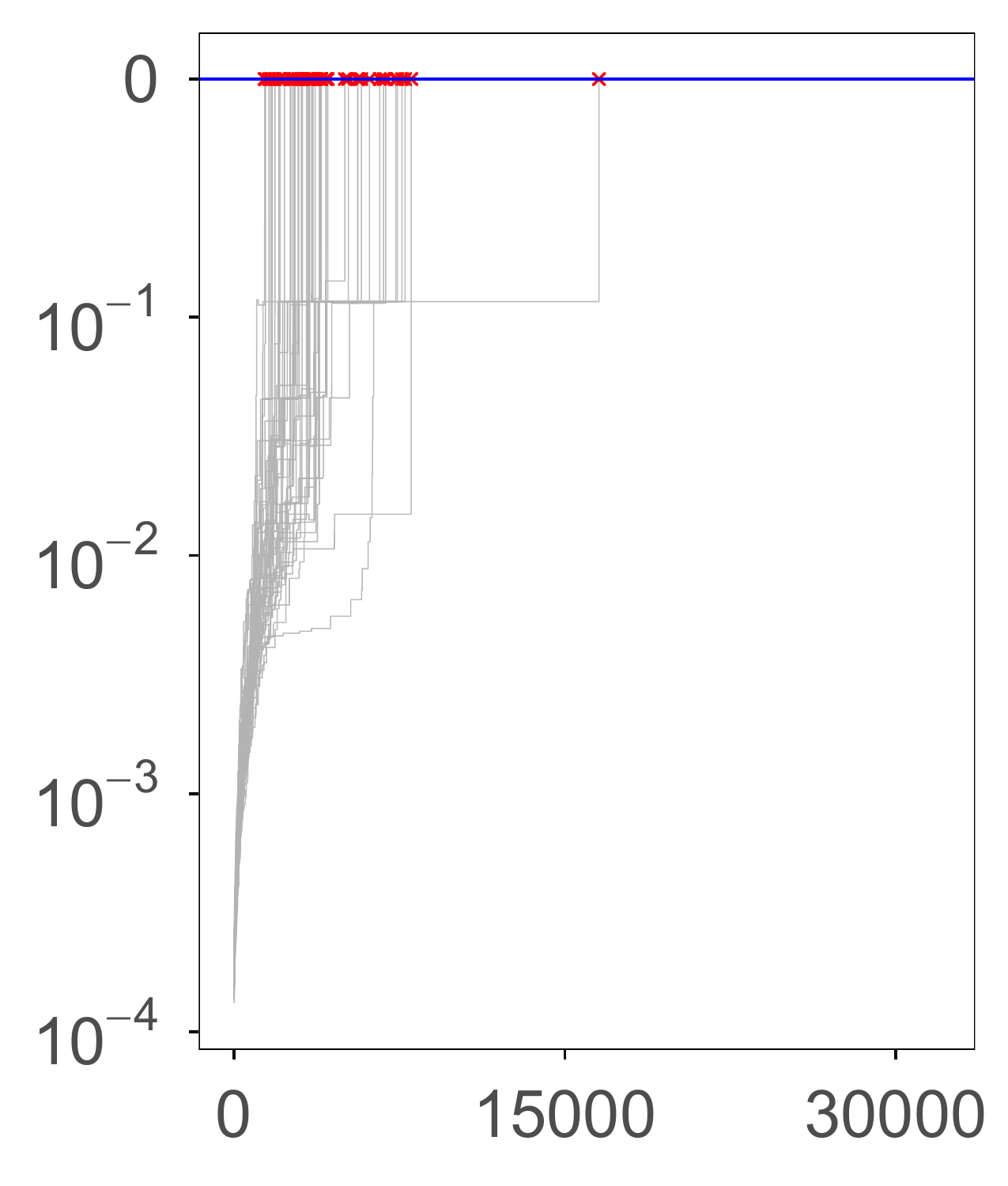}
        \caption{}
        \label{fig:neighbor_b}
    \end{subfigure}
    \hfill
    \begin{subfigure}[b]{0.30\linewidth}
        \centering
        \includegraphics[width=\linewidth]{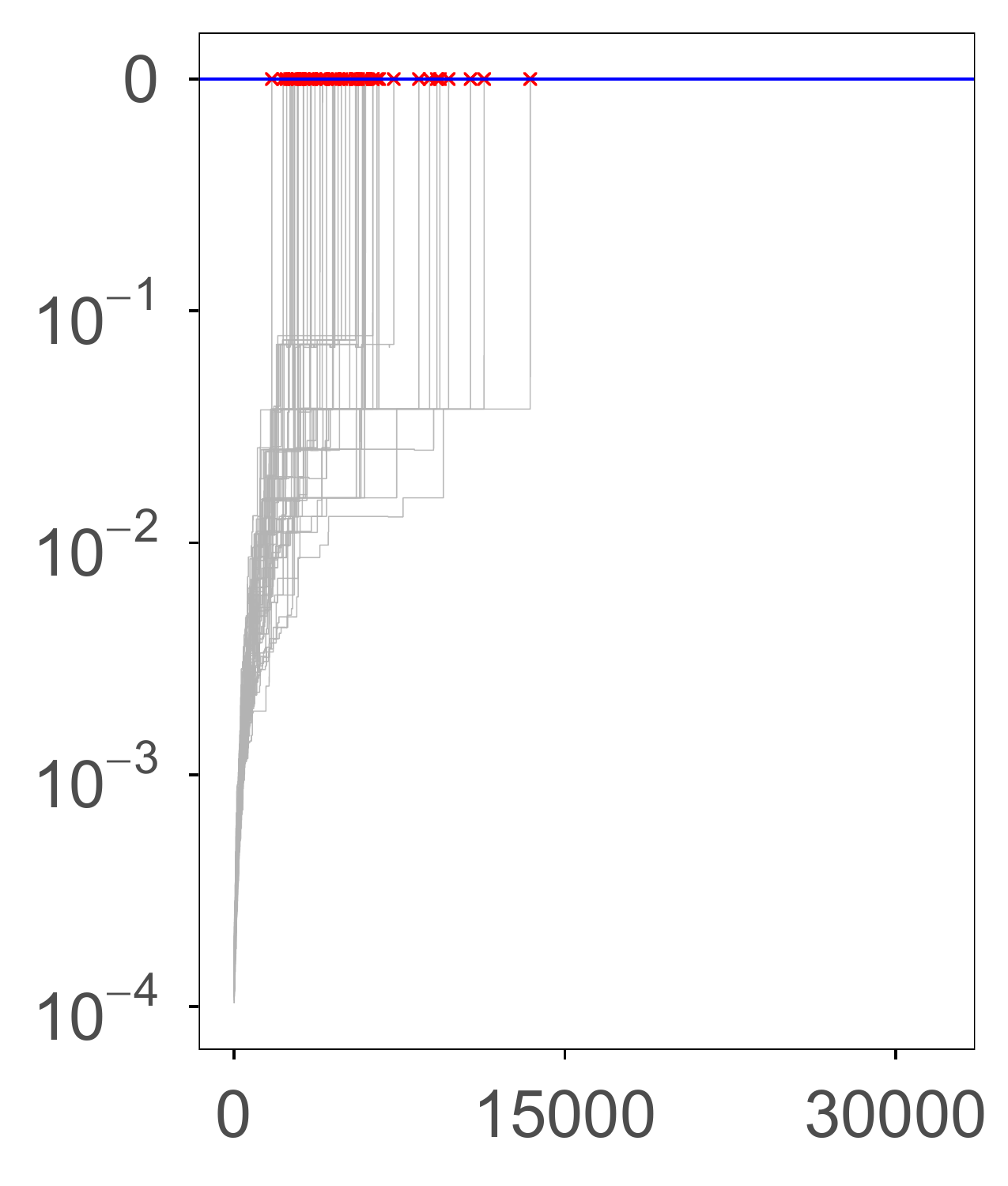}
        \caption{}
        \label{fig:neighbor_c}
    \end{subfigure}

    \caption{Scaled log posterior probability versus the number of iterations of 50 MCMC runs with random initialization for \(p = 40\) and \(K = 20\). The blue line represents the scaled log-posterior probability of the true ordering. Each red cross marks the first time the chain attains the value. Panels (a), (b), and (c) correspond to Algorithm~\ref{alg:full} with the adjacent transposition,  random transposition, and random-to-random, respectively.}
    \label{fig:neighbor}
\end{figure}

\section{Description of the real data}\label{sec:real_supp}

We analyze gene expression measurements from individuals diagnosed with major depressive disorder (MDD) and healthy controls; the sample sizes for each dataset are summarized in Table~\ref{tab:samplesize}.

\begin{table}[h!]
\centering
\footnotesize
\captionsetup{font={footnotesize,stretch=1}}
\renewcommand{\arraystretch}{0.7}
\begin{tabular}{cc|cc|cc|cc}
\toprule
\multicolumn{4}{c}{\textbf{Control}} & 
\multicolumn{4}{c}{\textbf{Case}} \\
\cmidrule(lr){1-4} \cmidrule(lr){5-8}
\textbf{ID} & \textbf{Sample size} & 
\textbf{ID} & \textbf{Sample size} & 
\textbf{ID} & \textbf{Sample size} & 
\textbf{ID} & \textbf{Sample size} \\
\midrule
1 & 1795 & 9  & 2630 & 1 & 3267 & 10 & 1688 \\
2 & 1363 & 10 & 3491 & 2 & 2423 & 11 & 3007 \\
3 & 2487 & 11 & 1953 & 3 & 1661 & 12 & 2196 \\
4 & 1234 & 12 & 2144 & 4 & 2275 & 13 & 2825 \\
5 & 2511 & 13 & 1061 & 5 & 2561 & 14 & 2690 \\
6 & 2433 & 14 & 1387 & 6 & 2511 & 15 & 2523 \\
7 & 3523 & 15 & 2461 & 7 & 3757 & 16 & 2077 \\
8 & 1239 & 16 & 3288 & 8 & 1758 & 17 & 3228 \\
 &      &    &      & 9 & 3227 &    &      \\
\bottomrule
\end{tabular}
\caption{Sample sizes for 16 healthy control datasets and 17 MDD case datasets.}
\label{tab:samplesize}
\end{table}

\end{document}